\documentclass[pra,aps,twocolumn,notitlepage,superscriptaddress,nofootinbib]{revtex4-2}

\usepackage{amsmath,amssymb,amsthm,bm}
\usepackage{graphicx}
\usepackage{tikz}
\usetikzlibrary{arrows.meta,positioning,fit,calc,backgrounds,matrix,patterns}
\usepackage{algpseudocode}
\usepackage{enumitem}
\usepackage{xcolor}
\usepackage[hidelinks]{hyperref}

\newcommand{\F}{\mathbb F}
\newcommand{\C}{\mathbb C}

\newcommand{\Sym}{\operatorname{Sym}}
\newcommand{\rank}{\operatorname{rank}}
\newcommand{\Tr}{\operatorname{Tr}}
\newcommand{\im}{\operatorname{Im}}
\newcommand{\ket}[1]{\lvert #1\rangle}
\newcommand{\bra}[1]{\langle #1\rvert}
\newcommand{\braket}[2]{\langle #1\mid #2\rangle}

\newtheorem{definition}{Definition}
\newtheorem{theorem}[definition]{Theorem}
\newtheorem{lemma}[definition]{Lemma}
\newtheorem{proposition}[definition]{Proposition}
\newtheorem{corollary}[definition]{Corollary}
\newtheorem{remark}[definition]{Remark}

\newcounter{protocol}
\renewcommand{\theprotocol}{\arabic{protocol}}
\newenvironment{protocol}[1]{%
  \refstepcounter{protocol}%
  \par\medskip
  \noindent\begin{minipage}{\columnwidth}%
  \hrule\smallskip
  \textbf{Protocol \theprotocol: #1}\par\smallskip
}{%
  \smallskip\hrule
  \end{minipage}%
  \par\medskip
}

\begin{document}

\title{Stabilizer-Public-Key Authentication:\\
Long Stabilizer Public Keys Resist Finite-Copy Forgery Attacks}

\author{Masahito Hayashi}
\thanks{These authors contributed equally to this work.}
\email{hmasahito@cuhk.edu.cn}
\affiliation{School of Data Science, The Chinese University of Hong Kong, Shenzhen, Longgang District, Shenzhen, 518172, China}
\affiliation{International Quantum Academy, Futian District, Shenzhen 518048, China}
\affiliation{Graduate School of Mathematics, Nagoya University, Nagoya, 464-8602, Japan}
\author{Jingtian Zhao}
\affiliation{School of Data Science, The Chinese University of Hong Kong, Shenzhen, Longgang District, Shenzhen, 518172, China}
\author{Baichu Yu}
\affiliation{Shenzhen Institute for Quantum Science and Engineering,
Southern University of Science and Technology, Nanshan District, Shenzhen, 518055, China}
\affiliation{Quantum Science Center of Guangdong-Hong Kong-Macao Greater Bay Area, Shenzhen 518045, China}

\begin{abstract}
We propose an information-theoretic authentication protocol based on a finite supply of long quadratic-stabilizer public-key states over an odd-prime field, where the effective key length after one signature exposure is the residual dimension $r=n-\ell$. A computationally unbounded adversary observes one valid classical signature and may jointly process $N$ public-key copies, while verification uses one additional independent copy. We show that, conditioned on the exposed signature, the security problem reduces to a partial-prediction game for a uniform stabilizer ensemble on these $r$ residual qudits, with a partial query along $d=\operatorname{rank}(Y-Y')$ directions, where $Y$ and $Y'$ denote the honestly signed and target-forged messages, respectively. Surprisingly, although the target requires only partial information, there is no first-order reduction in the required copy rate when $d/r\to\beta\in(0,1]$. The optimal average forgery probability for the specified target tends to zero for $N/r\to\alpha<1$ and to one for $\alpha>1$, revealing a sharp threshold at $\alpha=1$. Thus, long stabilizer public keys resist finite-copy forgery attacks under a single-signature-exposure model whenever the adversarial copy rate remains below one.
\end{abstract}

\maketitle

\section{Introduction}
\label{sec:introduction}

Digital signatures provide public verifiability of message origin and
integrity, and in classical cryptography their existence is closely tied to
computational one-wayness~\cite{Rompel1990}.  Large-scale quantum computers
would invalidate widely deployed number-theoretic assumptions through Shor's
algorithms for factoring and discrete logarithms~\cite{Shor1994}.  The primary
practical response is post-quantum cryptography, including the standardized
ML-DSA and SLH-DSA families~\cite{BernsteinLange2017,NISTFIPS204,NISTFIPS205}.
These developments also sharpen a distinct question: can a public key that is
itself a physical quantum state support an information-theoretic
authentication guarantee when only finitely many copies are available?

The quantum-public-key paradigm was articulated by Gottesman and
Chuang~\cite{GottesmanChuang2001}.  Here a ``public key'' is a provisioned
verification resource: a bounded number of authenticated physical copies may
be distributed, rather than an unrestricted classical string that anyone can
copy.  No-cloning alone does not provide unforgeability.  The number of copies
must be treated as a cryptographic resource, because sufficiently many copies
may reveal the state whereas too few may leave an irreducible prediction
error.

This setting is operationally different from both communication-oriented
quantum digital signatures and computational signatures involving quantum
objects.  Multipartite QDS protocols emphasize transferability,
non-repudiation, and network implementation
\cite{Puthoor2016,Roberts2017,Richter2021,Yin2023,Du2025}.  Computational
quantum-signature research instead restricts the adversary's running time and
can pursue standard multi-use notions
\cite{MorimaeYamakawa2022,MorimaeYamakawa2024,KhuranaTomer2024,
ColadangeloMutreja2024,MorimaePorembaYamakawa2024}.  The present work studies a
single verifier using a locally held quantum public-key state and allows a
computationally unbounded adversary, but explicitly limits the adversary's
physical supply of that state.  A recent single-qubit finite-copy protocol
provides the closest direct comparison~\cite{WangHayashi2026QDS}; detailed
comparisons with these neighboring models are deferred to
Section~\ref{sec:discussion}.

Our secret key is a classical pair $(X,b)$ specifying a quadratic stabilizer
state over an odd-prime field, and one copy of that state is the quantum public
key.  A message selects an $\ell$-dimensional commuting family of generalized
Pauli observables.  Its classical signature reports the corresponding
restrictions of the secret matrix and joint-eigenvalue assignment, and a
verifier tests them on an independently allocated public-key copy.  We analyze
a computationally unbounded adversary that receives one honestly generated
message--signature pair, a distinct target specified independently of the hidden key and exposed signature, and $N$ public-key copies and
may perform an arbitrary collective quantum operation.

The central difficulty is the disclosed valid signature.  It cannot be
treated as independent side information, and forgery for the specified target cannot be
reduced directly to reconstruction of the complete secret key.  We show
instead that, after conditioning on the exposed signature and applying a known
coordinate change, every public-key copy separates into a known factor and a
uniform quadratic stabilizer state on
\begin{equation}
  r=n-\ell
  \label{eq:introduction-residual-dimension-meaning}
\end{equation}
residual qudits.  Thus $r$ is the dimension of the residual stabilizer ensemble
that remains hidden after one signature has been disclosed.  The specified target
message $Y'$ introduces
\begin{equation}
  d=\rank(Y-Y'),
  \label{eq:introduction-target-rank}
\end{equation}
which is exactly the number of new residual directions queried by that target.

This distinction leads to the main surprise. When $d<r$, the target asks for only part of the residual information, so one might expect the required number of public-key copies to decrease in proportion to $d/r$. However, no such first-order reduction occurs: if $d/r\to\beta$ for any fixed $\beta\in(0,1]$, the optimal average forgery probability tends to zero when $N/r\to\alpha<1$ and to one when $\alpha>1$.

This establishes that long stabilizer public keys resist finite-copy forgery attacks. More precisely, as the hidden residual dimension $r=n-\ell$ tends to infinity, the forgery probability tends to zero when the adversary has $N=\alpha r+o(r)$ public-key copies with $\alpha<1$. The number $N$ of available copies may therefore tend to infinity, provided that it grows at a rate strictly below the residual dimension. The critical line $N/r\to1$ and the sublinear regime $d=o(r)$ remain unresolved.

The proof combines two ingredients.  This paper proves the
authentication-specific conditional reduction and shows that arbitrary
well-formed forged outputs are upper-bounded by the relevant residual
partial-prediction game.  The independent work \emph{No Copy-Rate Discount for
Partial Stabilizer Learning} supplies the learning-theoretic no-discount
result for a broader stabilizer-state setting
\cite{HayashiQuadraticStabilizerPrediction}.  Above the threshold, complete
identification of the residual label allows the adversary to reconstruct an
honest target signature.  The formal parameter correspondence and the exact
division of proof responsibilities are recorded in
Section~\ref{subsec:parameter-correspondence}.

Finite quantum-copy security does not permit unrestricted reuse of the
classical signing key.  Repeated signatures reveal the secret parameters on
the cumulative span of their message subspaces and can eventually determine
the complete key.  The result should therefore be read as an authentication theorem for one exposed
signature and a separately specified target, not as a conventional reusable
signature guarantee.  Section~\ref{sec:contributions-scope} states the main
contributions and exact security scope before the technical development.

The remainder of the paper is organized as follows.
Section~\ref{sec:preliminaries} fixes the finite-field, Weyl, and stabilizer
conventions.  Sections~\ref{sec:protocol} and
\ref{sec:correctness-resources} define the protocol, verification measurement,
and public-key copy accounting.  Section~\ref{sec:security-model} defines the
forgery for the specified target experiment.  Section~\ref{sec:conditional-reduction}
proves the conditional residual ensemble and the one-directional forgery
reduction.  Section~\ref{sec:finite-copy-security} identifies the residual game
with the independent learning problem and determines the corresponding forgery copy
rate.  Section~\ref{sec:key-reuse} treats multiple-signature leakage and key
recovery.  Section~\ref{sec:discussion} discusses neighboring models,
limitations, and open problems.

\section{Contributions, relation to stabilizer learning, and security scope}
\label{sec:contributions-scope}

\subsection{Main contributions}
\label{subsec:introduction-contributions}
The main contributions are as follows.
\begin{enumerate}[leftmargin=*]
  \item We give a self-contained stabilizer-public-key authentication protocol
        and a single-signature, security for a specified target experiment that permits
        arbitrary collective attacks on $N$ public-key copies while reserving
        an independent copy for verification.
  \item We prove an exact conditional residual-state reduction.  One exposed
        signature leaves a uniform quadratic stabilizer ensemble on
        $r=n-\ell$ residual qudits, and a target at rank distance
        $d=\rank(Y-Y')$ queries exactly $d$ new residual directions.
  \item We reduce arbitrary well-formed forged verification effects to the
        corresponding residual graded prediction-and-verification game.  The reduction is
        one-directional and uses compression of the full commuting projector
        together with commutativity-preserving randomized completion.
  \item Combining this reduction with the independent partial-stabilizer-
        learning theorem, we prove a copy-rate-one transition for every fixed
        positive query fraction $d/r\to\beta\in(0,1]$.  Below $N/r=1$ the
        optimal average forgery probability for the specified target vanishes; above it,
        complete residual identification permits reconstruction of an honest
        target signature.
  \item We characterize the separate classical leakage caused by repeated
        signatures, prove an exact span-forgery criterion, and show that full
        observed rank recovers the complete secret key.
\end{enumerate}

\subsection{Parameter correspondence and division of proof responsibilities}
\label{subsec:parameter-correspondence}
After one valid signature is disclosed, $r=n-\ell$ is the dimension of the
residual stabilizer ensemble that remains hidden in every public-key copy.  A
the specified target $Y'$ introduces $d=\rank(Y-Y')$ new residual directions, and $N$
is the number of residual copies available to the adversary.  These play the
roles of ambient dimension, query dimension, and learner-copy count in the
partial-stabilizer-learning problem:
\begin{equation}
  (n,m,k)_{\rm learn}=(r,d,N)_{\rm auth}.
  \label{eq:introduction-learning-crosswalk}
\end{equation}
Here, in the notation of the partial-stabilizer-learning problem, $n$ is the ambient number of qudits, $m$ is the dimension of the queried subspace, and $k$ is the number of copies available to the learner; the verifier's independent test copy is not included in $k$.
The distinction between $r$ and $d$ is essential.  When $d<r$, a forger need
not reconstruct the complete residual label.  Nevertheless, if
$d/r\to\beta$ for any fixed $\beta\in(0,1]$, restricting the target to those
$d$ directions does not lower the first-order copy threshold below $N/r=1$.
The finite-size optimum and its decay estimates still depend on $d$ and
$\beta$; only the normalized first-order threshold is independent of the
fixed positive query fraction.  The statement therefore neither asserts
identical finite-size security for all $d$ nor covers $d=o(r)$.

The no-copy-rate-discount theorem is proved in the independent work, which treats a
broader stabilizer-learning problem, including general stabilizer-state
families~\cite{HayashiQuadraticStabilizerPrediction}. 
The present paper proves
the authentication-specific steps: conditioning on the exposed signature,
identifying $d$ with the number of new residual directions, upper-bounding
arbitrary forged verification effects by the residual graded game, and
converting complete residual identification into an honest target signature.
The authentication optimum is only upper-bounded by the residual game; the
residual game itself is exactly the symmetric-matrix stabilizer
graded-verification problem analyzed in the independent learning work.

\subsection{Exact scope of the security guarantee}
\label{subsec:security-scope}
The threshold proved here is deliberately narrower than standard signature
unforgeability.  It assumes an odd prime $p$, a uniform secret-key prior,
ideal state preparation and measurement, authenticated distribution of intact
public-key copies, one exposed valid signature, and one distinct target specified independently of the hidden key and exposed signature.
The adversary is computationally unbounded and may process all $N$ supplied
copies collectively, but the final verifier uses one additional independent
copy.  The probability is averaged over key generation.  None of these
qualifications is removed by the copy-rate theorem.

\begin{proposition}[Exact scope of the proved guarantee]
  \label{prop:security-scope-summary}
  Fix an odd prime $p$.  Consider protocol instances with residual dimension
  $r=n-\ell$, one honestly signed message $Y$, one specified target $Y'\ne Y$, and
  target rank distance
  \begin{equation}
    d=\rank(Y-Y').
    \label{eq:scope-target-rank}
  \end{equation}
  Under ideal preparation and measurement, authenticated distribution, an
  independent verifier copy, and the uniform key prior, the following
  statements hold for the optimized average forgery probability for the specified target.
  \begin{enumerate}[label=(\roman*),leftmargin=*]
    \item If $d/r\to\beta\in(0,1]$ and $N/r\to\alpha<1$, then
          $P_{\mathrm{forge}}^{(N),*}(Y\to Y')\to0$.
    \item For every sequence with specified targets, $N-r\to+\infty$ implies
          $P_{\mathrm{forge}}^{(N),*}(Y\to Y')\to1$.  In particular, this holds
          whenever $N/r\to\alpha>1$.
    \item Consequently, for every $0<\beta\le1$ and $0<\epsilon<1$, the
          average forgery copy rate for the specified target defined in
          Eq.~\eqref{eq:authentication-forgery-rate-definition} satisfies
          \begin{equation}
            R_{\mathrm{auth,forge}}(p,\beta,\epsilon)=1.
            \label{eq:scope-copy-rate-one}
          \end{equation}
    \item No pointwise conclusion is made when $N/r\to1$, and the subcritical
          statement does not cover $d=o(r)$.
  \end{enumerate}
\end{proposition}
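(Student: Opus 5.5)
This proposition is a summary of results developed later in the paper, so the proof assembles those pieces rather than introducing a new argument. I would organize it along the four items and treat the later sections as the sources of each ingredient.

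\emph{Step 1 (conditional reduction).} Condition on the exposed honest signature for $Y$. After the known coordinate change, each public-key copy factors as a known state tensored with a uniformly distributed quadratic stabilizer state on $r=n-\ell$ residual qudits, where the secret residual label is a uniform symmetric matrix together with a uniform phase vector. I would then check that the forged target $Y'$ is tested only along the $d=\rank(Y-Y')$ new residual directions. The known factor can be simulated by the adversary, so the $N$ copies available to the forger are equivalent to $N$ residual copies plus free side information. The verifier's independent copy likewise reduces to a residual copy.

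\emph{Step 2 (upper bound, item (i)).} Take an arbitrary well-formed forged output and the associated verification effect. Compress the full commuting projector onto the residual factor. Use commutativity-preserving randomized completion to turn any forged classical string into a graded prediction-and-verification strategy on the $d$ queried directions. This yields
\[
P_{\mathrm{forge}}^{(N),*}(Y\to Y')\le P^{*}_{\rm game}(r,d,N).
\]
Under the crosswalk \eqref{eq:introduction-learning-crosswalk}, the no-copy-rate-discount theorem of \cite{HayashiQuadraticStabilizerPrediction} gives $P^*_{\rm game}\to0$ whenever $d/r\to\beta>0$ and $N/r\to\alpha<1$, which proves (i). I expect this to be the main obstacle. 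The bound must cover every possible classical forged signature, including ones inconsistent with any key, and the verification projector does not factor across the known and residual parts. The randomized-completion step must therefore be shown to never decrease acceptance while still producing a valid commuting family on the residual space.

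\emph{Step 3 (achievability, item (ii)).} Measure the residual copies using a learning procedure that identifies a uniform quadratic stabilizer state from $N$ copies, with error probability at most $p^{-(N-r)}$ up to constants. A natural choice is Bell-difference sampling or a direct pretty-good measurement on the symmetric-matrix family, and I would cite the learning work for the precise bound. Combining the identified residual label with the exposed signature and the known coordinate change reconstructs the restriction of $(X,b)$ to the target subspace. The adversary can then output the honest signature for $Y'$, which the ideal verifier accepts with probability one. Hence $P_{\rm forge}\ge1-O(p^{-(N-r)})\to1$ whenever $N-r\to\infty$, and in particular whenever $\alpha>1$.

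\emph{Step 4 (items (iii) and (iv)).} Plug (i) and (ii) into the rate definition \eqref{eq:authentication-forgery-rate-definition}. For any fixed $\epsilon\in(0,1)$, the infimal rate $\alpha$ achieving forgery probability at least $\epsilon$, or equivalently the supremal secure rate, is squeezed between every $\alpha<1$ and every $\alpha>1$, so the rate equals $1$. Item (iv) is a scope statement: the theorems invoked assume either $\alpha\ne1$ or $\beta>0$, so nothing further needs to be proved.
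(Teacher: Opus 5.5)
Your plan is the same assembly the paper carries out in Theorem~\ref{thm:forgery-partial-reduction}, Corollaries~\ref{cor:partial-target-asymptotic} and~\ref{cor:supercritical-direct-forgery}, and Theorem~\ref{thm:authentication-copy-rate-one}: the conditional residual reduction, the one-directional bound by the residual graded game, complete residual identification followed by reconstruction of the honest target signature for (ii), and the squeeze on the rate definition for (iii). The paper makes the imported converse explicit through the rank-truncated comparison $S_{\mathrm{ver}}\le L_D P_{\mathrm{id}}+p^{-(D+1)}$ with $D=\lfloor\sqrt r\rfloor$, which is what absorbs the partial credit $p^{-t}$ of near-miss announcements; one caution on your Step~3 is that Bell or Bell-difference sampling spends two or four copies per sample and would only give forgery once $N$ exceeds roughly $2r$ or $4r$, so item (ii) and the bound $R_{\mathrm{auth,forge}}\le1$ must rest on the companion's optimal (pretty-good-measurement) identification result $P_{\mathrm{id}}^{(N),*}(r,r;p)\to1$ for $N-r\to+\infty$, not on a sampling-based learner.
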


The proposition concerns a target selected independently of the hidden
key and exposed signature.  It does not imply adaptive-target security,
existential unforgeability, or worst-case-key security.  One exposed signature
is not a chosen-message experiment: Section~\ref{sec:key-reuse} shows that
several signatures leak the secret parameters on their cumulative input span
and can eventually reveal the complete key.  The theorem therefore does not
establish EUF-CMA security or unrestricted secret-key reuse.

The analysis also assumes authenticated public-key distribution, ideal
operations, and a fresh verifier copy.  It does not establish non-repudiation,
transferability, noise tolerance, safe reuse of a verification copy, or
composable security.  These exclusions remain in force on both sides of the
copy-rate threshold.

\section{Quadratic stabilizer preliminaries}
\label{sec:preliminaries}

This section fixes the algebraic and phase conventions used throughout the
paper.  We use the same phase-adjusted Weyl convention as the independent learning work
on multiple-copy prediction of quadratic stabilizer measurement outcomes
\cite{HayashiQuadraticStabilizerPrediction}.  This alignment is essential for
the later reduction: the secret linear functional in the authentication
protocol will be identified explicitly with the character label of the
quadratic stabilizer state.

\subsection{Finite-field and symplectic notation}
\label{subsec:finite-field-symplectic}

Let $p$ be an odd prime, let $\F_p$ be the field with $p$ elements, and set
\begin{equation}
  \omega:=e^{2\pi i/p}.
  \label{eq:omega}
\end{equation}
All vector and matrix operations in exponents are evaluated over $\F_p$.
In particular, $1/2$ denotes the multiplicative inverse of $2$ in $\F_p$.
The Hilbert space of $n$ qudits is
\begin{equation}
  \mathcal H_n:=(\C^p)^{\otimes n},
  \label{eq:Hilbert-space}
\end{equation}
with computational basis $\{\ket{x}:x\in\F_p^n\}$.

We write a phase-space vector as $(u,z)\in\F_p^n\oplus\F_p^n$ and use the
symplectic form
\begin{equation}
  [(u,z),(u',z')]
  :=z^T u'-u^T z'.
  \label{eq:symplectic-form}
\end{equation}
For a subspace $L\subseteq\F_p^{2n}$, its symplectic orthogonal complement is
\begin{equation}
  L^{\perp_s}
  :=\{g\in\F_p^{2n}:[g,h]=0\ \text{for every }h\in L\}.
  \label{eq:symplectic-orthogonal}
\end{equation}
An $n$-dimensional subspace $L$ satisfying $L=L^{\perp_s}$ is called
Lagrangian, meaning that it is maximal among subspaces whose vectors are
mutually symplectically orthogonal.  Equivalently for the present purpose, it
indexes a maximal commuting family of Weyl operators.

\subsection{Phase-adjusted Weyl operators}
\label{subsec:Weyl-operators}

For $u,z\in\F_p^n$, define the shift and phase operators by
\begin{equation}
  \mathsf X(u)\ket{x}:=\ket{x+u},
  \qquad
  \mathsf Z(z)\ket{x}:=\omega^{z^Tx}\ket{x}.
  \label{eq:shift-phase}
\end{equation}
They obey
\begin{equation}
  \mathsf X(u)\mathsf Z(z)
  =\omega^{-u^Tz}\mathsf Z(z)\mathsf X(u).
  \label{eq:XZ-commutation}
\end{equation}
We use the phase-adjusted Weyl operator
\begin{equation}
  W(u,z)
  :=\omega^{-\frac12u^Tz}\mathsf Z(z)\mathsf X(u).
  \label{eq:Weyl-definition}
\end{equation}
The multiplication and commutation rules are
\begin{align}
  W(u,z)W(u',z')
  &=\omega^{\frac12(z^Tu'-u^Tz')}
    W(u+u',z+z'),
  \label{eq:Weyl-product}\\
  W(u,z)W(u',z')
  &=\omega^{[(u,z),(u',z')]}
    W(u',z')W(u,z).
  \label{eq:Weyl-commutation}
\end{align}
Consequently, two Weyl operators commute exactly when their phase-space
labels are symplectically orthogonal.

\subsection{Symmetric-matrix graph charts}
\label{subsec:graph-charts}

Let $X\in\F_p^{n\times n}$.  Its graph subspace, called the graph chart determined by $X$, is
\begin{equation}
  L_X
  :=\left\{\binom{u}{Xu}:u\in\F_p^n\right\}
  \subseteq\F_p^{2n}.
  \label{eq:graph-subspace}
\end{equation}
For $u,u'\in\F_p^n$,
\begin{equation}
  \left[\binom{u}{Xu},\binom{u'}{Xu'}\right]
  =u^T(X^T-X)u'.
  \label{eq:graph-isotropy-calculation}
\end{equation}
Hence $L_X$ is Lagrangian if and only if $X=X^T$.  Throughout this paper,
$X$ is sampled from
\begin{equation}
  \Sym_n(\F_p)
  :=\{X\in\F_p^{n\times n}:X=X^T\}.
  \label{eq:symmetric-matrix-space}
\end{equation}
For symmetric $X$, Eq.~\eqref{eq:Weyl-product} has no residual phase on
$L_X$, so
\begin{equation}
  W(u,Xu)W(u',Xu')
  =W(u+u',X(u+u')).
  \label{eq:graph-Weyl-representation}
\end{equation}
Thus $u\mapsto W(u,Xu)$ is an ordinary unitary representation of the additive
group $\F_p^n$.

\subsection{Characters and quadratic stabilizer states}
\label{subsec:quadratic-stabilizer-states}

A linear functional on $L_X$ is uniquely specified by its pullback along the
graph isomorphism $u\mapsto(u,Xu)$.  Thus, for every $v\in L_X^*$ there is a
unique row vector $a^T\in(\F_p^n)^*$ such that
\begin{equation}
  v\!\left(\binom{u}{Xu}\right)=a^Tu
  \qquad (u\in\F_p^n).
  \label{eq:functional-pullback}
\end{equation}
To match the convention of the independent prediction paper, we parameterize
$a=-b$ and write
\begin{equation}
  v_{X,b}\!\left(\binom{u}{Xu}\right):=-b^Tu,
  \qquad b\in\F_p^n.
  \label{eq:v-b-correspondence}
\end{equation}
The corresponding quadratic stabilizer state is
\begin{equation}
  \ket{\psi_{X,b}}
  :=p^{-n/2}\sum_{x\in\F_p^n}
    \omega^{\frac12x^TXx+b^Tx}\ket{x}.
  \label{eq:quadratic-stabilizer-state}
\end{equation}
A direct calculation gives the eigenvalue-character relation
\begin{equation}
  W(u,Xu)\ket{\psi_{X,b}}
  =\omega^{-b^Tu}\ket{\psi_{X,b}}
  =\omega^{v_{X,b}((u,Xu))}\ket{\psi_{X,b}}.
  \label{eq:eigenvalue-character}
\end{equation}
Accordingly, the abstract-character notation $(X,v)$ and the quadratic-state
notation $(X,b)$ describe the same secret data through
Eq.~\eqref{eq:v-b-correspondence}.  In the protocol sections we retain $v$
when emphasizing the abstract character and use $b$ when applying the
prediction theorem.

For fixed $X$, the family
\begin{equation}
  \mathcal B_X
  :=\{\ket{\psi_{X,b}}:b\in\F_p^n\}
  \label{eq:fixed-chart-basis}
\end{equation}
is an orthonormal basis of $\mathcal H_n$, since
\begin{equation}
  \braket{\psi_{X,b'}}{\psi_{X,b}}
  =p^{-n}\sum_{x\in\F_p^n}\omega^{(b-b')^Tx}
  =\delta_{b,b'}.
  \label{eq:fixed-chart-orthogonality}
\end{equation}
We refer to $X$ as the measurement chart and to $b$ as the outcome or
character label.

\subsection{Public and secret keys}
\label{subsec:keys-preliminaries}

The secret key is a pair
\begin{equation}
  \mathsf{sk}:=(X,v),
  \qquad
  X\in\Sym_n(\F_p),\quad v\in L_X^*,
  \label{eq:secret-key-preliminary}
\end{equation}
chosen according to the key-generation distribution specified in
Sec.~\ref{sec:protocol}.  Equivalently, using
Eq.~\eqref{eq:v-b-correspondence}, the secret key can be represented by
$(X,b)\in\Sym_n(\F_p)\times\F_p^n$.  One copy of the quantum public key is
\begin{equation}
  \mathsf{pk}:=\ket{\psi_{X,b}}.
  \label{eq:public-key-preliminary}
\end{equation}
The expression ``public key'' therefore refers to a physical copy of a
quantum state, not to an unrestricted classical description.  Copy counts and
the separation between adversarial copies and a verifier's independent test
copy will be specified in Secs.~\ref{sec:correctness-resources} and
\ref{sec:security-model}.

\subsection{Notation reserved for the protocol and security analysis}
\label{subsec:reserved-notation}

Table~\ref{tab:protocol-notation} collects the basic protocol and security
parameters.  In particular, $N$ is reserved for the adversarial copy count,
while $\ell$ denotes the dimension of the message subspace.  The symbols $d$
and $r$ are reserved here for later use; no reduction or security claim is made
before Sec.~\ref{sec:conditional-reduction}.

\begin{table*}[t]
\caption{Protocol parameters and basic notation.  All vector and matrix
operations are over $\F_p$.}
\label{tab:protocol-notation}
\centering
\renewcommand{\arraystretch}{1.15}
\begin{tabular}{@{}p{0.12\textwidth}p{0.24\textwidth}p{0.57\textwidth}@{}}
\hline\hline
Symbol & Type or definition & Meaning \\
\hline
$p$ & odd prime & Local qudit dimension and cardinality of the base field. \\
$\F_p$ & finite field & Field used for all vectors, matrices, ranks, and phases. \\
$n$ & positive integer & Number of qudits in one quantum public-key copy. \\
$\ell$ & $1\leq\ell<n$ & Dimension of the message-selected commuting family. \\
$r$ & $n-\ell$ & Residual dimension after conditioning on one valid signature. \\
$X$ & $\Sym_n(\F_p)$ & Secret symmetric matrix specifying the graph chart. \\
$b$ & $\F_p^n$ & Secret character parameter, with eigenvalue $\omega^{-b^Tu}$. \\
$\ket{\psi_{X,b}}$ & state on $n$ qudits & One physical copy of the quantum public key. \\
$Y$ & $\F_p^{(n-\ell)\times\ell}$ & Honestly signed message. \\
$Y'$ & $\F_p^{(n-\ell)\times\ell}$, $Y'\neq Y$ & Target message specified independently of the hidden key and exposed signature. \\
$A_Y$ & $(I_\ell;Y)$ & Full-column-rank embedding of the message subspace. \\
$Z_Y$ & $XA_Y$ & Chart component disclosed by the honest signature. \\
$h_Y$ & $-A_Y^Tb$ & Character component disclosed by the honest signature. \\
$\sigma_Y$ & $(Z_Y,h_Y)$ & Classical signature on $Y$. \\
$N$ & nonnegative integer & Number of public-key copies available to the adversary; the verifier uses a separate copy. \\
$d$ & $\rank(Y-Y')$ & Number of genuinely new residual target directions. \\
$P_{\mathrm{forge}}^{(N),*}(Y\to Y')$ & optimized probability & Optimal average acceptance probability for the specified target $Y'$. \\
\hline\hline
\end{tabular}
\end{table*}

\section{Authentication protocol}
\label{sec:protocol}

We now define the authentication protocol in the notation fixed in
Sec.~\ref{sec:preliminaries}.  The signer, Alice, keeps a classical description
of a quadratic stabilizer state, whereas each verifier receives a physical copy
of that state as a quantum public key.  A classical signature specifies a
message-dependent commuting Weyl family together with the eigenvalue character
that the public-key state must exhibit on that family.

\subsection{Message space and message-dependent observables}
\label{subsec:message-space}

Fix an integer $\ell$ satisfying $1\leq\ell<n$.  The message space is
\begin{equation}
  \mathcal M_{n,\ell}
  :=\F_p^{(n-\ell)\times\ell}.
  \label{eq:message-space}
\end{equation}
For $Y\in\mathcal M_{n,\ell}$, define the full-column-rank matrix
\begin{equation}
  A_Y
  :=\begin{pmatrix}I_\ell\\Y\end{pmatrix}
  \in\F_p^{n\times\ell}.
  \label{eq:AY-definition}
\end{equation}
Let $a_{Y,j}$ denote its $j$th column.  Given a secret symmetric matrix
$X\in\Sym_n(\F_p)$, set
\begin{equation}
  Z_Y:=XA_Y\in\F_p^{n\times\ell}
  \label{eq:ZY-definition}
\end{equation}
and
\begin{equation}
  G_Y
  :=\begin{pmatrix}A_Y\\Z_Y\end{pmatrix}
  =\begin{pmatrix}I_\ell\\Y\\Z_Y\end{pmatrix}
  \in\F_p^{2n\times\ell}.
  \label{eq:GY-definition}
\end{equation}
The $j$th column
\begin{equation}
  g_{Y,j}:=\binom{a_{Y,j}}{Xa_{Y,j}}
  \label{eq:gYj-definition}
\end{equation}
lies in the graph Lagrangian $L_X$.  Hence the Weyl operators
$W(g_{Y,1}),\ldots,W(g_{Y,\ell})$ commute.

Let the linear functional $v\in L_X^*$ correspond to $b\in\F_p^n$ through
Eq.~\eqref{eq:v-b-correspondence}.  The signature's character vector is
\begin{equation}
  h_Y
  :=\begin{pmatrix}
       v(g_{Y,1})\\ \vdots\\ v(g_{Y,\ell})
     \end{pmatrix}
  =-A_Y^Tb
  \in\F_p^\ell.
  \label{eq:hY-definition}
\end{equation}
By Eq.~\eqref{eq:eigenvalue-character},
\begin{equation}
  W(g_{Y,j})\ket{\psi_{X,b}}
  =\omega^{(h_Y)_j}\ket{\psi_{X,b}}
  \qquad (1\leq j\leq\ell).
  \label{eq:honest-message-eigenvalue}
\end{equation}

\subsection{Algorithms}
\label{subsec:protocol-algorithms}

The protocol consists of key generation, signing, and verification.

\begin{protocol}{Stabilizer-public-key authentication}
  \label{prot:authentication}
  \begin{algorithmic}[1]
    \Statex \textbf{Key generation $\mathsf{KeyGen}(1^n)$}
    \State Sample $X$ uniformly from $\Sym_n(\F_p)$.
    \State Sample $b$ uniformly from $\F_p^n$ and define
      $v((u,Xu)):=-b^Tu$.
    \State Set $\mathsf{sk}:=(X,v)$, equivalently $(X,b)$.
    \State Prepare the required finite collection of copies of
      $\mathsf{pk}:=\ket{\psi_{X,b}}$ and distribute them through an
      authenticated public-key distribution mechanism.
    \Statex
    \Statex \textbf{Signing $\mathsf{Sign}_{\mathsf{sk}}(Y)$}
    \Require $Y\in\mathcal M_{n,\ell}$.
    \State Form $A_Y$ by Eq.~\eqref{eq:AY-definition}.
    \State Compute $Z_Y:=XA_Y$ and $h_Y:=-A_Y^Tb$.
    \State Return the classical signature
      $\sigma_Y:=(Z_Y,h_Y)$.
    \Statex
    \Statex \textbf{Verification $\mathsf{Verify}_{\mathsf{pk}}(Y,\sigma)$}
    \Require One verifier-held copy of $\mathsf{pk}$, a message
      $Y\in\mathcal M_{n,\ell}$, and a purported signature $\sigma=(Z,h)$.
    \State Perform the classical well-formedness checks in
      Eqs.~\eqref{eq:well-formed-dimensions} and
      \eqref{eq:well-formed-isotropy}; reject if either check fails.
    \State Construct $g_j=(a_{Y,j},z_j)$ from the $j$th columns of
      $A_Y$ and $Z$.
    \State Measure the joint eigenspace projector
      $\Pi_{Y,Z,h}$ defined in Eq.~\eqref{eq:joint-accept-projector}.
    \State Accept on the $\Pi_{Y,Z,h}$ outcome and reject otherwise.
  \end{algorithmic}
\end{protocol}

The signature is entirely classical.  The public key is not a classical
matrix description of $(X,b)$; it is a finite physical supply of the state
$\ket{\psi_{X,b}}$.  The security model will therefore count the copies
available to an adversary explicitly.

\subsection{Classical well-formedness checks}
\label{subsec:well-formedness}

For a purported signature $(Z,h)$, the verifier first checks
\begin{equation}
  Z\in\F_p^{n\times\ell},
  \qquad
  h\in\F_p^\ell.
  \label{eq:well-formed-dimensions}
\end{equation}
Define
\begin{equation}
  G_{Y,Z}:=\begin{pmatrix}A_Y\\Z\end{pmatrix}.
  \label{eq:purported-G}
\end{equation}
The verifier also checks the isotropy condition
\begin{equation}
  Z^TA_Y-A_Y^TZ=0.
  \label{eq:well-formed-isotropy}
\end{equation}
Indeed, the left-hand side is the symplectic Gram matrix of the columns of
$G_{Y,Z}$.  Condition~\eqref{eq:well-formed-isotropy} is therefore equivalent
to pairwise commutativity of the proposed verification observables.  For an
honest signature,
\begin{equation}
  Z_Y^TA_Y-A_Y^TZ_Y
  =A_Y^T(X^T-X)A_Y=0.
  \label{eq:honest-isotropy}
\end{equation}
No candidate secret matrix $X'$ or character $v'$ is required in the
verification algorithm.  This point will also be reflected in the general
adversarial model of Sec.~\ref{sec:security-model}.

\begin{figure*}[t]
  \centering
  \includegraphics[width=0.85\textwidth]{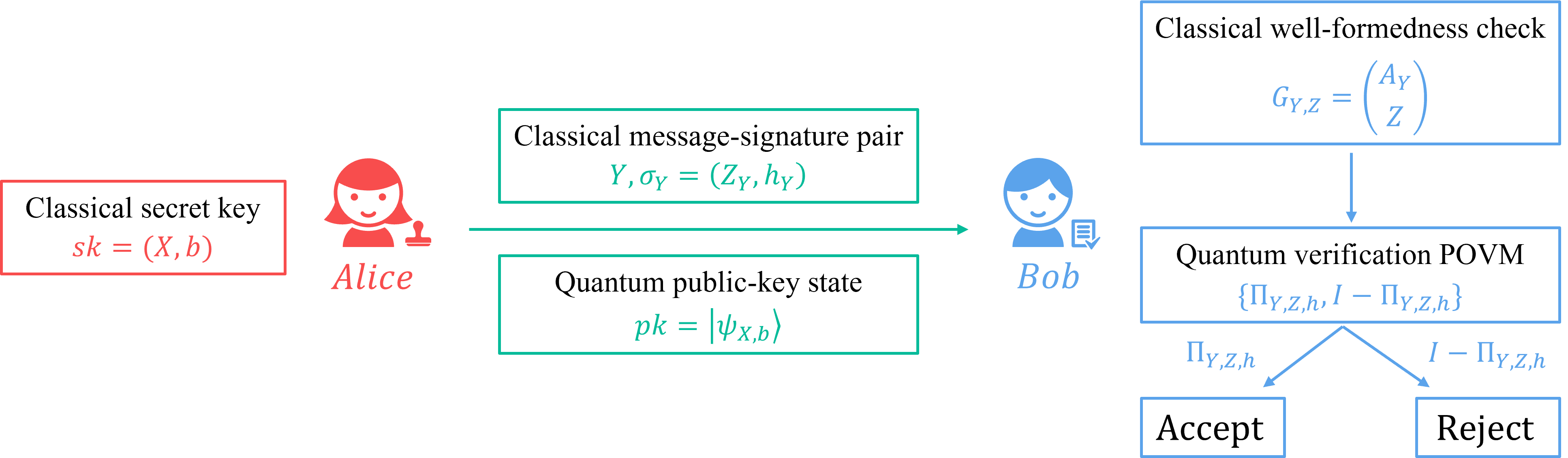}
  \caption{
  Overview of the stabilizer-public-key authentication protocol.
  Alice holds the classical secret description $(X,b)$ and sends a
  copy of the stabilizer state $\ket{\psi_{X,b}}$ as quantum
  public keys to the verifier.  For a message $Y$, Alice sends the classical pair
  $(Y,\sigma_Y)$ to the verifier.
  For a purported signature $(Z, h)$, the verifier checks its classical well-formedness first and uses an independently allocated public-key copy to make a joint verification POVM.
  }
  \label{fig:qap}
\end{figure*}

\section{Correctness and public-key resources}
\label{sec:correctness-resources}

We next give the verification measurement explicitly and separate correctness
from the later security analysis.  Correctness is an exact statement in the
ideal noiseless model.  It does not rely on the number of public-key copies
held by an adversary.

\subsection{The joint verification measurement}
\label{subsec:joint-verification-measurement}

For a phase-space vector $g\in\F_p^{2n}$ and $c\in\F_p$, define
\begin{equation}
  \Pi_{g,c}
  :=\frac1p\sum_{s\in\F_p}\omega^{-sc}W(g)^s.
  \label{eq:single-Weyl-projector}
\end{equation}
As shown in Appendix~\ref{app:verification-measurement},
$\Pi_{g,c}$ is the spectral projector of $W(g)$ associated with eigenvalue
$\omega^c$.  If $(Z,h)$ passes the well-formedness checks, the Weyl operators
indexed by the columns $g_1,\ldots,g_\ell$ of $G_{Y,Z}$ commute.  Their spectral
projectors therefore commute, and the accept projector is
\begin{equation}
  \Pi_{Y,Z,h}
  :=\prod_{j=1}^{\ell}\Pi_{g_j,h_j}.
  \label{eq:joint-accept-projector}
\end{equation}
The two-outcome verification POVM is
\begin{equation}
  \bigl\{\Pi_{Y,Z,h},\ I-\Pi_{Y,Z,h}\bigr\}.
  \label{eq:verification-POVM}
\end{equation}
Because the top block $A_Y$ has column rank $\ell$, the vectors
$g_1,\ldots,g_\ell$ are linearly independent.  Thus
$\Pi_{Y,Z,h}$ projects onto the simultaneous character sector specified by all
$\ell$ signed observables.

\subsection{Ideal correctness}
\label{subsec:ideal-correctness}

\begin{theorem}[Perfect correctness]
  \label{thm:perfect-correctness}
  Let $(\mathsf{sk},\mathsf{pk})$ be generated by
  $\mathsf{KeyGen}$, let $Y\in\mathcal M_{n,\ell}$, and let
  $\sigma_Y=\mathsf{Sign}_{\mathsf{sk}}(Y)$.  In the ideal noiseless model,
  verification using an independently allocated copy of $\mathsf{pk}$ accepts
  $(Y,\sigma_Y)$ with probability one:
  \begin{equation}
    \Pr\!\left[
      \mathsf{Verify}_{\mathsf{pk}}(Y,\sigma_Y)=\mathsf{accept}
    \right]=1.
    \label{eq:perfect-correctness}
  \end{equation}
\end{theorem}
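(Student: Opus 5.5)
The plan is to show that the verifier's fresh copy $\ket{\psi_{X,b}}$ lies in the range of the accept projector $\Pi_{Y,Z_Y,h_Y}$. Then $\langle\psi_{X,b}|\Pi_{Y,Z_Y,h_Y}|\psi_{X,b}\rangle=1$, and the Born rule gives acceptance with probability one.

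First I will dispose of the classical well-formedness checks. The dimensions in Eq.~\eqref{eq:well-formed-dimensions} hold by construction, since $Z_Y=XA_Y\in\F_p^{n\times\ell}$ and $h_Y=-A_Y^Tb\in\F_p^\ell$. The isotropy condition \eqref{eq:well-formed-isotropy} is exactly Eq.~\eqref{eq:honest-isotropy}, which follows from $X=X^T$. Hence the verifier proceeds to the measurement, and the vectors $g_j$ it constructs coincide with the honest columns $g_{Y,j}=(a_{Y,j},Xa_{Y,j})\in L_X$.

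Next I will show that each factor fixes the state. By Eq.~\eqref{eq:honest-message-eigenvalue}, $W(g_{Y,j})\ket{\psi_{X,b}}=\omega^{(h_Y)_j}\ket{\psi_{X,b}}$. It follows that $W(g_{Y,j})^s\ket{\psi_{X,b}}=\omega^{s(h_Y)_j}\ket{\psi_{X,b}}$. Substituting into the definition \eqref{eq:single-Weyl-projector} gives
\[
\Pi_{g_{Y,j},(h_Y)_j}\ket{\psi_{X,b}}=\frac1p\sum_{s\in\F_p}\omega^{-s(h_Y)_j}\omega^{s(h_Y)_j}\ket{\psi_{X,b}}=\ket{\psi_{X,b}}.
\]
This argument uses only the defining formula, so it does not depend on the appendix's spectral-projector identification. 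Applying the $\ell$ factors of the product \eqref{eq:joint-accept-projector} one after another, each of them fixes $\ket{\psi_{X,b}}$, and therefore $\Pi_{Y,Z_Y,h_Y}\ket{\psi_{X,b}}=\ket{\psi_{X,b}}$.

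The acceptance probability is then $\bra{\psi_{X,b}}\Pi_{Y,Z_Y,h_Y}\ket{\psi_{X,b}}=\braket{\psi_{X,b}}{\psi_{X,b}}=1$. The state is normalized by Eq.~\eqref{eq:fixed-chart-orthogonality} with $b=b'$.

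There is no serious obstacle here. The one point needing care is that the power $W(g)^s$ is taken with the phase-adjusted convention, so the eigenvalue relation must be the one for $W(g)$ itself. Applying $W(g)$ $s$ times gives the phase $\omega^{s(h_Y)_j}$ directly, so no cocycle issue arises. Independence of the verifier copy matters only in that the state it measures is exactly $\ket{\psi_{X,b}}$, uncorrelated with any other copies.
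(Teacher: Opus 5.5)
Your proposal is correct and follows the paper's own proof. Both check well-formedness via Eq.~\eqref{eq:honest-isotropy}, show from Eq.~\eqref{eq:honest-message-eigenvalue} that each factor $\Pi_{g_{Y,j},(h_Y)_j}$ fixes $\ket{\psi_{X,b}}$, conclude that the product projector fixes the state, and finish with Born's rule. Your explicit evaluation of the Fourier sum, and your observation that the product fixes the state simply because each factor does (no commutativity needed), are minor refinements of that argument.
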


\begin{proof}
  The honest signature has the correct dimensions and satisfies the isotropy
  check by Eq.~\eqref{eq:honest-isotropy}.  Its $j$th proposed observable is
  $W(g_{Y,j})$, and Eq.~\eqref{eq:honest-message-eigenvalue} gives
  \begin{equation}
    \Pi_{g_{Y,j},(h_Y)_j}\ket{\psi_{X,b}}
    =\ket{\psi_{X,b}}
    \qquad (1\leq j\leq\ell).
    \label{eq:correctness-single-projector}
  \end{equation}
  The projectors commute, so their product also fixes the public-key state:
  \begin{equation}
    \Pi_{Y,Z_Y,h_Y}\ket{\psi_{X,b}}
    =\ket{\psi_{X,b}}.
    \label{eq:correctness-joint-projector}
  \end{equation}
  Born's rule then gives acceptance probability one.
\end{proof}

\begin{remark}[Scope of correctness]
  \label{rem:correctness-scope}
  Theorem~\ref{thm:perfect-correctness} assumes exact state preparation,
  transmission, storage, and measurement.  A noise-tolerant acceptance test
  would require thresholds and a separate completeness--soundness analysis;
  no such claim is made in the present manuscript.
\end{remark}

\subsection{Finite-copy public-key resource model}
\label{subsec:public-key-resources}

A quantum public key is a physical state, so the signer prepares a finite
number of nominally identical copies.  We distinguish three roles:
\begin{enumerate}[label=(\roman*),leftmargin=*]
  \item \emph{Verifier copies}: an honest verification instance uses one copy
        allocated to that verifier.
  \item \emph{Adversarial copies}: the security experiment gives the adversary
        exactly $N$ copies on which an arbitrary collective quantum operation
        may be performed.
  \item \emph{Signer resources}: the classical secret key remains with Alice;
        preparation of additional public-key copies is outside a verification
        instance and must respect the copy budget of the deployment.
\end{enumerate}
The adversary's $N$ copies do not include the independent copy reserved for the
final verification test.  Hence the complete forgery experiment uses at least
$N+1$ public-key copies: $N$ at the adversary's input and one at the verifier.
Additional copies held by unrelated honest verifiers play no role unless the
security model explicitly grants the adversary access to them.  Consequently,
the parameter $N$ must be interpreted as an enforced upper bound on the total
number of copies available to the attacking coalition, not merely as the
number obtained through one designated interface.  If recipients can transfer,
resell, lend, or pool additional copies, those copies must be charged to the
same adversarial budget.  The theorem gives no security conclusion after the
actual coalition supply exceeds the stated $N$.

This provisioning model also separates two notions that coincide for an
ordinary classical public key but not here.  Verification is public in the
sense that no secret classical information is required by a holder of an
authenticated copy.  Verification is not unlimited: each verification instance
requires access to a physical copy, and the security parameter depends on how
many such copies exist and who can aggregate them.

Although an ideal projective verification of an honest eigenstate leaves that
state in the accepted eigenspace, the present security theorem will not claim
safe public-key reuse after verification.  The analysis instead models each
verification with an independently allocated copy.  This avoids conflating
ideal non-disturbance with composable reuse security and is consistent with the
separate key-reuse limitations studied in Sec.~\ref{sec:key-reuse}.

\subsection{Authenticated distribution assumption}
\label{subsec:distribution-assumption}

The protocol requires a mechanism that binds a distributed quantum state to
Alice's public identity.  The present work analyzes forgery after legitimate
copies of $\ket{\psi_{X,b}}$ have been distributed; it does not analyze
replacement of those copies during distribution.  Accordingly, public-key
distribution is assumed to be authenticated.  This assumption is logically
separate from secrecy: the state is public in the sense that a bounded number
of copies may be supplied beyond the signer, but an adversary may not replace a
verifier's designated test copy by an arbitrary state.

\section{Single-signature forgery model}
\label{sec:security-model}

We now define the adversarial task studied in this paper.  The purpose of the
model is to isolate the effect of a finite number of quantum-public-key copies
when one valid message--signature pair has already been exposed.  The model
allows an arbitrary collective quantum attack and does not restrict the
adversary to reconstructing a candidate secret key.

\subsection{Attack interface and information available to the adversary}
\label{subsec:attack-interface}

Fix protocol parameters $(p,n,\ell)$ and two distinct messages
\begin{equation}
  Y,Y'\in\mathcal M_{n,\ell},
  \qquad Y'\neq Y.
  \label{eq:distinct-messages-security}
\end{equation}
The message $Y$ is the honestly signed message and $Y'$ is the target message; both are externally fixed independently of the hidden key.
The key-generation variables $(X,b)$ are sampled uniformly as specified in
Protocol~\ref{prot:authentication}.  The adversary is given
\begin{equation}
  \left(Y,\sigma_Y,Y',
  \rho_{X,b}^{\otimes N}\right),
  \qquad
  \rho_{X,b}:=\ket{\psi_{X,b}}\!\bra{\psi_{X,b}},
  \label{eq:adversary-input}
\end{equation}
where
\begin{equation}
  \sigma_Y=(Z_Y,h_Y)
  \label{eq:valid-signature-security}
\end{equation}
is one honestly generated signature and $N$ is the number of public-key copies
allocated to the adversary.  A separate independent copy of $\rho_{X,b}$ is
reserved for the final verification test and is not contained in the $N$ input
copies.

The target $Y'$ is included explicitly in the attack interface.  The primary
quantity analyzed below is therefore a success probability for the specified target for
$Y\to Y'$.  This formulation keeps the later reduction transparent.  An
adaptive target-selection variant is defined separately in
Sec.~\ref{subsec:adaptive-targets}; no adaptive security claim will be inferred
from a bound for a separately specified target without an explicit argument.

\subsection{General collective adversaries}
\label{subsec:general-adversary}

For fixed classical input $(Y,\sigma_Y,Y')$, a general adversary is a quantum
instrument whose classical output is a purported signature
\begin{equation}
  \widehat\sigma=(\widehat Z,\widehat h),
  \qquad
  \widehat Z\in\F_p^{n\times\ell},\quad
  \widehat h\in\F_p^\ell.
  \label{eq:forged-signature-output}
\end{equation}
Equivalently, after absorbing all private ancillas, intermediate measurements,
feed-forward, and classical randomization into one measurement, the adversary
may be represented by a POVM
\begin{equation}
  \mathsf M^{Y,\sigma_Y,Y'}
  =\left\{M_{\widehat Z,\widehat h}^{Y,\sigma_Y,Y'}
  \right\}_{\widehat Z,\widehat h}
  \label{eq:adversary-POVM}
\end{equation}
on $\mathcal H_n^{\otimes N}$ satisfying
\begin{equation}
  M_{\widehat Z,\widehat h}^{Y,\sigma_Y,Y'}\geq0,
  \qquad
  \sum_{\widehat Z,\widehat h}
  M_{\widehat Z,\widehat h}^{Y,\sigma_Y,Y'}
  =I_{\mathcal H_n^{\otimes N}}.
  \label{eq:adversary-POVM-normalization}
\end{equation}
The dependence of the POVM on all exposed classical information is allowed.
No computational restriction is imposed.  In particular, the adversary may
perform entangled measurements across all $N$ copies and need not output, even
internally, a symmetric matrix $X'$ or a character $v'$ consistent with the
valid signature.

We include every matrix and vector of the stated dimensions as a possible
classical output.  If $(\widehat Z,\widehat h)$ fails the verifier's classical
well-formedness tests, its conditional acceptance probability is zero.

\subsection{Forgery experiment}
\label{subsec:forgery-experiment}

For an adversary $\mathcal A$, a signed message $Y$, and a specified target
message $Y'\neq Y$, define the experiment
\begin{equation}
  \mathsf{Forge}_{\mathcal A}^{(N)}(p,n,\ell;Y\to Y')
  \label{eq:forge-experiment-name}
\end{equation}
by the following steps.

\begin{enumerate}[label=\textbf{F\arabic*.},leftmargin=*]
  \item Sample $X$ uniformly from $\Sym_n(\F_p)$ and $b$ uniformly from
        $\F_p^n$.  Set
        $\mathsf{sk}=(X,b)$ and
        $\rho_{X,b}=\ket{\psi_{X,b}}\!\bra{\psi_{X,b}}$.
  \item Compute the valid signature
        $\sigma_Y=\mathsf{Sign}_{\mathsf{sk}}(Y)$.
  \item Give $(Y,\sigma_Y,Y')$ and $N$ copies
        $\rho_{X,b}^{\otimes N}$ to $\mathcal A$.
  \item Let $\widehat\sigma=(\widehat Z,\widehat h)$ be the adversary's
        classical output.
  \item Give an independent copy of $\rho_{X,b}$, together with
        $(Y',\widehat\sigma)$, to the verifier.
  \item The experiment outputs $1$ exactly when
        $\mathsf{Verify}_{\mathsf{pk}}(Y',\widehat\sigma)$ accepts.
\end{enumerate}

The average success probability for the specified target is
\begin{equation}
  P_{\mathrm{forge}}^{(N)}
  (\mathcal A;Y\to Y')
  :=\Pr\!\left[
    \mathsf{Forge}_{\mathcal A}^{(N)}
    (p,n,\ell;Y\to Y')=1
  \right],
  \label{eq:message-pair-forgery-probability}
\end{equation}
where the probability averages over uniform key generation, the adversary's
instrument, and the verifier's measurement outcome.  The optimized message-pair
probability is
\begin{equation}
  P_{\mathrm{forge}}^{(N),*}(Y\to Y')
  :=\sup_{\mathcal A}
  P_{\mathrm{forge}}^{(N)}(\mathcal A;Y\to Y').
  \label{eq:optimized-message-pair-forgery}
\end{equation}

For later use, the same quantity can be expanded directly in terms of the
adversarial POVM and the verifier's accept projector.  Let
\begin{equation}
  a_{X,b}^{Y'}(\widehat Z,\widehat h)
  :=\Tr\!\left[
     \Pi_{Y',\widehat Z,\widehat h}\rho_{X,b}
  \right]
  \label{eq:conditional-verification-payoff}
\end{equation}
when $(\widehat Z,\widehat h)$ is well formed, and set it to zero otherwise.
Then
\begin{align}
  &P_{\mathrm{forge}}^{(N)}(\mathcal A;Y\to Y')
  \notag\\
  &\ =\frac{1}{|\Sym_n(\F_p)|p^n}
  \sum_{X\in\Sym_n(\F_p)}\sum_{b\in\F_p^n}
  \notag\\[-2pt]
  &\qquad\times
  \sum_{\widehat Z,\widehat h}
  \Tr\!\left[
    M_{\widehat Z,\widehat h}^{Y,\sigma_Y,Y'}
    \rho_{X,b}^{\otimes N}
  \right]
  a_{X,b}^{Y'}(\widehat Z,\widehat h).
  \label{eq:forgery-probability-expanded}
\end{align}
The valid signature $\sigma_Y$ appearing in the POVM label is itself a
function of $(X,b)$.  Thus Eq.~\eqref{eq:forgery-probability-expanded} does not
treat the exposed classical signature as independent side information.  The
conditional reduction in Sec.~\ref{sec:conditional-reduction} must process
this dependence explicitly.

\subsection{Adaptive target selection}
\label{subsec:adaptive-targets}

A stronger variant permits the adversary to choose the target after observing
the valid pair.  Formally, the adversary first applies a classical randomized
map
\begin{equation}
  (Y,\sigma_Y)\longmapsto Y'\in
  \mathcal M_{n,\ell}\setminus\{Y\}
  \label{eq:adaptive-target-map}
\end{equation}
and then applies a target-dependent collective instrument to the $N$ quantum
copies.  We denote the corresponding optimized average success probability by
\begin{equation}
  P_{\mathrm{forge,ad}}^{(N),*}(Y).
  \label{eq:adaptive-forgery-probability}
\end{equation}
This definition is recorded to make the hierarchy of security notions clear.
The principal reduction will first be established for the quantity associated with the specified target
in Eq.~\eqref{eq:optimized-message-pair-forgery}.  An upper bound on
Eq.~\eqref{eq:adaptive-forgery-probability} will be stated only if the later
analysis controls the target selection uniformly; an average bound for a separately specified target alone is not declared to imply adaptive security.

\subsection{Average security versus pointwise security}
\label{subsec:average-versus-worst-case}

The probability in Eq.~\eqref{eq:message-pair-forgery-probability} is an average
over the uniformly generated secret key.  The corresponding pointwise
quantity for a fixed key $(X,b)$ is
\begin{equation}
  P_{\mathrm{forge}\mid X,b}^{(N),*}(Y\to Y').
  \label{eq:pointwise-forgery-probability}
\end{equation}
No bound on Eq.~\eqref{eq:pointwise-forgery-probability} follows merely from an
average bound without an additional symmetry or concentration argument.  The
independent partial-stabilizer-learning theorem used later is also formulated for a
uniform parameter prior.  Accordingly, the principal claim of the present
paper is initially an average-key security upper bound.  We do not call that
result worst-case unforgeability.

Likewise, a claim for one valid signature is not an EUF-CMA claim.  Multiple
valid signatures reveal additional classical linear information about the
secret chart and character; this separate limitation is analyzed in
Sec.~\ref{sec:key-reuse}.

\begin{figure*}[t]
  \centering
  \includegraphics[width=0.80\textwidth]{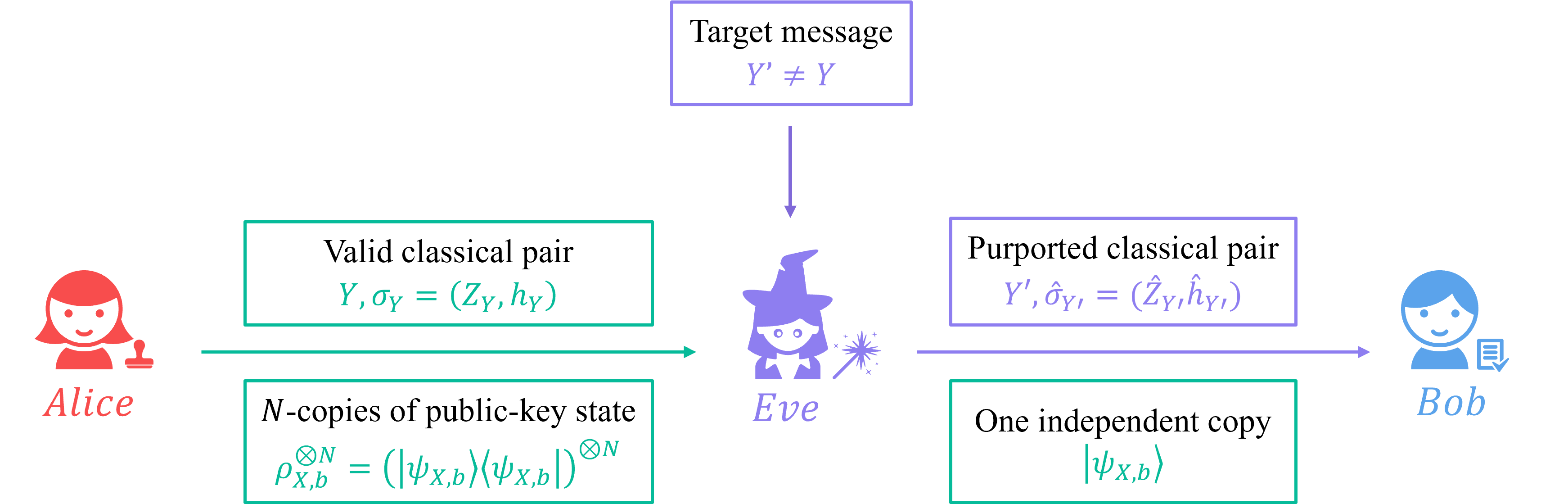}
  \caption{Single-signature forgery for the specified target interface.  Eve receives one valid
  classical pair $(Y,\sigma_Y)$, a distinct target message $Y'$, and $N$
  copies of the public-key state.  Eve may perform an arbitrary collective
  quantum operation and outputs a purported classical signature
  $\widehat\sigma=(\widehat Z,\widehat h)$.  A verifier then tests
  $(Y',\widehat\sigma)$ using one additional independent public-key copy  
  that is not included in Eve's $N$ copies.}
  \label{fig:adversarial-model}
\end{figure*}

\section{Reduction to partial quadratic-stabilizer prediction}
\label{sec:conditional-reduction}

We now upper-bound the single-signature forgery experiment by a residual
prediction problem.  The reduction has two logically distinct parts.  First, conditioning
on the exposed valid signature leaves an exactly standard quadratic
stabilizer ensemble on $r=n-\ell$ qudits.  Second, a target message at rank
distance $d$ asks only for the restriction of the residual chart and character
to a fixed $d$-dimensional subspace.  This second task is exactly the graded prediction-and-verification game for a specified query developed
in the independent learning paper; complete-label learning is its endpoint.

\subsection{Canonical coordinates for the signed message}
\label{subsec:canonical-signed-message}

Fix $Y\in\mathcal M_{n,\ell}$ and let
$A_Y=(I_\ell;Y)$.  Choose a matrix
$C_Y\in\F_p^{n\times r}$, where $r:=n-\ell$, such that
\begin{equation}
  R_Y:=[A_Y\ C_Y]\in\operatorname{GL}(n,\F_p).
  \label{eq:RY-definition}
\end{equation}
The complement $C_Y$ is fixed by an arbitrary deterministic convention.  
Different choices produce unitarily equivalent residual descriptions and do  
not affect any optimized game value.  
Set
\begin{equation}
  E_\ell:=\begin{pmatrix}I_\ell\\0\end{pmatrix}
  \in\F_p^{n\times\ell}.
  \label{eq:Eell-definition}
\end{equation}
Then $R_Y^{-1}A_Y=E_\ell$.  The phase-space transformation
\begin{equation}
  \mathsf S_Y
  :=\begin{pmatrix}R_Y^{-1}&0\\0&R_Y^T\end{pmatrix}
  \label{eq:SY-symplectic}
\end{equation}
is symplectic.  It sends the graph chart $L_X$ to the graph chart associated
with
\begin{equation}
  \widetilde X:=R_Y^TXR_Y
  \label{eq:Xtilde-definition}
\end{equation}
and sends the character vector to
\begin{equation}
  \widetilde b:=R_Y^Tb.
  \label{eq:btilde-definition}
\end{equation}

Using the decomposition
$\F_p^n=\F_p^\ell\oplus\F_p^r$, write
\begin{equation}
  \widetilde X=
  \begin{pmatrix}P&Q\\Q^T&S\end{pmatrix},
  \qquad
  \widetilde b=\binom{c}{e}.
  \label{eq:residual-block-decomposition}
\end{equation}
Here $P=P^T$, $S=S^T$,
$Q\in\F_p^{\ell\times r}$,
$c\in\F_p^\ell$, and $e\in\F_p^r$.

\begin{lemma}[Information exposed by one valid signature]
  \label{lem:signature-exposes-blocks}
  The classical data $(Y,Z_Y,h_Y)$ determine $(P,Q,c)$ through
  \begin{equation}
    R_Y^TZ_Y=\binom{P}{Q^T},
    \qquad
    h_Y=-c.
    \label{eq:signature-block-information}
  \end{equation}
  Under uniform key generation, conditioned on any admissible value of the
  exposed signature, $S$ and $e$ remain independent and uniform on
  $\Sym_r(\F_p)$ and $\F_p^r$, respectively.
\end{lemma}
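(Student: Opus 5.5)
The plan is to compute the signature in the transformed coordinates and then use the fact that the map $(X,b)\mapsto(\widetilde X,\widetilde b)$ is a bijection that preserves the uniform distribution.

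\textbf{Step 1: the exposed blocks.} Since $R_Y^{-1}A_Y=E_\ell$, we have $A_Y=R_YE_\ell$. Hence
\begin{equation}
  R_Y^TZ_Y=R_Y^TXA_Y=R_Y^TXR_YE_\ell=\widetilde XE_\ell,
\end{equation}
which is the first block column $\binom{P}{Q^T}$ of $\widetilde X$. Similarly,
\begin{equation}
  h_Y=-A_Y^Tb=-E_\ell^TR_Y^Tb=-E_\ell^T\widetilde b=-c.
\end{equation}
The matrix $R_Y$ is a fixed deterministic function of $Y$, so it is known. Therefore $(Y,Z_Y,h_Y)$ determines $(P,Q,c)$. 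Conversely, $(P,Q,c)$ determines $(Z_Y,h_Y)$, since $Z_Y=R_Y^{-T}\binom{P}{Q^T}$. The signature is thus in bijection with $(P,Q,c)$, and conditioning on the signature is the same as conditioning on $(P,Q,c)$.

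\textbf{Step 2: uniformity is preserved.} The map $X\mapsto R_Y^TXR_Y$ is a linear bijection of $\Sym_n(\F_p)$, with inverse $X\mapsto R_Y^{-T}XR_Y^{-1}$; it preserves symmetry by congruence. The map $b\mapsto R_Y^Tb$ is a bijection of $\F_p^n$. Because $X$ and $b$ are independent and uniform, $\widetilde X$ and $\widetilde b$ are also independent and uniform.

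\textbf{Step 3: the free coordinates.} A uniform symmetric matrix has independent uniform free entries, namely the upper-triangular entries. Under the block decomposition these split into the free entries of $P$, all entries of $Q$, and the free entries of $S$. So $P$, $Q$, $S$ are independent, with $S$ uniform on $\Sym_r(\F_p)$. Likewise, $c$ and $e$ are independent and uniform. Altogether, $(P,Q,c)$ is independent of $(S,e)$, and $(S,e)$ is uniform on $\Sym_r(\F_p)\times\F_p^r$.

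Conditioning on any admissible value of $(P,Q,c)$, which is any value at all, therefore leaves $S$ and $e$ independent and uniform. The only point requiring care is the bijection between the signature and $(P,Q,c)$ in Step 1, so that conditioning on one is literally conditioning on the other; this follows from the invertibility of $R_Y$.
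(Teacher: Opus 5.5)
Your proof is correct and takes the same route as the paper. You compute $R_Y^TZ_Y=\widetilde XE_\ell$ and $h_Y=-E_\ell^T\widetilde b$, then use that congruence by $R_Y$ and multiplication by $R_Y^T$ are bijections preserving the uniform independent prior, so $(S,e)$ stays uniform and independent of the exposed blocks. Your additional details (the inverse map $Z_Y=R_Y^{-T}\binom{P}{Q^T}$, which identifies conditioning on the signature with conditioning on $(P,Q,c)$, and the free-entry argument for independence of the blocks of $\widetilde X$) spell out steps the paper leaves implicit.
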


\begin{proof}
  Since $A_Y=R_YE_\ell$,
  \begin{equation}
    R_Y^TZ_Y
    =R_Y^TXA_Y
    =\widetilde XE_\ell
    =\binom{P}{Q^T}.
    \label{eq:block-signature-proof}
  \end{equation}
  Likewise,
  $h_Y=-A_Y^Tb=-E_\ell^T\widetilde b=-c$.
  Congruence by $R_Y$ is a bijection of $\Sym_n(\F_p)$, and multiplication by
  $R_Y^T$ is a bijection of $\F_p^n$.  Therefore uniform independent $(X,b)$
  induce uniform independent $(P,Q,S,c,e)$.  Conditioning on $(P,Q,c)$ fixes
  the exposed coordinates and leaves $(S,e)$ uniform and independent.
\end{proof}

\subsection{Exact residual-state ensemble}
\label{subsec:exact-residual-ensemble}

Define the computational-basis permutation unitary
\begin{equation}
  U_{R_Y}\ket{x}:=\ket{R_Y^{-1}x},
  \qquad x\in\F_p^n.
  \label{eq:RY-basis-permutation}
\end{equation}
This unitary implements the symplectic map $\mathsf S_Y$ in
Eq.~\eqref{eq:SY-symplectic} exactly in the phase convention used here: after
writing the output coordinate as $y=R_Y^{-1}x$, the quadratic and linear
coefficients become $R_Y^TXR_Y$ and $R_Y^Tb$, with no additional affine term or
coordinate-dependent Clifford phase.  Define the diagonal unitary
\begin{align}
  D_{P,Q,c}\ket{x_1,x_2}
  :=&\ \omega^{-\frac12x_1^TPx_1-x_1^TQx_2-c^Tx_1}
  \ket{x_1,x_2},
  \label{eq:known-phase-removal}
\end{align}
where $x_1\in\F_p^\ell$ and $x_2\in\F_p^r$.
Both $U_{R_Y}$ and $D_{P,Q,c}$ are determined by the exposed classical data.

\begin{theorem}[Conditional residual-state reduction]
  \label{thm:conditional-residual-state}
  For every key consistent with the exposed valid signature,
  \begin{equation}
    D_{P,Q,c}U_{R_Y}\ket{\psi_{X,b}}
    =\ket{+}^{\otimes\ell}\otimes\ket{\psi_{S,e}},
    \label{eq:exact-residual-factorization}
  \end{equation}
  up to a global phase, where
  \begin{equation}
    \ket{\psi_{S,e}}
    =p^{-r/2}\sum_{z\in\F_p^r}
      \omega^{\frac12z^TSz+e^Tz}\ket{z}.
    \label{eq:residual-quadratic-state}
  \end{equation}
  Conditioned on the valid signature, $(S,e)$ is uniform on
  $\Sym_r(\F_p)\times\F_p^r$.
\end{theorem}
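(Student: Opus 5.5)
The plan is a direct computation in the computational basis: apply the two known unitaries to the explicit amplitude formula~\eqref{eq:quadratic-stabilizer-state} and check that the exponent splits into a known part, which $D_{P,Q,c}$ cancels, and a residual quadratic form in the last $r$ coordinates. The distributional claim then comes straight from Lemma~\ref{lem:signature-exposes-blocks}. Throughout, $\ket{+}:=p^{-1/2}\sum_{j\in\F_p}\ket{j}$.

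\textbf{Step 1 (coordinate change).} Since $R_Y\in\operatorname{GL}(n,\F_p)$, the map $x\mapsto R_Y^{-1}x$ is a bijection of $\F_p^n$. Hence $U_{R_Y}$ permutes the computational basis and is unitary. Writing
\[
U_{R_Y}\ket{\psi_{X,b}}=p^{-n/2}\sum_x\omega^{\frac12x^TXx+b^Tx}\ket{R_Y^{-1}x}
\]
and substituting $x=R_Yy$, the exponent becomes $\frac12y^T(R_Y^TXR_Y)y+(R_Y^Tb)^Ty$. This gives exactly $\ket{\psi_{\widetilde X,\widetilde b}}$ with $\widetilde X,\widetilde b$ as in Eqs.~\eqref{eq:Xtilde-definition}--\eqref{eq:btilde-definition}. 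No extra phase appears, because the exponent is a function on $\F_p$ and $\tfrac12$ is the field inverse of $2$, which exists since $p$ is odd.

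\textbf{Step 2 (block expansion).} Write $y=(y_1,y_2)$ with $y_1\in\F_p^\ell$ and $y_2\in\F_p^r$, and use the block form~\eqref{eq:residual-block-decomposition}. Symmetry of $\widetilde X$ gives
\[
\tfrac12y^T\widetilde Xy=\tfrac12y_1^TPy_1+y_1^TQy_2+\tfrac12y_2^TSy_2 .
\]
The two off-diagonal contributions $y_1^TQy_2$ and $y_2^TQ^Ty_1$ coincide as scalars, and the factor $2\cdot\tfrac12=1$ again uses that $p$ is odd. The linear term is $\widetilde b^Ty=c^Ty_1+e^Ty_2$.

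\textbf{Step 3 (removing the known part).} The diagonal unitary $D_{P,Q,c}$ of Eq.~\eqref{eq:known-phase-removal} multiplies $\ket{y_1,y_2}$ by $\omega^{-\frac12y_1^TPy_1-y_1^TQy_2-c^Ty_1}$. This cancels every term involving $y_1$, leaving
\[
p^{-n/2}\sum_{y_1,y_2}\omega^{\frac12y_2^TSy_2+e^Ty_2}\ket{y_1}\ket{y_2}.
\]
This factors as $\bigl(p^{-\ell/2}\sum_{y_1}\ket{y_1}\bigr)\otimes\ket{\psi_{S,e}}=\ket{+}^{\otimes\ell}\otimes\ket{\psi_{S,e}}$, with $\ket{\psi_{S,e}}$ as in Eq.~\eqref{eq:residual-quadratic-state}. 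The global phase is in fact trivial.

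\textbf{Step 4 (conditioning).} By Lemma~\ref{lem:signature-exposes-blocks}, the data $(Y,Z_Y,h_Y)$ determine $(P,Q,c)$, so $U_{R_Y}$ and $D_{P,Q,c}$ are computable from the exposed signature. The same lemma gives that, conditioned on any admissible signature, $(S,e)$ is uniform and independent on $\Sym_r(\F_p)\times\F_p^r$. The block $S$ is symmetric because $\widetilde X$ is.

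I expect no real obstacle, since this is bookkeeping. The point needing care is the cross-term cancellation in Step~2, which relies on the field inverse of $2$ and the symmetry of $\widetilde X$, together with the check in Step~1 that the basis permutation introduces no extra affine or Clifford phase in this convention.
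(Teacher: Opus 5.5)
Your proposal is correct and follows essentially the same route as the paper: substitute $x=R_Yy$ to obtain the congruence-transformed quadratic form, expand in the $\ell\oplus r$ blocks, cancel the $(P,Q,c)$ terms with $D_{P,Q,c}$, and cite Lemma~\ref{lem:signature-exposes-blocks} for conditional uniformity. Your version also spells out details the paper leaves implicit: the cross-term bookkeeping, and the fact that the global phase is actually trivial.
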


\begin{proof}
  By Eq.~\eqref{eq:RY-basis-permutation}, applying $U_{R_Y}$ replaces the
basis coordinate $x$ by $R_Yy$.  Hence, in the block coordinate
$y=(x_1,x_2)$, the computational-basis phase is
  \begin{equation}
    \frac12x_1^TPx_1+x_1^TQx_2+\frac12x_2^TSx_2+c^Tx_1+e^Tx_2.
    \label{eq:transformed-state-phase}
  \end{equation}
  The unitary $D_{P,Q,c}$ cancels exactly the terms involving the known
  coordinates $(P,Q,c)$.  The remaining amplitude factorizes into a uniform
  superposition over $x_1$ and the residual quadratic state over $x_2$.
  Uniformity of $(S,e)$ follows from
  Lemma~\ref{lem:signature-exposes-blocks}.
\end{proof}

Because the transformation in Theorem~\ref{thm:conditional-residual-state} is
known to the adversary, it can be applied independently to all $N$ copies.
The known factor $\ket{+}^{\otimes\ell N}$ can be discarded or supplied for
free.  Hence the attacker's quantum input is operationally equivalent to
$\ket{\psi_{S,e}}^{\otimes N}$ drawn from the standard uniform residual
ensemble.

\subsection{Target rank and partial residual data}
\label{subsec:target-rank-partial-data}

For the target message $Y'$, define
\begin{equation}
  B:=R_Y^{-1}A_{Y'}=\binom{B_1}{B_2},
  \label{eq:target-B-definition}
\end{equation}
where $B_1\in\F_p^{\ell\times\ell}$ and
$B_2\in\F_p^{r\times\ell}$.

\begin{lemma}[Target rank identity]
  \label{lem:target-rank-identity}
  Let $d:=\rank(Y-Y')$.  Then
  \begin{equation}
    \rank B_2=d.
    \label{eq:B2-rank-d}
  \end{equation}
  Equivalently,
  \begin{equation}
    \dim\bigl(\operatorname{col}(A_Y)
      \cap\operatorname{col}(A_{Y'})\bigr)=\ell-d.
    \label{eq:message-subspace-intersection}
  \end{equation}
\end{lemma}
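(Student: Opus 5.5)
The plan is to compute $B$ explicitly in terms of $A_{Y'}-A_Y$, so that $\rank B_2$ becomes a rank statement about $Y'-Y$, and then to translate this into the subspace-intersection form.

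\textbf{Step 1: isolate $B_2$.} Since $R_Y^{-1}A_Y=E_\ell$,
\begin{equation}
  B=R_Y^{-1}A_{Y'}=E_\ell+R_Y^{-1}\bigl(A_{Y'}-A_Y\bigr),
  \qquad
  A_{Y'}-A_Y=\binom{0}{Y'-Y}.
\end{equation}
The first term contributes nothing to the lower block, so
\begin{equation}
  B_2=\bigl[R_Y^{-1}\bigr]_{\rm low}\binom{0}{Y'-Y},
\end{equation}
where $[\,\cdot\,]_{\rm low}$ denotes the last $r$ rows.

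\textbf{Step 2: identify the relevant block of $R_Y^{-1}$.} Write $C_Y=(C_1;C_2)$ with $C_1\in\F_p^{\ell\times r}$ and $C_2\in\F_p^{r\times r}$. Column operations on $R_Y$ subtract $A_Y C_1$ from $C_Y$, which leaves the block-triangular matrix
\begin{equation}
  \begin{pmatrix}I_\ell&0\\ Y&C_2-YC_1\end{pmatrix}.
\end{equation}
Invertibility of $R_Y$ then forces $K:=C_2-YC_1$ to be invertible. Solving $R_Y\binom{\alpha}{\beta}=\binom{0}{w}$ directly gives $A_Y\alpha+C_Y\beta=\binom{0}{w}$. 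The top rows give $\alpha=-C_1\beta$, and the bottom rows then give $K\beta=w$. Hence
\begin{equation}
  B_2=K^{-1}(Y'-Y),
\end{equation}
and $\rank B_2=\rank(Y'-Y)=d$, since $K^{-1}$ is invertible. This proves the first claim. The only point needing care is verifying that $K$ is invertible for an arbitrary choice of complement $C_Y$; the column-reduction argument handles it without assuming a canonical $C_Y$.

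\textbf{Step 3: the intersection form.} Because $R_Y^{-1}$ is a bijection, it suffices to compute $\dim\bigl(\operatorname{col}(E_\ell)\cap\operatorname{col}(B)\bigr)$. The columns of $B$ are independent, because $A_{Y'}$ has full column rank. A vector $Bt$ lies in $\operatorname{col}(E_\ell)$ exactly when $B_2t=0$. The intersection is therefore the injective image under $B$ of $\ker B_2$, whose dimension is $\ell-\rank B_2=\ell-d$.

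As a cross-check that needs no coordinates, $A_Y s=A_{Y'}t$ forces $s=t$ from the identity top block and then $(Y-Y')t=0$. So the intersection equals $A_Y\ker(Y-Y')$, which again has dimension $\ell-d$. Either route establishes Eq.~\eqref{eq:message-subspace-intersection} and its equivalence with Eq.~\eqref{eq:B2-rank-d}.
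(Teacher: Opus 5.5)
Your proof is correct, but it runs in the opposite direction from the paper's.

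The paper first computes the intersection directly. A common vector must have the form $(u;Yu)=(u;Y'u)$, so the intersection is $A_Y\ker(Y-Y')$, of dimension $\ell-d$; this is exactly your closing cross-check. The paper then reads $B_2$ as the matrix of the quotient map $\operatorname{col}(A_{Y'})\to\F_p^n/\operatorname{col}(A_Y)$. Thus $\ker B_2$ corresponds to the intersection, and rank--nullity gives $\rank B_2=d$.

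You instead prove the rank claim first, through the explicit factorization $B_2=(C_2-YC_1)^{-1}(Y'-Y)$, and deduce the intersection dimension from it.

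Your route costs a small calculation: the block triangularization of $R_Y$ showing that $C_2-YC_1$ is invertible. In return it gives more. You get $\ker B_2=\ker(Y-Y')$ exactly, not merely equal dimensions, and the arbitrary complement $C_Y$ enters only through an invertible left factor. Consequently the normalizing matrix $V$ with $B_2V=[J\ 0]$, used later in the retained-block lemma, can be chosen from $\ker(Y-Y')$ alone, independently of $C_Y$.

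The paper's route is coordinate-free and needs no formula for $R_Y^{-1}$. It also makes the reading of $d$ as the number of new directions modulo the signed subspace immediate. It does tacitly use injectivity of $A_{Y'}$ to pass from $\ker B_2$ to the intersection, a point your Step 3 states explicitly.
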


\begin{proof}
  A vector belongs to both graph subspaces exactly when there exists
  $u\in\F_p^\ell$ such that
  \begin{equation}
    \binom{u}{Yu}=\binom{u}{Y'u},
    \label{eq:graph-intersection-vector}
  \end{equation}
  which is equivalent to $(Y-Y')u=0$.  Thus the intersection dimension is
  $\dim\ker(Y-Y')=\ell-d$.  Since $R_Y^{-1}$ sends $\operatorname{col}(A_Y)$ to $\F_p^\ell\oplus\{0\}$, the lower block $B_2$ is the matrix of the quotient projection of $\operatorname{col}(A_{Y'})$ onto $\F_p^n/\operatorname{col}(A_Y)$.  Hence $\ker B_2$ corresponds exactly to $\operatorname{col}(A_Y)\cap\operatorname{col}(A_{Y'})$.  Rank--nullity therefore gives $\rank B_2=\ell-(\ell-d)=d$.
\end{proof}

The honest transformed target chart action is
\begin{equation}
  \widetilde X B
  =\begin{pmatrix}P&Q\\Q^T&S\end{pmatrix}
   \binom{B_1}{B_2}.
  \label{eq:honest-target-action-transformed}
\end{equation}
Subtracting the known chart
\begin{equation}
  K:=\begin{pmatrix}P&Q\\Q^T&0\end{pmatrix}
  \label{eq:known-chart-K}
\end{equation}
leaves
\begin{equation}
  (\widetilde X-K)B=\binom{0}{SB_2}.
  \label{eq:unknown-target-chart-data}
\end{equation}
Similarly, after subtracting the known character contribution from $c$, the
unknown target outcomes are determined by $B_2^Te$.  Thus the genuinely new
target data are
\begin{equation}
  (SB_2,\ -B_2^Te),
  \label{eq:partial-target-data}
\end{equation}
up to known affine transformations and invertible changes of target basis.
\begin{figure*}[t]
\centering
\begin{tikzpicture}[
  font=\footnotesize,
  box/.style={draw,rounded corners,align=center,inner sep=4pt,minimum height=9mm},
  known/.style={box,fill=blue!8},
  hidden/.style={box,fill=orange!12},
  process/.style={box,fill=gray!8},
  comp/.style={box,fill=green!8},
  arr/.style={-{Latex[length=1.8mm]},thick}
]
\node[anchor=west,font=\bfseries] at (0,3.35) {(a) Exact conditional factorization};
\node[known,minimum width=40mm] (input) at (2.1,2.35)
  {exposed $(Y,Z_Y,h_Y)$\\and $\ket{\psi_{X,b}}$};
\node[process,minimum width=38mm] (unitary) at (8.0,2.35)
  {known unitary\\$D_{P,Q,c}U_{R_Y}$};
\node[hidden,minimum width=46mm] (factor) at (14.2,2.35)
  {$\ket{+}^{\otimes\ell}\otimes\ket{\psi_{S,e}}$\\$r=n-\ell$};
\draw[arr] (input)--(unitary);
\draw[arr] (unitary)--node[above,font=\scriptsize]{exact}(factor);
\node[known,minimum width=57mm] (blocks) at (5.0,0.95)
  {$\widetilde X=\left(\begin{smallmatrix}P&Q\\Q^T&S\end{smallmatrix}\right)$,
   $\widetilde b=\binom{c}{e}$\\exposed: $(P,Q,c)$};
\node[hidden,minimum width=58mm] (resid) at (12.1,0.95)
  {hidden residual: $(S,e)$\\uniform on $\Sym_r(\F_p)\times\F_p^r$};
\node[anchor=west,font=\bfseries] at (0,-0.15)
  {(b) Target query and one-directional security reduction};
\node[process,minimum width=16mm] (target) at (0.9,-1.15) {$Y'$};
\node[process,minimum width=18mm] (b2) at (3.0,-1.15) {$B_2$};
\node[known,minimum width=39mm] (rank) at (6.4,-1.15)
  {$\rank B_2=d=\rank(Y-Y')$};
\draw[arr] (target)--(b2); \draw[arr] (b2)--(rank);
\node[process,minimum width=28mm] (forge) at (1.45,-2.55)
  {well-formed\\forged output};
\node[process,minimum width=28mm] (retain) at (5.05,-2.55)
  {retain $d$ new\\directions};
\node[process,minimum width=28mm] (compress) at (8.65,-2.55)
  {compress\\known factor};
\node[process,minimum width=33mm] (complete) at (12.7,-2.55)
  {randomized isotropic\\completion};
\node[comp,minimum width=28mm] (legal) at (16.5,-2.55)
  {legal partial\\announcement};
\draw[arr] (forge)--(retain); \draw[arr] (retain)--(compress);
\draw[arr] (compress)--(complete); \draw[arr] (complete)--(legal);
\node[known,minimum width=55mm] (learn) at (3.5,-4.15)
  {\textbf{Finite-game correspondence}\\
   $P_{\mathrm{part}}^{(N),*}(r,d;p)
    =S_{\mathrm{ver}}^{(N),*}(r,d;p)$};
\node[comp,minimum width=82mm] (ineq) at (12.5,-4.15)
  {\textbf{Authentication-specific consequence}\\
   $P_{\mathrm{forge}}^{(N),*}(Y\!\to\!Y')
    \le P_{\mathrm{part}}^{(N),*}(r,d;p)$};
\draw[arr] (legal.south) -- ([xshift=40mm]ineq.north);
\end{tikzpicture}
\caption{Conditional residual-state reduction and its security interface.
Conditioning on one valid signature exposes $(P,Q,c)$, while the residual
parameters $(S,e)$ remain independent and uniform. A known unitary separates
each public-key copy into a known $\ket{+}$ factor and an $r$-qudit quadratic
stabilizer state. The specified target introduces exactly $d=\rank(Y-Y')$ new
residual directions. Compression and randomized isotropic completion yield
the one-directional bound from authentication forgery to the residual
partial-prediction game. The equality with the graded-verification game is the
separate finite-game correspondence to the companion symmetric-matrix graded-verification task.}
\label{fig:conditional-residual-reduction}
\end{figure*}

\subsection{Residual graded prediction-and-verification game}
\label{subsec:partial-prediction-game}
After conditioning on the exposed signature, the adversary has $N$ copies of
an unknown residual state $\rho_{S,e}$, while the specified target determines a
known $d$-dimensional query subspace.  The learner need not identify the full
residual label $(S,e)$.  It announces only a commuting family and character
for the queried directions, and an independent residual copy assigns the
operational score through a joint-character test.  We now define this residual
graded prediction-and-verification game in one place.

\begin{definition}[Residual graded prediction-and-verification game]
  \label{def:residual-graded-game}
  Fix an odd prime $p$, integers $0\le d\le r$ and $N\ge0$, and a full-column-
  rank matrix
  \begin{equation}
    J\in\F_p^{r\times d}.
    \label{eq:J-query-subspace}
  \end{equation}
  The query $J$ is fixed before the hidden residual label is sampled and is
  known to both the learner and the verifier.  Define the admissible chart
  space and announcement alphabet by
  \begin{align}
    \mathcal C_J
    &:=\{C\in\F_p^{r\times d}:J^TC=C^TJ\},
    \label{eq:game-admissible-chart-space}\\
    \Theta_J^{\rm part}
    &:=\mathcal C_J\times\F_p^d.
    \label{eq:game-announcement-alphabet}
  \end{align}
  The game proceeds as follows.
  \begin{enumerate}[label=\textbf{G\arabic*.},leftmargin=*]
    \item Nature samples $(S,e)$ uniformly from
          $\Sym_r(\F_p)\times\F_p^r$ and prepares
          \begin{equation}
            \rho_{S,e}:=\ket{\psi_{S,e}}\!\bra{\psi_{S,e}}.
            \label{eq:game-residual-state}
          \end{equation}
    \item The learner receives $N$ copies $\rho_{S,e}^{\otimes N}$ and performs
          an arbitrary collective POVM
          \begin{equation}
            \mathsf M=\{M_{C,\gamma}:(C,\gamma)\in
            \Theta_J^{\rm part}\},
            \label{eq:game-learner-POVM}
          \end{equation}
          where $M_{C,\gamma}\ge0$ and
          $\sum_{(C,\gamma)}M_{C,\gamma}=I$.  Its outcome is the partial
          announcement $(C,\gamma)$.
    \item The verifier receives one additional independent copy of
          $\rho_{S,e}$, not included in the learner's $N$ copies.  It applies
          the joint-character accept projector
          \begin{equation}
            \Pi_{J,C,\gamma}^{(d)}
            :=\frac1{p^d}\sum_{x\in\F_p^d}
              \omega^{-\gamma^Tx}W(Jx,Cx).
            \label{eq:game-accept-projector}
          \end{equation}
          The isotropy condition defining $\mathcal C_J$ makes
          $x\mapsto W(Jx,Cx)$ an ordinary representation of the additive
          group $\F_p^d$, so Eq.~\eqref{eq:game-accept-projector} is its joint
          spectral projector for the announced character
          $x\mapsto\omega^{\gamma^Tx}$.
  \end{enumerate}
  The conditional verification score is the Born acceptance probability
  \begin{equation}
    s_J\bigl((S,e),(C,\gamma)\bigr)
    :=\Tr\!\left[\Pi_{J,C,\gamma}^{(d)}\rho_{S,e}\right].
    \label{eq:game-conditional-score}
  \end{equation}
  For a learner POVM $\mathsf M$, its average score is
\begin{widetext}
\begin{equation}
S_J^{(N)}(\mathsf M;r,d,p)
:=\frac{1}{\lvert\Sym_r(\F_p)\rvert p^r}
\sum_{S\in\Sym_r(\F_p)}
\sum_{e\in\F_p^r}
\sum_{(C,\gamma)\in\Theta_J^{\rm part}}
\Tr\!\left[M_{C,\gamma}\rho_{S,e}^{\otimes N}\right]
s_J\bigl((S,e),(C,\gamma)\bigr).
\label{eq:game-average-graded-score}
\end{equation}
\end{widetext}
  The optimized score for a fixed query is
  \begin{equation}
    S_{\rm ver}^{(N),*}(J;p)
    :=\max_{\mathsf M}S_J^{(N)}(\mathsf M;r,d,p).
    \label{eq:game-specified-query-optimum}
  \end{equation}
  Private classical randomization after the measurement may be absorbed into
  a refinement of the POVM outcome.
\end{definition}

The correct partial label is
\begin{equation}
  q_J(S,e):=(SJ,-J^Te).
  \label{eq:game-true-partial-label}
\end{equation}
The first component gives the true residual chart action on the queried
subspace, and the second gives its eigenvalue character.

\begin{proposition}[Explicit rank formula for the residual graded score]
  \label{prop:partial-rank-payoff}
  Put
  \begin{equation}
    \Delta:=C-SJ,
    \qquad
    t:=\rank\Delta.
    \label{eq:partial-error-rank}
  \end{equation}
  Conditioned on $(S,e)$ and an admissible announcement $(C,\gamma)$, the
  verifier's score is
  \begin{equation}
    s_J\bigl((S,e),(C,\gamma)\bigr)
    =
    \begin{cases}
      p^{-t},&\gamma+J^Te\in\im(\Delta^T),\\
      0,&\gamma+J^Te\notin\im(\Delta^T).
    \end{cases}
    \label{eq:partial-rank-payoff}
  \end{equation}
\end{proposition}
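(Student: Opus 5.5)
Write the projector as an average of Weyl expectations and evaluate each one directly. By Eq.~\eqref{eq:game-accept-projector},
\begin{equation}
  s_J\bigl((S,e),(C,\gamma)\bigr)
  =\frac1{p^d}\sum_{x\in\F_p^d}\omega^{-\gamma^Tx}
  \bra{\psi_{S,e}}W(Jx,Cx)\ket{\psi_{S,e}}.
\end{equation}
The plan has two steps. First compute the Weyl expectation $\bra{\psi_{S,e}}W(u,z)\ket{\psi_{S,e}}$ for a general phase-space vector. Then specialize to $u=Jx$, $z=Cx=SJx+\Delta x$ and perform the character sum over $x$.

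\textbf{Step 1.} Write $W(u,z)=W\bigl((u,Su)+(0,z-Su)\bigr)$. Using Eq.~\eqref{eq:Weyl-product}, factor this as a phase times $W(0,z-Su)\,W(u,Su)$. The phase is $\omega^{\frac12[(0,z-Su),(u,Su)]}$, up to sign conventions I will fix carefully.

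By Eq.~\eqref{eq:eigenvalue-character}, $W(u,Su)$ acts on $\ket{\psi_{S,e}}$ as the scalar $\omega^{-e^Tu}$. The operator $W(0,w)=\mathsf Z(w)$ is diagonal, and its expectation in a uniform-modulus state is $p^{-r}\sum_z\omega^{w^Tz}=\delta_{w,0}$.

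Hence the expectation vanishes unless $z=Su$, in which case it equals $\omega^{-e^Tu}$. When $w=0$ the cross phase is trivial, so the sign convention only matters in the vanishing case. The Weyl expectation therefore reduces to $\delta_{z,Su}\,\omega^{-e^Tu}$.

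\textbf{Step 2.} With $u=Jx$ and $z=Cx$, the condition $z=Su$ becomes $\Delta x=0$, i.e.\ $x\in\ker\Delta$. Therefore
\begin{equation}
  s_J=\frac1{p^d}\sum_{x\in\ker\Delta}\omega^{-(\gamma+J^Te)^Tx}.
\end{equation}
This is a character sum over a subspace of dimension $d-t$. It equals $p^{d-t}$ when the functional $\gamma+J^Te$ annihilates $\ker\Delta$, and $0$ otherwise.

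The annihilator of $\ker\Delta$ is exactly $\im(\Delta^T)$. So the score is $p^{-t}$ when $\gamma+J^Te\in\im(\Delta^T)$ and $0$ otherwise, which is Eq.~\eqref{eq:partial-rank-payoff}.

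\textbf{Expected obstacle.} The only delicate point is the phase bookkeeping in Step 1. One must confirm that the admissibility condition $J^TC=C^TJ$ together with $S=S^T$ leaves no stray quadratic phase $\omega^{\frac12(\cdot)}$ on $\ker\Delta$.

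On $\ker\Delta$ we have $Cx=SJx$, so $(Jx,Cx)$ lies in the graph Lagrangian $L_S$. Eq.~\eqref{eq:graph-Weyl-representation} then applies directly, with no residual phase, and the eigenvalue is exactly $\omega^{-e^TJx}$.

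Off $\ker\Delta$ the term vanishes regardless of phase. So any convention ambiguity there is harmless.
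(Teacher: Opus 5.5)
Your proof is correct and follows the paper's route. You Fourier-expand the accept projector, note that each Weyl expectation vanishes unless $(Jx,Cx)\in L_S$ (equivalently $x\in\ker\Delta$) and otherwise equals $\omega^{-e^TJx}$, and then evaluate the character sum using $(\ker\Delta)^\perp=\im(\Delta^T)$. Your explicit formula $\bra{\psi_{S,e}}W(u,z)\ket{\psi_{S,e}}=\delta_{z,Su}\,\omega^{-e^Tu}$ and the count $p^{-d}\cdot p^{d-t}=p^{-t}$ fill in the details that the paper compresses into the remark that the outcome distribution is uniform on an affine space of size $p^t$.
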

\begin{proof}
  For $x\in\F_p^d$, the proposed observable is indexed by $(Jx,Cx)$.  Its
  expectation in $\ket{\psi_{S,e}}$ vanishes unless $(Jx,Cx)$ belongs to the
  true graph $L_S$, which is equivalent to $\Delta x=0$.  On
  $\ker\Delta$, the true character is $-e^TJx$.  Fourier inversion therefore
  shows that the announced joint outcome has nonzero probability exactly when
  \begin{equation}
    x^T(\gamma+J^Te)=0
    \quad\text{for every }x\in\ker\Delta.
    \label{eq:partial-compatibility-kernel}
  \end{equation}
  Since $(\ker\Delta)^\perp=\im(\Delta^T)$, this is the compatibility condition
  in Eq.~\eqref{eq:partial-rank-payoff}.  When it holds, the outcome
  distribution is uniform on an affine space of size $p^t$, giving score
  $p^{-t}$.
\end{proof}

\begin{corollary}[Score-one exactness]
  \label{cor:residual-score-one-exactness}
  For every true residual label and every admissible announcement,
  \begin{equation}
    s_J\bigl((S,e),(C,\gamma)\bigr)=1
    \quad\Longleftrightarrow\quad
    (C,\gamma)=(SJ,-J^Te).
    \label{eq:residual-score-one-exactness}
  \end{equation}
  Every false announcement has score either zero or at most $p^{-1}$.
\end{corollary}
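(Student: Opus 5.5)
Corollary~\ref{cor:residual-score-one-exactness} follows directly from the rank formula of Proposition~\ref{prop:partial-rank-payoff}. I will prove the two directions of the equivalence separately and then read off the bound for false announcements.

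For the forward direction, suppose $s_J((S,e),(C,\gamma))=1$. Every value in Eq.~\eqref{eq:partial-rank-payoff} is either $0$ or $p^{-t}$ with $t=\rank\Delta\ge0$. Since $p\ge3$, the value $1$ can occur only when $t=0$ and the compatibility condition holds.

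The condition $t=0$ means $\Delta=C-SJ=0$, so $C=SJ$. In that case $\im(\Delta^T)=\{0\}$, and compatibility reads $\gamma+J^Te=0$, that is, $\gamma=-J^Te$.

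For the converse, take $(C,\gamma)=(SJ,-J^Te)$. This announcement is admissible because $J^TSJ$ is symmetric, so $SJ\in\mathcal C_J$. Then $\Delta=0$ and $\gamma+J^Te=0\in\im(0)$, and the proposition gives score $p^0=1$.

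One could also check this converse independently. For every $x$, the pair $(Jx,SJx)$ lies in $L_S$, and Eq.~\eqref{eq:eigenvalue-character} gives eigenvalue $\omega^{-e^TJx}$. Hence $\ket{\psi_{S,e}}$ is fixed by each term of the projector average, and $\Pi^{(d)}_{J,C,\gamma}\ket{\psi_{S,e}}=\ket{\psi_{S,e}}$.

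For the final claim, let $(C,\gamma)\ne(SJ,-J^Te)$. There are two cases.
\begin{enumerate}[label=(\roman*),leftmargin=*]
  \item If $\Delta\ne0$, then $t\ge1$, so the score is $0$ or $p^{-t}\le p^{-1}$.
  \item If $\Delta=0$, then $\gamma\ne-J^Te$. Compatibility requires $\gamma+J^Te\in\im(0)=\{0\}$, which fails, so the score is $0$.
\end{enumerate}

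There is no real obstacle in this argument. The only point needing care is that compatibility is always satisfied when $t=0$ and $\gamma=-J^Te$, and the case split confirms this.
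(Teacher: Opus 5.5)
Your proof is correct and follows the paper's argument essentially verbatim. Both read the equivalence and the $p^{-1}$ bound directly off the rank formula of Proposition~\ref{prop:partial-rank-payoff}, and both split false announcements into a false chart ($t\ge1$) or a false character only (score zero); your extra checks that $SJ\in\mathcal C_J$ and that $\ket{\psi_{S,e}}$ is fixed by the projector are harmless additions that the paper leaves implicit.
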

\begin{proof}
  Score one in Proposition~\ref{prop:partial-rank-payoff} requires $t=0$, hence
  $C=SJ$, and then compatibility requires $\gamma+J^Te=0$.  Conversely, the
  true partial label has $t=0$ and satisfies compatibility.  If the chart is
  false, then $t\ge1$; if only the character is false, the score is zero.
\end{proof}

\begin{lemma}[Equivalence of fixed full-rank queries]
  \label{lem:specified-query-game-equivalence}
  For fixed $(r,d,N,p)$, all full-column-rank query matrices
  $J\in\F_p^{r\times d}$ have the same optimized graded score.  In particular,
  each is equivalent to the canonical query
  \begin{equation}
    J_d:=\begin{pmatrix}I_d\\0\end{pmatrix}\in\F_p^{r\times d}.
    \label{eq:game-canonical-query}
  \end{equation}
\end{lemma}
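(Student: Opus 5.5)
The plan is to exhibit, for any full-column-rank $J\in\F_p^{r\times d}$, a Clifford unitary together with a relabeling of announcements that carries the game for $J$ exactly onto the game for $J_d$ and preserves the prior, the learner's input, and the verifier's score. Since $J$ has full column rank, extend it to an invertible $T=[J\ J_\perp]\in\operatorname{GL}(r,\F_p)$, so that $J=TJ_d$. Use the basis permutation $U_T\ket{z}=\ket{T^{-1}z}$, as in Eq.~\eqref{eq:RY-basis-permutation}. The same computation as in the proof of Theorem~\ref{thm:conditional-residual-state} gives
\begin{equation}
  U_T\ket{\psi_{S,e}}=\ket{\psi_{T^TST,\,T^Te}}.
\end{equation}
The map $(S,e)\mapsto(S',e'):=(T^TST,T^Te)$ is a bijection of $\Sym_r(\F_p)\times\F_p^r$, so the uniform prior is preserved.

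Next, compute how the verifier's operators transform. Conjugation by $U_T$ realizes the symplectic map $\operatorname{diag}(T^{-1},T^T)$, so
\begin{equation}
  U_TW(Jx,Cx)U_T^\dagger=W(J_dx,T^TCx),
\end{equation}
where the phase-adjusted convention introduces no extra phase because $(T^{-1}u)^T(T^Tz)=u^Tz$. We then map each announcement $(C,\gamma)\in\Theta_J^{\rm part}$ to $(C',\gamma):=(T^TC,\gamma)$. Admissibility is preserved, since
\begin{equation}
  J_d^TC'=J^TC=C^TJ=C'^TJ_d.
\end{equation}
The map $C\mapsto T^TC$ is a bijection $\mathcal C_J\to\mathcal C_{J_d}$ with inverse $C'\mapsto T^{-T}C'$. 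Hence
\begin{equation}
  U_T\Pi^{(d)}_{J,C,\gamma}U_T^\dagger=\Pi^{(d)}_{J_d,T^TC,\gamma},
\end{equation}
and therefore $s_J((S,e),(C,\gamma))=s_{J_d}((S',e'),(C',\gamma))$. As a sanity check, Proposition~\ref{prop:partial-rank-payoff} gives the same result: $\Delta'=T^TC-T^TSTJ_d=T^T\Delta$ has the same rank and the same kernel, and $J_d^Te'=J^Te$.

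Given a learner POVM $\{M_{C,\gamma}\}$ for $J$, define the learner for $J_d$ by
\begin{equation}
  M'_{C',\gamma}:=U_T^{\otimes N}M_{T^{-T}C',\gamma}U_T^{\dagger\otimes N}.
\end{equation}
Substituting into Eq.~\eqref{eq:game-average-graded-score} and reindexing the sums over $(S,e)$ and over announcements through the bijections above gives $S_{J_d}^{(N)}(\mathsf M')=S_J^{(N)}(\mathsf M)$. The inverse construction shows the reverse inequality, so the optima coincide. Applying this to any two full-rank queries through $J_d$ yields the claim.

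The only delicate point is the phase bookkeeping in the conjugation $U_TW(u,z)U_T^\dagger$. We must confirm that the phase-adjusted Weyl operators transform covariantly under a pure basis permutation, with no residual phase. A residual phase would shift $\gamma$, though any such shift would be a known bijection and would not change the conclusion. We will verify this directly from Eqs.~\eqref{eq:shift-phase} and~\eqref{eq:Weyl-definition}: $U_T\mathsf X(u)U_T^\dagger=\mathsf X(T^{-1}u)$ and $U_T\mathsf Z(z)U_T^\dagger=\mathsf Z(T^Tz)$.
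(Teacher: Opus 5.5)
Your proposal is correct and follows the paper's proof essentially verbatim. The paper also chooses $R\in\operatorname{GL}(r,\F_p)$ with $J=RJ_d$; your $T=[J\ J_\perp]$ is one such choice. It then applies the basis permutation $U_R$ to every learner copy and to the verifier copy, using the bijections $(S,e)\mapsto(R^TSR,R^Te)$ and $(C,\gamma)\mapsto(R^TC,\gamma)$. Your explicit check that $U_TW(u,z)U_T^\dagger=W(T^{-1}u,T^Tz)$ carries no residual phase fills in a detail the paper leaves implicit.
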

\begin{proof}
  Choose $R\in\operatorname{GL}(r,\F_p)$ with $J=RJ_d$.  The known basis
  permutation $U_R\ket{x}=\ket{R^{-1}x}$ maps
  $(S,e)$ bijectively to $(R^TSR,R^Te)$, preserves the uniform prior, maps an
  announcement $(C,\gamma)$ bijectively to $(R^TC,\gamma)$, and conjugates the
  accept projector for $J$ to that for $J_d$.  Applying $U_R$ to every learner
  copy and to the independent verifier copy therefore preserves each outcome
  probability and each conditional score.  The inverse coordinate change
  gives the reverse correspondence.
\end{proof}

By Lemma~\ref{lem:specified-query-game-equivalence}, we write the common optimum as
\begin{equation}
  S_{\mathrm{ver}}^{(N),*}(r,d;p)
  :=S_{\mathrm{ver}}^{(N),*}(J;p).
  \label{eq:partial-optimal-success}
\end{equation}
For compatibility with the authentication-specific reduction notation, we
also use the exact alias
\begin{equation}
  P_{\mathrm{part}}^{(N),*}(r,d;p)
  :=S_{\mathrm{ver}}^{(N),*}(r,d;p).
  \label{eq:partial-optimal-alias}
\end{equation}
The case $d=0$ is included only to close the mathematical definition under a
boundary case: it is an empty query with score one.  In the authentication
experiment, $Y'\ne Y$ implies $d=\rank(Y-Y')\ge1$.

The central objects are summarized in Table~\ref{tab:partial-prediction-notation}.
\begin{table*}[t]
\caption{Residual graded prediction-and-verification notation.  The true
partial label is $(SJ,-J^Te)$; consequently, the character error is
$\gamma+J^Te$.}
\label{tab:partial-prediction-notation}
\centering
\renewcommand{\arraystretch}{1.15}
\begin{tabular}{@{}p{0.14\textwidth}p{0.25\textwidth}p{0.54\textwidth}@{}}
\hline\hline
Symbol & Type or definition & Meaning \\
\hline
$R_Y$ & $[A_Y\ C_Y]\in\operatorname{GL}(n,\F_p)$ & Known coordinate change adapted to the exposed signed message. \\
$P,Q,c$ & exposed blocks & Chart and character coordinates determined by $(Y,Z_Y,h_Y)$. \\
$S$ & $\Sym_r(\F_p)$ & Hidden residual chart after conditioning on the valid signature. \\
$e$ & $\F_p^r$ & Hidden residual character parameter. \\
$\ket{\psi_{S,e}}$ & residual state & Unknown factor in $\ket{+}^{\otimes\ell}\otimes\ket{\psi_{S,e}}$. \\
$B_2$ & lower block of $R_Y^{-1}A_{Y'}$ & Quotient map carrying the genuinely new target directions; $\rank B_2=d$. \\
$J$ & $\F_p^{r\times d}$, full column rank & Public fixed residual query, equivalent to the canonical inclusion $J_d$. \\
$\mathcal C_J$ & $\{C:J^TC=C^TJ\}$ & Admissible announced chart actions defining commuting queried Weyl families. \\
$\Theta_J^{\rm part}$ & $\mathcal C_J\times\F_p^d$ & Learner announcement alphabet. \\
$(SJ,-J^Te)$ & true partial label & Residual chart action and character requested by the target. \\
$\Pi_{J,C,\gamma}^{(d)}$ & joint-character projector & Independent verifier's accept projector for announcement $(C,\gamma)$. \\
$\Delta$ & $C-SJ$ & Restricted chart error. \\
$t$ & $\rank\Delta$ & Rank shell of the chart error. \\
$\gamma+J^Te$ & character error & Compatibility requires membership in $\im(\Delta^T)$. \\
$s_J$ & Born acceptance probability & Equals $p^{-t}$ when compatible and $0$ otherwise. \\
$S_{\rm ver}^{(N),*}(r,d;p)$ & optimized average score & Optimal residual graded-verification score. \\
$P_{\rm part}^{(N),*}(r,d;p)$ & exact alias & Authentication-paper notation for the same optimized residual score. \\
\hline\hline
\end{tabular}
\end{table*}
\subsection{One-directional forgery-to-partial-prediction reduction}
\label{subsec:forgery-partial-reduction}

The reduction needed for security is one-directional.  A well-formed target
signature specifies an $\ell$-generator commuting Weyl family.  After an
invertible change of target-generator basis, the first $d$ generators carry
the new residual directions, whereas the final $\ell-d$ generators have zero
residual top component.  We delete the latter verification constraints.  This
marginal deletion can only increase the acceptance probability.

A forged retained family may couple the already exposed factor and the
residual factor even when neither projected family is commuting by itself.  We
therefore do not diagonalize the two factors separately.  Instead, we Fourier
expand the total commuting projector, compress it against the fixed exposed
state, and show that the compressed effect can be simulated exactly by a
randomized $d$-direction announcement in the partial prediction game.

\begin{lemma}[Retained block form and total isotropy]
  \label{lem:retained-total-isotropy}
  Fix the exposed valid signature and a well-formed purported target signature
  $\widehat\sigma=(\widehat Z,\widehat h)$.  Choose
  $V\in\operatorname{GL}(\ell,\F_p)$ such that
  \begin{equation}
    B_2V=\bigl[J\ \ 0\bigr],
    \qquad J\in\F_p^{r\times d},\quad\rank J=d,
    \label{eq:retained-B2-normal-form}
  \end{equation}
  and retain the first $d$ transformed generators.  After applying the known
  conjugation
  $\mathcal U_{Y,\sigma_Y}:=D_{P,Q,c}U_{R_Y}$ and absorbing its known linear
  phase into the transformed announced character
  $\eta\in\F_p^d$, the retained generators have labels
  \begin{equation}
    \left(
      \binom{Lx}{Jx},
      \binom{Fx}{Cx}
    \right),
    \qquad x\in\F_p^d,
    \label{eq:retained-total-label}
  \end{equation}
  for matrices
  \begin{equation}
    L,F\in\F_p^{\ell\times d},
    \qquad C\in\F_p^{r\times d}.
    \label{eq:retained-block-dimensions}
  \end{equation}
  To make the blocks and phase sign explicit, put
  \begin{equation}
    H:=V\begin{pmatrix}I_d\\0\end{pmatrix}
      \in\F_p^{\ell\times d},
    \qquad
    BH=\binom{L}{J},
    \label{eq:retained-explicit-top-blocks}
  \end{equation}
  and
  \begin{equation}
    R_Y^T\widehat ZH
      =\binom{\widehat F}{\widehat C}.
    \label{eq:retained-bottom-before-known-shear}
  \end{equation}
  Define
  \begin{equation}
    K_0:=\begin{pmatrix}P&Q\\Q^T&0\end{pmatrix},
    \qquad c_0:=\binom{c}{0}.
    \label{eq:retained-known-K0-c0}
  \end{equation}
  Direct conjugation by $D_{P,Q,c}$ gives
  \begin{equation}
    F=\widehat F-PL-QJ,
    \qquad
    C=\widehat C-Q^TL.
    \label{eq:retained-explicit-F-C}
  \end{equation}
  More precisely, for the $j$th retained top label
  $u_j=(L_j^T,J_j^T)^T$ and its pre-shear bottom label $z_j$,
  \begin{align}
    &D_{P,Q,c}W(u_j,z_j)D_{P,Q,c}^{\dagger}
      \notag\\[-2pt]
    &\qquad
      =\omega^{\kappa_j}W(u_j,z_j-K_0u_j),
    \qquad
      \kappa_j=-c^TL_j.
    \label{eq:retained-conjugation-phase}
  \end{align}
  Thus, if $\widehat h_d:=H^T\widehat h$ and
  $\kappa=(\kappa_1,\ldots,\kappa_d)^T$, the character after conjugation is
  \begin{equation}
    \eta=\widehat h_d-\kappa.
    \label{eq:retained-explicit-eta}
  \end{equation}
  The minus sign is forced because
  $\omega^{\kappa_j}W(u_j,z_j-K_0u_j)$ must retain the originally announced
  eigenvalue $\omega^{(\widehat h_d)_j}$.  The phase correction is linear in
  the retained top label.  For a Fourier coefficient $x\in\F_p^d$, the
  accumulated scalar phase is $\kappa^Tx$.
  These matrices and $\eta$ are deterministic functions of the exposed data,
  the target, and $\widehat\sigma$, and are independent of the hidden
  residual parameter $(S,e)$.  Their total symplectic Gram matrix vanishes:
  \begin{equation}
    F^TL-L^TF+C^TJ-J^TC=0.
    \label{eq:retained-total-isotropy}
  \end{equation}
\end{lemma}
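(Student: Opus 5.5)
The plan is to track a single retained generator through three explicit, known maps: the change of generator basis $V$, the permutation unitary $U_{R_Y}$, and the diagonal phase $D_{P,Q,c}$. Total isotropy will then follow because each of these maps preserves the symplectic Gram matrix.

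\emph{Step 1: basis change.} First I would justify the choice of $V$. By Lemma~\ref{lem:target-rank-identity}, $\rank B_2=d$, so column operations on $B_2\in\F_p^{r\times\ell}$ bring it to the form $[J\ 0]$ with $J$ of full column rank $d$. Such column operations are exactly right multiplication by some $V\in\operatorname{GL}(\ell,\F_p)$. Replacing the announced generators $\widehat G=(A_{Y'};\widehat Z)$ by $\widehat GV$ produces another generating set of the same commuting family. Because $x\mapsto W(\widehat Gx)$ is an ordinary representation (well-formedness, Eq.~\eqref{eq:well-formed-isotropy}), the joint character transforms covariantly, so the announced character on $\widehat GH$ is $H^T\widehat h=\widehat h_d$. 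Retaining only the first $d$ columns, i.e.\ $\widehat GH$, drops the remaining verification constraints.

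\emph{Step 2: conjugation by $U_{R_Y}$.} A direct computation on basis vectors gives $U_{R_Y}\mathsf X(u)U_{R_Y}^\dagger=\mathsf X(R_Y^{-1}u)$ and $U_{R_Y}\mathsf Z(z)U_{R_Y}^\dagger=\mathsf Z(R_Y^Tz)$. The phase $u^Tz$ is invariant under this pair of substitutions, so
\[
U_{R_Y}W(u,z)U_{R_Y}^\dagger=W(R_Y^{-1}u,R_Y^Tz)
\]
with no extra phase. Applied to $\widehat GH$, the top labels become $R_Y^{-1}A_{Y'}H=BH=(L;J)$ and the pre-shear bottom labels become $R_Y^T\widehat ZH=(\widehat F;\widehat C)$. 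These are Eqs.~\eqref{eq:retained-explicit-top-blocks}--\eqref{eq:retained-bottom-before-known-shear}.

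\emph{Step 3: conjugation by $D_{P,Q,c}$ (main obstacle).} Write $D=D_{P,Q,c}$ as multiplication by $\omega^{-q(x)}$ with $q(x)=\tfrac12x^TK_0x+c_0^Tx$. Then $D$ commutes with $\mathsf Z(z)$, and
\[
D\mathsf X(u)D^\dagger\ket{x}=\omega^{-q(x+u)+q(x)}\ket{x+u},\qquad -q(x+u)+q(x)=-u^TK_0x-\tfrac12u^TK_0u-c_0^Tu.
\]
This uses symmetry of $K_0$. I would rewrite the $x$-dependent factor as $\mathsf Z(-K_0u)$, taking care whether it is evaluated at $x$ or at $x+u$; the choice changes the constant by $u^TK_0u$. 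I would then recombine with the normalizing phase $\omega^{-\frac12u^Tz}$ of $W$. The goal is to check that all quadratic terms $\tfrac12u^TK_0u$ cancel against the change in the normalization $\omega^{-\frac12u^T(z-K_0u)}$, leaving exactly
\[
DW(u,z)D^\dagger=\omega^{-c_0^Tu}W(u,z-K_0u).
\]
With $u_j=(L_j;J_j)$ we have $c_0^Tu_j=c^TL_j$, giving $\kappa_j=-c^TL_j$. Expanding $z_j-K_0u_j$ blockwise yields $F=\widehat F-PL-QJ$ and $C=\widehat C-Q^TL$. Since $\omega^{\kappa_j}W(u_j,z_j-K_0u_j)$ must keep the announced eigenvalue $\omega^{(\widehat h_d)_j}$, the transformed character is $\eta=\widehat h_d-\kappa$. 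Multiplicativity on the retained family then gives the accumulated phase $\kappa^Tx$ on $W$ evaluated at the label with coefficient vector $x$. Everything used is a function of $(Y,\sigma_Y,Y',\widehat\sigma)$, and $(S,e)$ never enters.

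\emph{Step 4: total isotropy.} The original Gram matrix of $\widehat GH$ is $H^T(\widehat Z^TA_{Y'}-A_{Y'}^T\widehat Z)H=0$ by Eq.~\eqref{eq:well-formed-isotropy}. The map $\mathsf S_Y$ is symplectic, and the shear $(u,z)\mapsto(u,z-K_0u)$ is symplectic because $K_0=K_0^T$, so both preserve the form in Eq.~\eqref{eq:symplectic-form}. Evaluating the Gram matrix of the final labels in block form gives $F^TL-L^TF+C^TJ-J^TC$, which therefore vanishes, establishing Eq.~\eqref{eq:retained-total-isotropy}.
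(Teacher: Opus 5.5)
Your proposal is correct and follows essentially the same route as the paper: a generator-basis change by $V$ with character relabeling, phase-free conjugation by $U_{R_Y}$, and a basis-vector computation giving $D_{P,Q,c}W(u,z)D_{P,Q,c}^{\dagger}=\omega^{-c_0^Tu}W(u,z-K_0u)$. For total isotropy, the paper argues that unitary conjugation and restriction preserve commutation of the family; you instead check directly that $\mathsf S_Y$ and the shear $(u,z)\mapsto(u,z-K_0u)$ preserve the symplectic form, which is an equivalent argument.
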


\begin{proof}
  The generator-basis change induced by $V$ preserves the joint verification
  subspace after the character relabeling $\widehat h\mapsto V^T\widehat h$.
  The basis permutation $U_{R_Y}$ sends the retained top and bottom blocks to
  $BH$ and $R_Y^T\widehat ZH$ without an additional phase.  To verify the
  diagonal-conjugation phase, write
  $D_{P,Q,c}\ket{x}=\omega^{-\frac12x^TK_0x-c_0^Tx}\ket{x}$.
  Acting on a basis vector gives
  \begin{align*}
   & D_{P,Q,c}W(u,z)D_{P,Q,c}^{\dagger}\ket{x}\notag\\
    =&\omega^{z^Tx+\frac12u^Tz-x^TK_0u-\frac12u^TK_0u-c_0^Tu}
      \ket{x+u}.
  \end{align*}
  Comparing this expression with
  $W(u,z-K_0u)\ket{x}$ shows that the quadratic contribution is already
  contained in the shifted Weyl operator and that the remaining scalar is
  only $\omega^{-c_0^Tu}$.  Since $c_0^Tu_j=c^TL_j$, this proves
  $\kappa_j=-c^TL_j$ and hence
  Eqs.~\eqref{eq:retained-explicit-F-C}--\eqref{eq:retained-explicit-eta}.
  Splitting the transformed $n=\ell+r$ coordinates gives
  Eq.~\eqref{eq:retained-total-label}; the residual top block is $J$ by
  Eq.~\eqref{eq:retained-B2-normal-form}.  Conjugation and restriction to the
  retained generators preserve commutation of the total family.  Evaluating
  its symplectic Gram matrix in the direct-sum coordinates gives exactly
  Eq.~\eqref{eq:retained-total-isotropy}.  We do not infer that either summand
  in that equation vanishes separately.
\end{proof}

Let $\Pi_{\mathrm{full}}$ be the original accept projector after the target
basis change, and let $\Pi_{\mathrm{ret},\eta}$ retain only the first $d$
spectral constraints.  Since all original constraints commute,
\begin{equation}
  0\leq\Pi_{\mathrm{full}}\leq\Pi_{\mathrm{ret},\eta}.
  \label{eq:retained-projector-domination}
\end{equation}

\begin{lemma}[Fourier form and compression against the exposed factor]
  \label{lem:known-factor-compression}
  With the notation of Lemma~\ref{lem:retained-total-isotropy},
  \begin{equation}
    \Pi_{\mathrm{ret},\eta}
    =\frac1{p^d}\sum_{x\in\F_p^d}
      \omega^{-\eta^Tx}
      W_A(Lx,Fx)\otimes W_B(Jx,Cx).
    \label{eq:retained-Fourier-projector}
  \end{equation}
  Put
  \begin{equation}
    \tau_A:=\bigl(\ket{+}\!\bra{+}\bigr)^{\otimes\ell}
    \label{eq:known-factor-state}
  \end{equation}
  and define the effect induced on the residual verification system by
  \begin{equation}
    E_B(\widehat\sigma)
    :=\Tr_A\!\left[(\tau_A\otimes I_B)
      \Pi_{\mathrm{ret},\eta}\right].
    \label{eq:compressed-residual-effect}
  \end{equation}
  Then
  \begin{equation}
    E_B(\widehat\sigma)
    =\frac1{p^d}\sum_{x\in\ker F}
      \omega^{-\eta^Tx}W_B(Jx,Cx).
    \label{eq:compressed-kernel-Fourier-sum}
  \end{equation}
\end{lemma}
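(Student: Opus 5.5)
The plan has two parts: first establish the Fourier form~\eqref{eq:retained-Fourier-projector} of the retained projector, then compress against $\tau_A$ using the explicit value of $\bra{+}^{\otimes\ell}W_A(u,z)\ket{+}^{\otimes\ell}$.

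\emph{Fourier form.} By Lemma~\ref{lem:retained-total-isotropy}, the retained labels are $g(x)=((Lx,Jx),(Fx,Cx))$, and their total symplectic Gram matrix vanishes by Eq.~\eqref{eq:retained-total-isotropy}. Equation~\eqref{eq:Weyl-product} then shows that the map $x\mapsto W(g(x))$ carries no residual phase. Hence it is an ordinary unitary representation of $\F_p^d$, exactly as in Eq.~\eqref{eq:graph-Weyl-representation}. The $d$ commuting constraints with characters $\eta_j$ multiply, by Eq.~\eqref{eq:single-Weyl-projector}, to the group-averaged projector $p^{-d}\sum_x\omega^{-\eta^Tx}W(g(x))$. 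It remains to factor each term. Because $W(u,z)=\omega^{-\frac12u^Tz}\mathsf Z(z)\mathsf X(u)$ and the phase $u^Tz$ is additive over the direct sum $\F_p^\ell\oplus\F_p^r$, we get $W(g(x))=W_A(Lx,Fx)\otimes W_B(Jx,Cx)$. Note that individual factors need not form representations; only the product is used.

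\emph{Compression.} Linearity of the partial trace gives
\[
E_B=p^{-d}\sum_x\omega^{-\eta^Tx}\bra{+}^{\otimes\ell}W_A(Lx,Fx)\ket{+}^{\otimes\ell}\,W_B(Jx,Cx).
\]
So the whole lemma reduces to evaluating the scalar $\bra{+}^{\otimes\ell}W_A(u,z)\ket{+}^{\otimes\ell}$. Since $\mathsf X(u)$ fixes $\ket{+}^{\otimes\ell}$, this scalar equals $\omega^{-\frac12u^Tz}p^{-\ell}\sum_y\omega^{z^Ty}=\omega^{-\frac12u^Tz}\delta_{z,0}$. With $z=Fx$, the scalar vanishes unless $x\in\ker F$, and then the phase is $\omega^0=1$. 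Substituting gives exactly Eq.~\eqref{eq:compressed-kernel-Fourier-sum}.

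The main obstacle is a sign and phase consistency check rather than any conceptual difficulty. Two points must be confirmed. First, the absorbed phase $\kappa$ is already inside $\eta$, so no extra phase appears in the Fourier coefficients; Lemma~\ref{lem:retained-total-isotropy} guarantees this. Second, the tensor factorization introduces no cross-term phase. This follows because $\mathsf Z(z)\mathsf X(u)$ factorizes on the direct sum and the prefactor $-\tfrac12u^Tz$ splits additively into its $A$ and $B$ parts.
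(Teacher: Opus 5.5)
Your proposal is correct and follows the paper's proof: total isotropy makes $x\mapsto W(g(x))$ an ordinary representation of $\F_p^d$, which gives the Fourier form, and the compression applies $\bra{+}^{\otimes\ell}W_A(u,z)\ket{+}^{\otimes\ell}=\delta_{z,0}$ termwise to keep only $x\in\ker F$. You also check explicitly that the phase $-\tfrac12u^Tz$ splits across the tensor factors and vanishes at $z=0$; the paper leaves both steps implicit, and neither changes the argument.
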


\begin{proof}
  The total retained labels are isotropic, so their Weyl operators form an
  ordinary representation of the additive group $\F_p^d$.  Fourier inversion
  gives Eq.~\eqref{eq:retained-Fourier-projector}; no commutativity of either
  factor family is used.  For every $u,z\in\F_p^\ell$,
  \begin{equation}
    \bra{+}^{\otimes\ell}W_A(u,z)\ket{+}^{\otimes\ell}
    =\begin{cases}1,&z=0,\\0,&z\ne0.\end{cases}
    \label{eq:plus-Weyl-expectation}
  \end{equation}
  Indeed, $W_A(u,0)=\mathsf X(u)$ fixes the uniform superposition, whereas a
  nonzero phase component has zero expectation by character orthogonality.
  Applying Eq.~\eqref{eq:plus-Weyl-expectation} termwise to
  Eq.~\eqref{eq:retained-Fourier-projector} leaves precisely those
  coefficients $x$ for which $Fx=0$, proving
  Eq.~\eqref{eq:compressed-kernel-Fourier-sum}.
\end{proof}

\begin{lemma}[Residual isotropy on the surviving kernel]
  \label{lem:kernel-residual-isotropy}
  Let
  \begin{equation}
    K:=\ker F,\qquad k:=\dim K,
    \label{eq:kernel-K-k}
  \end{equation}
  and let $T\in\F_p^{d\times k}$ have full column rank and image $K$.  Define
  \begin{equation}
    J_K:=JT,\qquad C_K:=CT,\qquad\gamma_K:=T^T\eta.
    \label{eq:kernel-restricted-labels}
  \end{equation}
  Then $\rank J_K=k$ and
  \begin{equation}
    J_K^TC_K=C_K^TJ_K.
    \label{eq:kernel-residual-isotropy}
  \end{equation}
  Moreover,
  \begin{equation}
    E_B(\widehat\sigma)
    =p^{-(d-k)}\Pi^{(k)}_{J_K,C_K,\gamma_K},
    \label{eq:kernel-projector-representation}
  \end{equation}
  where the projector on the right is the joint character projector of the
  $k$ commuting residual generators.
\end{lemma}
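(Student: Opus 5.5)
The proof has three parts: the rank claim, the isotropy identity on the kernel, and the projector representation, taken in that order.

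\emph{Rank.} By Lemma~\ref{lem:retained-total-isotropy}, $J$ has full column rank $d$, so $x\mapsto Jx$ is injective on $\F_p^d$. Since $T$ has full column rank $k$, the composite $JT$ is injective on $\F_p^k$. Hence $\rank J_K=k$.

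\emph{Isotropy.} Sandwich the total isotropy relation \eqref{eq:retained-total-isotropy} between $T^T$ and $T$:
\begin{equation*}
  (FT)^T(LT)-(LT)^T(FT)+(CT)^T(JT)-(JT)^T(CT)=0.
\end{equation*}
Because the image of $T$ is $K=\ker F$, we have $FT=0$, so the first two terms vanish. What remains is $C_K^TJ_K-J_K^TC_K=0$, which is \eqref{eq:kernel-residual-isotropy}. This is the only point where the coupling between the exposed and residual factors enters. It shows that compressing against $\ket{+}^{\otimes\ell}$ kills exactly the cross terms that could spoil residual commutativity. Consequently $C_K\in\mathcal C_{J_K}$, and $y\mapsto W_B(J_Ky,C_Ky)$ is an ordinary representation of $\F_p^k$, by the same computation as Eq.~\eqref{eq:graph-Weyl-representation}.

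\emph{Projector representation.} Start from Lemma~\ref{lem:known-factor-compression} and reparametrize the sum over $x\in\ker F$ by $x=Ty$ with $y\in\F_p^k$. This is a bijection onto $K$ because $T$ is injective with image $K$. The phase becomes $\eta^TTy=\gamma_K^Ty$, and the Weyl label becomes $(J_Ky,C_Ky)$. Therefore
\begin{equation*}
  E_B(\widehat\sigma)=\frac{1}{p^d}\sum_{y\in\F_p^k}\omega^{-\gamma_K^Ty}W_B(J_Ky,C_Ky)
  =p^{-(d-k)}\cdot\frac{1}{p^k}\sum_{y\in\F_p^k}\omega^{-\gamma_K^Ty}W_B(J_Ky,C_Ky).
\end{equation*}
The last expression is $p^{-(d-k)}\Pi^{(k)}_{J_K,C_K,\gamma_K}$ by definition \eqref{eq:game-accept-projector}.

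It remains to check that $\Pi^{(k)}_{J_K,C_K,\gamma_K}$ is a genuine joint spectral projector, i.e.\ that it is self-adjoint and idempotent. This is the standard Fourier-inversion fact for characters of a finite abelian group represented unitarily, as in Appendix~\ref{app:verification-measurement}, and it relies on the representation property established in the isotropy step.

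The main care point is bookkeeping of phases. The sign conventions in $\eta$ must pass through $T^T$ correctly, and one must confirm that the representation property, rather than any separate commutativity of the residual family, justifies the projector interpretation. No hidden $(S,e)$ dependence enters, since $T$, $J_K$, $C_K$ and $\gamma_K$ are all built from $F$, $J$, $C$ and $\eta$.
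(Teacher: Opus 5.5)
Your proposal is correct and matches the paper's proof essentially step for step: you sandwich the total isotropy relation between $T^T$ and $T$ and use $FT=0$, you get full column rank of $JT$ from that of $J$ and $T$, and you reparametrize the kernel Fourier sum by $x=Ty$ to extract the factor $p^{-(d-k)}$. Your extra check that $y\mapsto W_B(J_Ky,C_Ky)$ is an ordinary representation, so that the right-hand side is a genuine joint projector, is a harmless addition; the paper leaves that point to the definition of the game projector.
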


\begin{proof}
  Multiply Eq.~\eqref{eq:retained-total-isotropy} on the left by $T^T$ and on
  the right by $T$.  Since $FT=0$, the known-factor terms vanish and we obtain
  \begin{equation}
    (CT)^T(JT)-(JT)^T(CT)=0,
    \label{eq:kernel-isotropy-calculation}
  \end{equation}
  which is Eq.~\eqref{eq:kernel-residual-isotropy}.  Because $J$ and $T$ both
  have full column rank, so does $JT$.  Substituting the bijective
  parametrization $x=Ty$ of $K$ into
  Eq.~\eqref{eq:compressed-kernel-Fourier-sum} gives
  \begin{align}
    E_B(\widehat\sigma)
    &=\frac1{p^d}\sum_{y\in\F_p^k}
      \omega^{-\gamma_K^Ty}W_B(J_Ky,C_Ky)\notag\\
    &=p^{-(d-k)}\Pi^{(k)}_{J_K,C_K,\gamma_K},
    \label{eq:kernel-projector-proof}
  \end{align}
  proving Eq.~\eqref{eq:kernel-projector-representation}.  Replacing $T$ by
  $TR$ with $R\in\operatorname{GL}(k,\F_p)$ only changes the generator basis
  and the character by the corresponding bijective relabeling, so the
  projector is independent of the chosen basis of $K$.
\end{proof}

\begin{lemma}[Isotropic completion inside the fixed query space]
  \label{lem:isotropic-query-completion}
  Let $J\in\F_p^{r\times d}$ have full column rank.  Suppose
  $T\in\F_p^{d\times k}$ has full column rank and
  $C_K\in\F_p^{r\times k}$ satisfies
  \begin{equation}
    (JT)^TC_K=C_K^T(JT).
    \label{eq:completion-input-isotropy}
  \end{equation}
  Then there exists $C_{\mathrm{ext}}\in\F_p^{r\times d}$ such that
  \begin{equation}
    C_{\mathrm{ext}}T=C_K,
    \qquad
    J^TC_{\mathrm{ext}}=C_{\mathrm{ext}}^TJ.
    \label{eq:isotropic-completion-properties}
  \end{equation}
  The completion may be selected by a fixed deterministic linear-algebraic
  rule from $(J,T,C_K)$.
\end{lemma}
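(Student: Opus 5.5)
The plan is to reduce to the canonical query and then extend a partially specified symmetric matrix. Choose $R\in\operatorname{GL}(r,\F_p)$ with $J=RJ_d$, as in Lemma~\ref{lem:specified-query-game-equivalence}. Put $C'_K:=R^TC_K$ and $C'_{\mathrm{ext}}:=R^TC_{\mathrm{ext}}$. Then $(JT)^TC_K=T^TJ_d^TC'_K$, and $J^TC_{\mathrm{ext}}=J_d^TC'_{\mathrm{ext}}$. Both the hypothesis~\eqref{eq:completion-input-isotropy} and the conclusion~\eqref{eq:isotropic-completion-properties} are therefore invariant, and it suffices to treat $J=J_d$. In that case, write $C=\binom{M}{N'}$ with $M\in\F_p^{d\times d}$ and $N'\in\F_p^{(r-d)\times d}$. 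The isotropy condition $J_d^TC=C^TJ_d$ says exactly that $M$ is symmetric, with no constraint on $N'$.

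Split $C_K=\binom{M_K}{N_K}$ accordingly. The hypothesis becomes $T^TM_K=M_K^TT$, i.e.\ the $k\times k$ matrix $T^TM_K$ is symmetric. The lower block is easy: $T$ has full column rank, so it has a left inverse $T^+$ with $T^+T=I_k$ (for instance $T^+=(T^TT)^{-1}T^T$ when invertible, or in general one obtained from a fixed completion of $T$ to an invertible matrix). Set $N'_{\mathrm{ext}}:=N_KT^+$, which gives $N'_{\mathrm{ext}}T=N_K$.

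The main step is the upper block. We need a symmetric $M\in\Sym_d(\F_p)$ with $MT=M_K$. Complete $T$ to an invertible $G=[T\ T']\in\operatorname{GL}(d,\F_p)$ by a fixed rule. Work in the congruent matrix $\widetilde M:=G^TMG$, which is symmetric if and only if $M$ is. The condition $MT=M_K$ becomes $G^TMT=G^TM_K$, which prescribes the first $k$ columns of $\widetilde M$:
\[
\widetilde M\begin{pmatrix}I_k\\0\end{pmatrix}=\begin{pmatrix}T^TM_K\\T'^TM_K\end{pmatrix}.
\]
Its top $k\times k$ block $T^TM_K$ is symmetric by hypothesis. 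So take
\[
\widetilde M=\begin{pmatrix}T^TM_K & (T'^TM_K)^T\\ T'^TM_K & 0\end{pmatrix},
\]
which is symmetric, and set $M:=G^{-T}\widetilde MG^{-1}$. Checking $MT=G^{-T}\widetilde M\binom{I_k}{0}=G^{-T}G^TM_K=M_K$ completes this block.

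The compatibility check on the symmetric $k\times k$ block is the only place the hypothesis enters, and I expect it to be the only real obstacle. The remaining work is bookkeeping: every choice above ($R$, $G$, $T^+$, and the zero lower-right block) comes from a fixed deterministic rule, so $C_{\mathrm{ext}}:=R^{-T}\binom{M}{N'_{\mathrm{ext}}}$ is a deterministic linear-algebraic function of $(J,T,C_K)$. This supplies the final clause of the statement.
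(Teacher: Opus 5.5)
Your proof is correct and takes essentially the paper's route. Both arguments complete $T$ to an invertible matrix, pass to coordinates in which $J$ becomes $J_d$ and the prescribed columns become the first $k$, observe that the hypothesis is exactly symmetry of the top $k\times k$ block, and fill in the rest with its transpose and a zero lower-right block. The only difference is bookkeeping: the paper makes one combined change $UJV_c=J_d$, with $V_c$ completing $T$, whereas you split this into the residual change $J=RJ_d$ followed by the congruence by $G$. Your lower block $N_KT^+$ coincides with the paper's zero-padding of the unspecified columns followed by back-transport through $V_c^{-1}$.
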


\begin{proof}
  Choose $R$ such that
  \begin{equation}
    V_c:=\begin{bmatrix}T&R\end{bmatrix}
      \in\operatorname{GL}(d,\F_p).
    \label{eq:completion-generator-basis}
  \end{equation}
  Since $JV_c$ has rank $d$, choose $U\in\operatorname{GL}(r,\F_p)$ such that
  \begin{equation}
    UJV_c=J_d:=\binom{I_d}{0}.
    \label{eq:completion-residual-normalization}
  \end{equation}
  The corresponding residual symplectic coordinate transformation is
  \begin{equation}
    (u,z)\longmapsto(Uu,U^{-T}z).
    \label{eq:completion-symplectic-coordinate-change}
  \end{equation}
  Hence the prescribed kernel chart in canonical coordinates is
  \begin{equation}
    C_K':=U^{-T}C_K.
    \label{eq:completion-prescribed-chart}
  \end{equation}
  In these coordinates an admissible completed chart has the form
  \begin{equation}
    C_{\mathrm{ext}}'=\binom{H}{R_0},
    \qquad H=H^T.
    \label{eq:completion-canonical-chart}
  \end{equation}
  Its prescribed first $k$ columns are $C_K'$.  Write their upper block as
  $(H_{11}^T,H_{21}^T)^T$, where
  $H_{11}\in\F_p^{k\times k}$.  The input isotropy condition is equivalent to
  $H_{11}=H_{11}^T$.  We may therefore choose
  \begin{equation}
    H=\begin{pmatrix}H_{11}&H_{21}^T\\H_{21}&0\end{pmatrix}
    \label{eq:canonical-symmetric-completion}
  \end{equation}
  and set all unspecified columns of $R_0$ to zero.  This produces a canonical
  $C_{\mathrm{ext}}'$ satisfying
  \begin{equation}
    C_{\mathrm{ext}}'\binom{I_k}{0}=C_K',
    \qquad
    J_d^TC_{\mathrm{ext}}'=(C_{\mathrm{ext}}')^TJ_d.
    \label{eq:canonical-completion-properties}
  \end{equation}
  The exact back-transport is
  \begin{equation}
    C_{\mathrm{ext}}
      :=U^TC_{\mathrm{ext}}'V_c^{-1}.
    \label{eq:completion-back-transport}
  \end{equation}
  Since $V_c^{-1}T=(I_k,0)^T$, one has
  $C_{\mathrm{ext}}T=C_K$.  Moreover,
  \begin{equation}
    V_c^T(J^TC_{\mathrm{ext}})V_c
      =(UJV_c)^TC_{\mathrm{ext}}'
      =J_d^TC_{\mathrm{ext}}',
    \label{eq:completion-isotropy-transport}
  \end{equation}
  which is symmetric.  Invertibility of $V_c$ proves
  $J^TC_{\mathrm{ext}}=C_{\mathrm{ext}}^TJ$.  Fixed row-reduction and basis-
  completion conventions make the construction deterministic.
\end{proof}

\begin{lemma}[Random-character completion]
  \label{lem:random-character-completion}
  Let $V=[T\ R]\in\operatorname{GL}(d,\F_p)$ and let
  $C_{\mathrm{ext}}$ be a completion from
  Lemma~\ref{lem:isotropic-query-completion}.  For
  $\zeta\in\F_p^{d-k}$ define
  \begin{equation}
    \gamma_{\mathrm{ext}}(\zeta)
    :=V^{-T}\binom{\gamma_K}{\zeta}.
    \label{eq:extended-random-character}
  \end{equation}
  Then
  \begin{equation}
    \frac1{p^{d-k}}\sum_{\zeta\in\F_p^{d-k}}
    \Pi^{(d)}_{J,C_{\mathrm{ext}},
      \gamma_{\mathrm{ext}}(\zeta)}
    =p^{-(d-k)}\Pi^{(k)}_{J_K,C_K,\gamma_K}.
    \label{eq:random-character-projector-identity}
  \end{equation}
  Consequently,
  \begin{equation}
    E_B(\widehat\sigma)
    =\mathbb E_{\zeta}\!
      \left[\Pi^{(d)}_{J,C_{\mathrm{ext}},
        \gamma_{\mathrm{ext}}(\zeta)}\right].
    \label{eq:compressed-effect-random-simulation}
  \end{equation}
\end{lemma}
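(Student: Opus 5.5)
The identity~\eqref{eq:random-character-projector-identity} follows directly from the Fourier representation~\eqref{eq:game-accept-projector}, applied once at dimension $d$ and once at dimension $k$. I would first change the summation variable in the $d$-dimensional projector. Since $V=[T\ R]$ is invertible, write $x=Vw$ with $w=(y,w')\in\F_p^k\oplus\F_p^{d-k}$, so that $x=Ty+Rw'$. The isotropy $J^TC_{\mathrm{ext}}=C_{\mathrm{ext}}^TJ$ from Lemma~\ref{lem:isotropic-query-completion} makes $x\mapsto W(Jx,C_{\mathrm{ext}}x)$ an ordinary representation of $\F_p^d$, as in~\eqref{eq:graph-Weyl-representation}. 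Hence
\begin{equation*}
  \Pi^{(d)}_{J,C_{\mathrm{ext}},\gamma_{\mathrm{ext}}(\zeta)}
  =\frac1{p^d}\sum_{w\in\F_p^d}\omega^{-\gamma_{\mathrm{ext}}(\zeta)^TVw}
  W\bigl(JVw,C_{\mathrm{ext}}Vw\bigr).
\end{equation*}
By~\eqref{eq:extended-random-character}, $\gamma_{\mathrm{ext}}(\zeta)^TVw=\gamma_K^Ty+\zeta^Tw'$. So the character splits into a $\zeta$-independent part on the $T$-directions and a pure $\zeta$-dependence on the $R$-directions.

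Next I average over $\zeta$ inside the sum. The character orthogonality relation $p^{-(d-k)}\sum_{\zeta}\omega^{-\zeta^Tw'}=\delta_{w',0}$ kills every term with $w'\neq0$. Only $x=Ty$ survives. On those terms $JVw=JTy=J_Ky$, and $C_{\mathrm{ext}}Vw=C_{\mathrm{ext}}Ty=C_Ky$ by the completion property $C_{\mathrm{ext}}T=C_K$. The average therefore equals
\begin{equation*}
  \frac1{p^d}\sum_{y\in\F_p^k}\omega^{-\gamma_K^Ty}W(J_Ky,C_Ky)
  =p^{-(d-k)}\Pi^{(k)}_{J_K,C_K,\gamma_K}.
\end{equation*}
The equality uses the $k$-dimensional Fourier form, which is legitimate by the isotropy~\eqref{eq:kernel-residual-isotropy} of Lemma~\ref{lem:kernel-residual-isotropy}. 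This proves~\eqref{eq:random-character-projector-identity}. Combining it with~\eqref{eq:kernel-projector-representation} gives~\eqref{eq:compressed-effect-random-simulation} immediately.

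The only step needing care is the bookkeeping of transposes in $\gamma_{\mathrm{ext}}^Tx=(V^{-T}(\gamma_K;\zeta))^TVw$, which reduces to $(\gamma_K;\zeta)^Tw$. A second point is to confirm that each $(C_{\mathrm{ext}},\gamma_{\mathrm{ext}}(\zeta))$ lies in $\Theta_J^{\rm part}$, so that every averaged projector is a legal announcement in the game of Definition~\ref{def:residual-graded-game}. This follows from Lemma~\ref{lem:isotropic-query-completion}. There is no genuine obstacle: the argument is the same Fourier-inversion mechanism used in Lemma~\ref{lem:known-factor-compression}, run in reverse.
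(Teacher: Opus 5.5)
Your proposal is correct and takes essentially the same route as the paper. You expand each $d$-generator projector in the Fourier basis adapted to $V=[T\ R]$, average over $\zeta$ with character orthogonality to remove every term with a nonzero coefficient on the added generators, and read off the surviving $k$-generator sum with prefactor $p^{-d}=p^{-(d-k)}p^{-k}$ before combining with Eq.~\eqref{eq:kernel-projector-representation}. Your explicit substitution $x=Vw$, the identity $\gamma_{\mathrm{ext}}(\zeta)^TVw=\gamma_K^Ty+\zeta^Tw'$, and the use of $C_{\mathrm{ext}}T=C_K$ simply make precise what the paper states in words.
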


\begin{proof}
  Work in the generator basis $V=[T\ R]$.  The completed $d$-generator family
  is commuting, and its first $k$ generators are exactly
  $(J_K;C_K)$.  Expanding each joint projector in characters and averaging
  over $\zeta$ gives
  \begin{equation}
    \frac1{p^{d-k}}\sum_{\zeta\in\F_p^{d-k}}
      \omega^{-\zeta^Tz}
    =\begin{cases}1,&z=0,\\0,&z\ne0.\end{cases}
    \label{eq:random-character-orthogonality}
  \end{equation}
  Thus all Fourier terms with a nonzero coefficient on an added generator
  vanish, while the terms supported on the first $k$ generators remain.  Since
  the full projector has coefficient $p^{-d}=p^{-(d-k)}p^{-k}$, the surviving
  sum is $p^{-(d-k)}$ times the normalized $k$-generator projector.  Hence the
  resulting normalization is $p^{-(d-k)}$, proving
  Eq.~\eqref{eq:random-character-projector-identity}.  Combining it with
  Eq.~\eqref{eq:kernel-projector-representation} proves
  Eq.~\eqref{eq:compressed-effect-random-simulation}.
\end{proof}
\begin{remark}[Boundary cases of the kernel reduction]
  \label{rem:kernel-boundary-cases}
  If $k=d$, then $\ker F=\F_p^d$ and hence $F=0$.  No generator or character
  is added, the randomization set is a singleton, and
  Eq.~\eqref{eq:random-character-projector-identity} reduces to the identity between
  the original $d$-generator projector and itself.  If $k=0$, then
  $\ker F=\{0\}$ and
  \begin{equation}
    E_B(\widehat\sigma)=p^{-d}I_B.
    \label{eq:k-zero-compressed-effect}
  \end{equation}
  Any fixed isotropic $d$-generator completion resolves the identity over all
  $p^d$ characters, so uniform randomization of all characters gives
  $p^{-d}\sum_\gamma\Pi_\gamma=p^{-d}I_B$, exactly reproducing the compressed
  effect.
\end{remark}

\begin{theorem}[Forgery-to-partial-prediction security reduction]
  \label{thm:forgery-partial-reduction}
  Fix distinct messages $Y,Y'\in\mathcal M_{n,\ell}$ and let
  \begin{equation}
    r=n-\ell,\qquad d=\rank(Y-Y').
    \label{eq:r-d-reduction-parameters}
  \end{equation}
  Then, for every $N\geq0$,
  \begin{equation}
    P_{\mathrm{forge}}^{(N),*}(Y\to Y')
    \leq P_{\mathrm{part}}^{(N),*}(r,d;p).
    \label{eq:forgery-partial-upper-bound}
  \end{equation}
  Only this upper-bound direction is asserted; no equality of the two
  operational optima is claimed.
\end{theorem}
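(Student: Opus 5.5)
The proof will turn an arbitrary forger into a learner for the residual game of Definition~\ref{def:residual-graded-game}, with score at least the forger's success probability. All the pieces are already in Lemmas~\ref{lem:signature-exposes-blocks}--\ref{lem:random-character-completion}, so the work is to assemble them carefully.

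\textbf{Fixing the exposed signature.} Fix an adversary $\mathcal A$ with POVM $\mathsf M^{Y,\sigma_Y,Y'}$. Rewrite Eq.~\eqref{eq:forgery-probability-expanded} as an average over the exposed data $(P,Q,c)$, followed by a conditional average over $(S,e)$. By Lemma~\ref{lem:signature-exposes-blocks}, the conditional average is uniform on $\Sym_r(\F_p)\times\F_p^r$. It therefore suffices to show, for each fixed value $\sigma_Y$, that the conditional success probability is at most $P_{\mathrm{part}}^{(N),*}(r,d;p)$.

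\textbf{Transforming to residual coordinates.} With $\sigma_Y$ fixed, conjugate every copy, both the adversary's $N$ copies and the verifier copy, by the known unitary $\mathcal U_{Y,\sigma_Y}$. By Theorem~\ref{thm:conditional-residual-state}, each copy becomes $\tau_A\otimes\rho_{S,e}$. Define the learner as follows: it appends $\tau_A^{\otimes N}$ to its $N$ residual copies, applies $\mathcal U^\dagger$ copywise, and runs $\mathsf M$. The outcome $(\widehat Z,\widehat h)$ then has exactly the same distribution as in the forgery experiment. Outcomes that are not well formed contribute zero and can be mapped to any announcement.

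For a well-formed outcome, use Eq.~\eqref{eq:retained-projector-domination}. The verifier's acceptance probability is $\Tr[\Pi_{\mathrm{full}}(\tau_A\otimes\rho_{S,e})]\le\Tr[\Pi_{\mathrm{ret},\eta}(\tau_A\otimes\rho_{S,e})]=\Tr[E_B(\widehat\sigma)\rho_{S,e}]$. Here $J$ is the query from Lemma~\ref{lem:retained-total-isotropy}.

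One point needs checking. $J$ is determined by $B_2$ together with a fixed choice of $V$ depending only on $(Y,Y')$, so it must not depend on $\widehat\sigma$. I will fix $V$ once per message pair, before the forgery, so that $J$ is a fixed full-rank query known to both parties. Lemma~\ref{lem:specified-query-game-equivalence} then allows replacing $J$ by $J_d$ without changing the optimum.

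\textbf{Simulating the compressed effect.} Apply Lemmas~\ref{lem:kernel-residual-isotropy}, \ref{lem:isotropic-query-completion} and \ref{lem:random-character-completion}. Together they give $E_B(\widehat\sigma)=\mathbb E_\zeta[\Pi^{(d)}_{J,C_{\mathrm{ext}},\gamma_{\mathrm{ext}}(\zeta)}]$, where $C_{\mathrm{ext}}\in\mathcal C_J$ and all data are deterministic functions of $(\sigma_Y,Y',\widehat\sigma)$. The boundary cases $k=0$ and $k=d$ are covered by Remark~\ref{rem:kernel-boundary-cases}.

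The learner therefore announces $(C_{\mathrm{ext}},\gamma_{\mathrm{ext}}(\zeta))$ with $\zeta$ uniform. The post-measurement randomization is absorbed into a refined POVM, which Definition~\ref{def:residual-graded-game} allows. By linearity, its expected score equals $\Tr[E_B\rho_{S,e}]$ averaged over outcomes, and this is at least the forger's conditional acceptance probability. Averaging over $(S,e)$ and then over $\sigma_Y$, and taking the supremum over $\mathcal A$, gives Eq.~\eqref{eq:forgery-partial-upper-bound}.

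The main obstacle is the bookkeeping in the simulation step. The learner must compute $(L,F,C,\eta)$ from public data, and the phase correction $\kappa$ must be handled consistently. The key observation is that the verifier's copy in the game is the bare $\rho_{S,e}$, whereas the forgery verifier sees $\tau_A\otimes\rho_{S,e}$. This gap is bridged exactly by the partial trace against $\tau_A$ in Eq.~\eqref{eq:compressed-residual-effect}, because the $A$-factor is a fixed known product state independent of the learner's actions.
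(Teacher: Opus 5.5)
Your proposal is correct and follows essentially the same route as the paper: you condition on the exposed signature, realize the compressed residual POVM, apply marginal deletion, and simulate $E_B(\widehat\sigma)$ by randomized isotropic completion. Your learner that appends $\tau_A^{\otimes N}$, undoes $\mathcal U$, and runs $\mathsf M$ is exactly the paper's $\widetilde M_q=\Tr_{A^{\otimes N}}[(\tau_A^{\otimes N}\otimes I)M_q]$. Your explicit check that $J$ depends only on $(Y,Y')$, through $B_2$ and a fixed choice of $V$, so that it is a legitimate pre-fixed query independent of $\widehat\sigma$, is a point the paper leaves implicit.
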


\begin{proof}
  Fix an authentication adversary and condition on the exposed valid
  signature.  Apply the known copywise transformation from
  Theorem~\ref{thm:conditional-residual-state}.  Its input state becomes
  \begin{equation}
    \tau_A^{\otimes N}\otimes
    \rho_{S,e}^{\otimes N},
    \qquad
    \rho_{S,e}:=\ket{\psi_{S,e}}\!\bra{\psi_{S,e}},
    \label{eq:transformed-adversary-input}
  \end{equation}
  with uniform $(S,e)$.  If $\{M_q\}_q$ is the adversary's transformed POVM,
  define a POVM on the residual copies by
  \begin{equation}
    \widetilde M_q
    :=\Tr_{A^{\otimes N}}\!\left[
      (\tau_A^{\otimes N}\otimes I)M_q\right].
    \label{eq:effective-residual-POVM}
  \end{equation}
  Positivity and normalization follow from those of $\{M_q\}_q$.  Explicitly,
  $\sum_q\widetilde M_q=\Tr_{A^{\otimes N}}[(\tau_A^{\otimes N}\otimes I)I]=I$ because $\Tr\tau_A^{\otimes N}=1$.  The
  outcome distribution is unchanged for every $(S,e)$.

  For every well-formed outcome $q=(\widehat Z,\widehat h)$, apply the target
  basis change of Lemma~\ref{lem:retained-total-isotropy}, retain the first
  $d$ generators, and compute $(L,F,J,C,\eta)$.  Let $T$ be the basis of
  $\ker F$ selected by a fixed row-reduction convention.  Construct
  $C_K=CT$, an isotropic completion $C_{\mathrm{ext}}$, and then sample
  $\zeta\in\F_p^{d-k}$ uniformly.  Output the legal $d$-direction partial
  announcement
  \begin{equation}
    \bigl(C_{\mathrm{ext}},
      \gamma_{\mathrm{ext}}(\zeta)\bigr).
    \label{eq:reduction-partial-output}
  \end{equation}
  This randomized classical post-processing depends only on the exposed data,
  the target, the adversarial outcome, and private randomness; it is
  independent of $(S,e)$.  Malformed outcomes are mapped to an arbitrary fixed
  legal partial announcement.  Their authentication payoff is zero, whereas every partial-game payoff is nonnegative, so this replacement preserves the pointwise upper bound.

  By Eq.~\eqref{eq:retained-projector-domination}, authentication acceptance is
  at most the expectation of the retained projector.  By
  Eq.~\eqref{eq:compressed-effect-random-simulation}, after taking the known
  factor in state $\tau_A$, that expectation is exactly the average
  verification payoff of the randomized $d$-direction partial announcement
  in Eq.~\eqref{eq:reduction-partial-output}.  Hence, pointwise in $(S,e)$ and
  every adversarial outcome,
  \begin{equation}
    \Pr[\mathrm{accept}_{\mathrm{auth}}\mid S,e,q]
    \leq
    \mathbb E_{\zeta}
    \Pr[\mathrm{accept}_{\mathrm{part}}\mid S,e,q,\zeta].
    \label{eq:pointwise-auth-part-comparison}
  \end{equation}
  Averaging over the residual prior and all adversarial randomness produces an
  $N$-copy partial-game strategy whose success probability upper-bounds that
  of the fixed authentication adversary.  Taking the supremum over
  authentication adversaries proves
  Eq.~\eqref{eq:forgery-partial-upper-bound}.
\end{proof}

\begin{remark}[Graded prediction and verification is the relevant interface]
  \label{rem:full-label-not-enough}
  For $d<r$, authentication is governed by a genuine partial-label decision
  problem rather than by complete-label identification.  The independent learning paper treats exactly this partial game for a specified query and
  provides the finite-size and asymptotic results used here, so no
  reduction to rank-one complete-label projectors is needed.
\end{remark}

\section{Finite-copy security consequences and the copy-rate threshold}
\label{sec:finite-copy-security}

The preceding section completes the authentication-specific reduction.  The
remaining quantum task is not merely analogous to the independent learning paper's
partial-learning problem: after an explicit relabeling, it is exactly the same
finite symmetric-matrix stabilizer graded-verification game, formulated as the symmetric-matrix graded-verification task in the companion paper.  We first record this operational
identity, then import the finite rank-truncated comparison of the independent learning paper and its
subcritical converse.  For the opposite direction, we give an explicit
adversary that identifies the complete residual stabilizer label and
reconstructs the honest target signature.  Together these arguments determine
the forgery for the specified target copy rate away from the critical line
~\cite{HayashiQuadraticStabilizerPrediction}.

\subsection{Identification with the companion symmetric-matrix graded-verification task}
\label{subsec:partial-dimension-shell-upper-bound}

A known invertible residual coordinate change sends every fixed full-column-
rank query to the canonical inclusion
\begin{equation}
  J_d:=\begin{pmatrix}I_d\\0\end{pmatrix}\in\F_p^{r\times d}.
  \label{eq:security-Jd}
\end{equation}
For later use, put
\begin{equation}
 \mathcal C_{r,d}
 :=\{C\in\F_p^{r\times d}:J_d^TC=C^TJ_d\}.
 \label{eq:partial-rank-count}
\end{equation}
The corresponding partial-label space has cardinality
\begin{equation}
 M_{r,d,p}
 :=p^{(r+1)d-d(d-1)/2}.
 \label{eq:partial-label-cardinality-security}
\end{equation}

In the companion paper, an announced ordered Pauli reference frame specifies
the requested directions intrinsically before the learner measures its copies.
Standardizing that frame sends the reference-dual directions to a fixed
full-rank query and yields the symmetric-matrix model.  The authentication
reduction is already in this standardized coordinate form: its query is fixed
by the target message through $B_2$, rather than supplied by a separate
reference-frame announcement.

\begin{proposition}[Finite operational crosswalk]
\label{prop:companion-crosswalk}
Under the substitution
\begin{equation}
 (n,m,k)_{\rm learn}=(r,d,N)_{\rm auth},
 \label{eq:companion-parameter-crosswalk}
\end{equation}
and the relabeling
\begin{align}
 A_{\rm learn}&=S_{\rm auth}, & b_{\rm learn}&=e_{\rm auth},
 &J_{\rm learn}&=J_{\rm auth},\notag\\
 C_{\rm learn}&=C_{\rm auth}, &
 \gamma_{\rm learn}&=\gamma_{\rm auth},
 \label{eq:companion-label-crosswalk}
\end{align}
the residual game of
Subsection~\ref{subsec:partial-prediction-game} is exactly the odd-prime symmetric-matrix graded prediction-and-verification game of the companion paper.  In particular,
\begin{equation}
 P_{\mathrm{part}}^{(N),*}(r,d;p)
 =S_{\mathrm{ver}}^{(N),*}(r,d;p)
 =S_{\mathrm{stab,frame,ver}}^{(N),*}(r,d;p),
 \label{eq:finite-game-value-crosswalk}
\end{equation}
where the final equality is the companion paper's finite coordinate reduction from the reference-frame task to the symmetric-matrix model.
\end{proposition}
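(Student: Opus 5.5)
The crosswalk is a definition-matching argument: no new estimate is needed, only a check that every ingredient of the residual game in Definition~\ref{def:residual-graded-game} coincides, symbol by symbol under Eqs.~\eqref{eq:companion-parameter-crosswalk}--\eqref{eq:companion-label-crosswalk}, with the companion paper's odd-prime symmetric-matrix graded prediction-and-verification game. I will go through the ingredients in the order the game is played. First, the hidden prior: $(S,e)$ uniform on $\Sym_r(\F_p)\times\F_p^r$ becomes $(A,b)$ uniform on $\Sym_n(\F_p)\times\F_p^n$ with $n=r$. The hidden state $\ket{\psi_{S,e}}$ of Eq.~\eqref{eq:residual-quadratic-state} is the companion quadratic state with the same amplitude $\omega^{\frac12z^TAz+b^Tz}$. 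This uses the fact, stated in Section~\ref{sec:preliminaries}, that both papers use the same phase-adjusted Weyl convention and the sign choice $a=-b$ of Eq.~\eqref{eq:v-b-correspondence}.

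Second, the announcement alphabet and verifier. The query $J$ is fixed before sampling and, by Lemma~\ref{lem:specified-query-game-equivalence}, may be taken to be $J_d$ without changing the optimum. The admissible charts $\mathcal C_J$ are exactly the isotropic charts of the companion model, and $\Theta_J^{\rm part}=\mathcal C_J\times\F_p^m$ with $m=d$; the count $M_{r,d,p}$ of Eq.~\eqref{eq:partial-label-cardinality-security} is a consistency check on this identification. The learner receives $k=N$ copies and applies an arbitrary collective POVM. The verifier applies, on one extra copy, the projector $\Pi^{(d)}_{J,C,\gamma}$ of Eq.~\eqref{eq:game-accept-projector}. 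This must match the companion joint-character projector, including the sign of $\gamma$ and the true label $(AJ,-J^Tb)$. Once it does, the conditional scores agree pointwise, so both sides are given by Proposition~\ref{prop:partial-rank-payoff}. Since the priors, copy counts, strategy sets and scores then agree, the averaged objectives \eqref{eq:game-average-graded-score} agree strategy by strategy, and so do their maxima. That yields $P_{\rm part}^{(N),*}=S_{\rm ver}^{(N),*}$, the first equality being also the definitional alias \eqref{eq:partial-optimal-alias}.

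The final equality, $S_{\rm ver}^{(N),*}=S^{(N),*}_{\rm stab,frame,ver}$, is imported from the companion paper's finite coordinate reduction \cite{HayashiQuadraticStabilizerPrediction}. The only thing to check is that its hypotheses hold here. An announced ordered Pauli reference frame is standardized by a known Clifford to a fixed full-rank query. That is exactly the situation in Lemma~\ref{lem:specified-query-game-equivalence}, where $U_R$ sends $(S,e)\mapsto(R^TSR,R^Te)$ and preserves the uniform prior.

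The main obstacle I expect is phase and sign bookkeeping rather than any analysis. The character must enter as $\omega^{-\gamma^Tx}$ in both papers, with the true character $-J^Te$, and the representation $x\mapsto W(Jx,Cx)$ must carry no residual cocycle phase under the isotropy condition. I would check this first by evaluating both verifiers on the true announcement, where the score must be $1$ by Corollary~\ref{cor:residual-score-one-exactness}, and on a character-shifted announcement, where it must be $0$. If the companion paper uses the opposite sign for $\gamma$, the fix is the bijection $\gamma\mapsto-\gamma$ on announcements, which leaves the optimum unchanged.
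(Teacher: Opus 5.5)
Your proposal is correct and follows essentially the same route as the paper. Both arguments match the ingredients term by term: the uniform prior on $\Sym_r(\F_p)\times\F_p^r$, the tensor-power input, the admissibility condition $J^TC=C^TJ$, the true label $(SJ,-J^Te)$, and the rank-shell payoff of Proposition~\ref{prop:partial-rank-payoff}. This gives a literal equality of finite optimizations, and the last equality is imported from the companion paper. The paper simply cites that final reduction, so your check of its hypotheses via Lemma~\ref{lem:specified-query-game-equivalence} is an optional sanity check rather than a required step.
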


\begin{proof}
Theorem~\ref{thm:conditional-residual-state} gives the same uniform
symmetric-matrix stabilizer prior and the same tensor-power input as the learning game after
$A=S$ and $b=e$.  The true queried label is $(SJ,-J^Te)$, and admissibility is
$J^TC=C^TJ$.  If $\Delta=C-SJ$ and $t=\rank\Delta$, the common compatibility
condition is
\begin{equation}
 \gamma+J^Te\in\im((C-SJ)^T).
 \label{eq:companion-compatibility-crosswalk}
\end{equation}
Proposition~\ref{prop:partial-rank-payoff} shows that both games assign payoff
$p^{-t}$ when Eq.~\eqref{eq:companion-compatibility-crosswalk} holds and zero
otherwise.  Both allow arbitrary collective POVMs on the $N$ learner copies,
including arbitrary classical postprocessing, and both evaluate acceptance on
one additional independent copy.  Thus the first equality in
Eq.~\eqref{eq:finite-game-value-crosswalk} is a literal equality of finite
optimizations.  The second follows from the companion paper's finite coordinate reduction from the reference-frame task to the standardized symmetric-matrix model
~\cite{HayashiQuadraticStabilizerPrediction}.
\end{proof}

Combining the crosswalk with
Theorem~\ref{thm:forgery-partial-reduction} gives the finite interface used in
the converse:
\begin{equation}
 P_{\mathrm{forge}}^{(N),*}(Y\to Y')
 \le S_{\mathrm{ver}}^{(N),*}(r,d;p).
 \label{eq:forgery-to-companion-graded}
\end{equation}
The finite equivalence between the companion reference-frame and symmetric-matrix tasks and the one-directional authentication bound are logically distinct.  The former is an exact relabeling; the latter
is the protocol-specific effect-compression and randomized-completion result
needed to cover arbitrary well-formed forged outputs.

\subsection{Finite-size rank-truncated forgery bound}
\label{subsec:finite-rank-truncated-forgery}

Let $P_{\mathrm{id}}^{(N),*}(r,d;p)$ denote the optimum probability of exactly
identifying the true partial label $(SJ_d,-J_d^Te)$ from the same $N$ residual
copies.  The independent learning paper proves that the mixed-state pretty-good measurement (PGM) is optimal
for this finite exact-identification problem and gives its value by an exact
annihilator-coset formula.  For the authentication bound, only the optimum and
the following compatible-list comparison are required
~\cite{HayashiQuadraticStabilizerPrediction}.

For $0\le D\le d$, define
\begin{align}
 N_{\mathcal C}(r,d,t;p)
 &:=\left|\{B\in\mathcal C_{r,d}:\rank B=t\}\right|,
 \label{eq:admissible-rank-count-auth}\\
 L_D(r,d;p)
 &:=\sum_{t=0}^{D}N_{\mathcal C}(r,d,t;p)p^t,
 \label{eq:compatible-list-size-auth}\\
 \kappa_p^{\rm rank}
 &:=\prod_{j=1}^{\infty}(1-p^{-j}).
 \label{eq:rank-list-kappa-auth}
\end{align}
The factor $p^t$ counts the character differences compatible with a fixed
rank-$t$ chart difference.  The dimension-uniform estimate
\begin{equation}
 L_D(r,d;p)
 \le (\kappa_p^{\rm rank})^{-1}(D+1)
 p^{D(r+d+1)}
 \label{eq:compatible-list-coarse-auth}
\end{equation}
follows by dropping the linear admissibility constraint and using the standard
rectangular-rank count. Indeed, the number of $r\times d$ matrices of rank $t$ is at most $(\kappa_p^{\rm rank})^{-1}p^{t(r+d-t)}$.  After multiplication by the $p^t$ compatible character differences, the rank-$t$ contribution is at most $(\kappa_p^{\rm rank})^{-1}p^{t(r+d-t+1)}$, which is bounded by $(\kappa_p^{\rm rank})^{-1}p^{D(r+d+1)}$ for $0\le t\le D$.  Summing over $t=0,\ldots,D$ gives Eq.~\eqref{eq:compatible-list-coarse-auth}.

\begin{corollary}[General finite-copy forgery bound for a specified target]
  \label{cor:general-finite-copy-forgery-bound}
Let $Y\ne Y'$, $r=n-\ell$, and $d=\rank(Y-Y')$.  For every integer
$0\le D\le d$,
\begin{equation}
 P_{\mathrm{forge}}^{(N),*}(Y\to Y')
 \le
 L_D(r,d;p)P_{\mathrm{id}}^{(N),*}(r,d;p)
 +p^{-(D+1)}.
 \label{eq:general-finite-copy-forgery-bound}
\end{equation}
\end{corollary}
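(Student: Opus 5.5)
By Eq.~\eqref{eq:forgery-to-companion-graded}, it suffices to prove the same bound for the residual graded game: for every learner POVM $\mathsf M$ on $\Theta_{J_d}^{\rm part}$,
\begin{equation*}
S_{J_d}^{(N)}(\mathsf M;r,d,p)\le L_D(r,d;p)\,P_{\mathrm{id}}^{(N),*}(r,d;p)+p^{-(D+1)}.
\end{equation*}
Fix $\mathsf M$ and split the score according to the rank shell $t=\rank\Delta$ of the chart error $\Delta=C-SJ_d$. Proposition~\ref{prop:partial-rank-payoff} gives conditional score $p^{-t}$ on compatible announcements and zero otherwise. Hence every contribution with $t\ge D+1$ is at most $p^{-(D+1)}$ times the total probability of such outcomes, which is at most $p^{-(D+1)}$.

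For the low-rank part ($t\le D$), bound the conditional score by one. It then suffices to control the probability that the announcement $(C,\gamma)$ lies in the \emph{compatible list}
\begin{equation*}
\mathcal L_D(S,e):=\{(C,\gamma):\rank(C-SJ_d)\le D,\ \gamma+J_d^Te\in\im((C-SJ_d)^T)\}.
\end{equation*}
The list is translation invariant. Writing $C=SJ_d+B$ with $B\in\mathcal C_{r,d}$ (note $SJ_d\in\mathcal C_{r,d}$ since $S$ is symmetric, so $B$ is admissible) and $\gamma=-J_d^Te+g$ with $g\in\im(B^T)$, the list is the true partial label shifted by a fixed set $\Omega_D:=\{(B,g):B\in\mathcal C_{r,d},\ \rank B\le D,\ g\in\im B^T\}$. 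Its cardinality is $\sum_{t\le D}N_{\mathcal C}(r,d,t;p)p^t=L_D(r,d;p)$, and $\Omega_D$ is independent of $(S,e)$.

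The identification strategy is the standard list-to-guess conversion. From $\mathsf M$, define a new learner that measures $\mathsf M$ to obtain $(C,\gamma)$, samples $(B,g)$ uniformly from $\Omega_D$, and outputs $(C-B,\gamma-g)$ as its guess for $(SJ_d,-J_d^Te)$. This guess is a legal announcement because $\mathcal C_{r,d}$ is a linear space. Whenever $(C,\gamma)\in\mathcal L_D(S,e)$, exactly one element of $\Omega_D$ yields the true label. The new learner therefore succeeds with probability at least $\Pr[(C,\gamma)\in\mathcal L_D(S,e)]/L_D$. Since its success probability is at most $P_{\mathrm{id}}^{(N),*}(r,d;p)$, we get $\Pr[\text{list}]\le L_D\,P_{\mathrm{id}}^{(N),*}$. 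Adding the two parts and optimizing over $\mathsf M$ gives Eq.~\eqref{eq:general-finite-copy-forgery-bound}.

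The main point requiring care is the bijective shift parametrization: checking that the compatibility condition depends only on the differences $(B,g)$, so that the count is exactly $L_D$ uniformly in $(S,e)$. Everything else is bookkeeping with Proposition~\ref{prop:partial-rank-payoff} and Theorem~\ref{thm:forgery-partial-reduction}.
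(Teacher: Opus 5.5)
Your proof is correct and takes the paper's route. You reduce via Eq.~\eqref{eq:forgery-to-companion-graded} to the graded game, bound the rank-$\ge D+1$ tail by $p^{-(D+1)}$, and convert the compatible rank-$\le D$ neighborhood into an exact-label decoder with list size $L_D(r,d;p)$; the paper imports this last comparison from the companion work, while your uniform random shift over $\Omega_D$ proves it directly, checking that $\Omega_D$ is independent of $(S,e)$ and has cardinality exactly $L_D(r,d;p)$.
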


\begin{proof}
Equation~\eqref{eq:forgery-to-companion-graded} upper-bounds the forgery
probability by the graded optimum of the independent learning paper.  The companion paper's theorem ``Rank-truncated comparison of graded score with exact identification'' states, under $(n,m,k)=(r,d,N)$, that
\begin{equation}
 S_{\mathrm{ver}}^{(N),*}(r,d;p)
 \le L_D(r,d;p)P_{\mathrm{id}}^{(N),*}(r,d;p)+p^{-(D+1)}.
 \label{eq:companion-rank-truncated-import}
\end{equation}
Its proof converts the probability of the compatible rank-$D$ neighborhood into an exact-label decoder with list size $L_D$, while every outcome outside that neighborhood has score at most $p^{-(D+1)}$.  Combining this imported inequality with Eq.~\eqref{eq:forgery-to-companion-graded} gives Eq.~\eqref{eq:general-finite-copy-forgery-bound}~\cite{HayashiQuadraticStabilizerPrediction}.
\end{proof}

\begin{remark}[Division of proof responsibilities]
  \label{rem:partial-count-implicit}
The present manuscript proves the conditional residual ensemble, target-rank
identity, and authentication-specific comparison
$P_{\mathrm{forge}}\le S_{\mathrm{ver}}$.  The independent learning paper supplies
the finite exact-identification solution, compatible-list count, and
rank-truncated conversion from graded verification to exact identification.
The present proof does not use a dimension-shell relaxation or a
restricted-rank linear exponent.
\end{remark}

\subsection{Subcritical security for a specified target}
\label{subsec:partial-targets}

The exact-identification converse of the independent learning paper uses the positive function
\begin{equation}
 J(\alpha,\beta):=
 \begin{cases}
 \beta(1-\alpha-\beta/2),&0\le\alpha\le1-\beta,\\[1mm]
 (1-\alpha)^2/2,&1-\beta\le\alpha<1.
 \end{cases}
 \label{eq:authentication-J-function}
\end{equation}

\begin{table*}[t]
\caption{Finite-copy comparison and copy-rate quantities.  The authentication
parameters $(r,d,N)$ correspond to the learning ambient dimension, query
dimension, and learner-copy count.  The standard truncation
$D_r=\lfloor\sqrt r\rfloor$ is used only in the subcritical finite-size
estimate.}
\label{tab:finite-copy-comparison-quantities}
\centering
\renewcommand{\arraystretch}{1.15}
\begin{tabular}{@{}p{0.17\textwidth}p{0.25\textwidth}p{0.51\textwidth}@{}}
\hline\hline
Symbol & Definition or range & Role in the finite-copy analysis \\
\hline
$r$ & $n-\ell$ & Residual dimension after conditioning on one valid signature. \\
$d$ & $\rank(Y-Y')$, $1\le d\le r$ & Number of new residual target directions; in the linear regime, $d/r\to\beta\in(0,1]$. \\
$N$ & nonnegative integer & Number of residual public-key copies available to the adversary; the verifier uses one additional independent copy. \\
$J_d$ & $(I_d;0)\in\F_p^{r\times d}$ & Canonical full-column-rank query, obtained from any fixed query by a known invertible residual coordinate change. \\
$\mathcal C_{r,d}$ & $\{C:J_d^TC=C^TJ_d\}$ & Admissible restricted charts, equivalently the announcements defining commuting queried Weyl families. \\
$S_{\mathrm{ver}}^{(N),*}(r,d;p)$ & optimized graded score & Optimal graded score in the companion symmetric-matrix model; by the companion paper's finite coordinate reduction, it also equals the reference-relative graded score and $P_{\mathrm{part}}^{(N),*}(r,d;p)$. \\
$P_{\mathrm{id}}^{(N),*}(r,d;p)$ & optimized exact success probability & Probability of exactly identifying $(SJ_d,-J_d^Te)$ from the $N$ residual copies. \\
$D$ & integer, $0\le D\le d$ & Rank-truncation parameter separating low-rank compatible announcements from the payoff tail. \\
$N_{\mathcal C}(r,d,t;p)$ & $|\{B\in\mathcal C_{r,d}:\rank B=t\}|$ & Number of admissible restricted-chart differences in rank shell $t$. \\
$L_D(r,d;p)$ & $\sum_{t=0}^{D}N_{\mathcal C}(r,d,t;p)p^t$ & Maximum compatible list size through rank $D$; the factor $p^t$ counts compatible character differences. \\
$J(\alpha,\beta)$ & Eq.~\eqref{eq:authentication-J-function} & Positive quadratic rate supplied by the exact-identification converse of the independent learning paper for $0\le\alpha<1$ and $0<\beta\le1$. \\
$D_r$ & $\lfloor\sqrt r\rfloor$ & Standard truncation yielding an $o(r^2)$ list-size contribution and the tail $p^{-\sqrt r+O(1)}$. \\
$R_{\mathrm{auth,forge}}(p,\beta,\epsilon)$ & Eq.~\eqref{eq:authentication-forgery-rate-definition} & Minimum normalized adversarial copy rate for average forgery success for the specified target bounded asymptotically below by $\epsilon$. \\
$\alpha=1$ & copy-rate boundary & First-order threshold; finite-offset and other higher-order behavior are not specified. \\
\hline\hline
\end{tabular}
\end{table*}

\begin{corollary}[Vanishing forgery for the specified target below copy rate one]
  \label{cor:partial-target-asymptotic}
Fix an odd prime $p$.  Consider any sequence of protocol instances with
$r=n-\ell\to\infty$ such that
\begin{equation}
 d=\rank(Y-Y')=\beta r+o(r),
 \qquad
 N=\alpha r+o(r),
 \qquad
 0<\beta\le1,\quad0\le\alpha<1.
 \label{eq:authentication-partial-scaling}
\end{equation}
Then
\begin{equation}
 P_{\mathrm{forge}}^{(N),*}(Y\to Y')\longrightarrow0.
 \label{eq:authentication-subcritical-forgery}
\end{equation}
More precisely, with $D_r=\lfloor\sqrt r\rfloor$,
\begin{equation}
 P_{\mathrm{forge}}^{(N),*}(Y\to Y')
 \le
 p^{-J(\alpha,\beta)r^2+o(r^2)}
 +p^{-\sqrt r+O(1)}.
 \label{eq:authentication-subcritical-bound}
\end{equation}
\end{corollary}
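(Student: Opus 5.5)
The plan is to apply Corollary~\ref{cor:general-finite-copy-forgery-bound} with the truncation $D=D_r=\lfloor\sqrt r\rfloor$. Since $d=\beta r+o(r)$ with $\beta>0$, we have $D_r\le d$ for all large $r$, so the choice is admissible. The corollary then gives
\[
P_{\mathrm{forge}}^{(N),*}(Y\to Y')\le L_{D_r}(r,d;p)\,P_{\mathrm{id}}^{(N),*}(r,d;p)+p^{-(D_r+1)}.
\]
The tail term is $p^{-\lfloor\sqrt r\rfloor-1}=p^{-\sqrt r+O(1)}$, which is already the second term of Eq.~\eqref{eq:authentication-subcritical-bound}. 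What remains is to control the product of the list size and the exact-identification probability.

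For the list size, I use the coarse estimate of Eq.~\eqref{eq:compatible-list-coarse-auth}:
\[
L_{D_r}\le(\kappa_p^{\rm rank})^{-1}(D_r+1)\,p^{D_r(r+d+1)}.
\]
Here $\kappa_p^{\rm rank}>0$ is a constant depending only on $p$, $D_r+1=O(\sqrt r)$, and $d\le r$, so the exponent is at most $\sqrt r(2r+1)=O(r^{3/2})=o(r^2)$. Hence $L_{D_r}=p^{o(r^2)}$. This is exactly why $D_r=\lfloor\sqrt r\rfloor$ is chosen: the list-size penalty is sub-quadratic in the exponent, while the tail still decays.

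The substantive ingredient, and the step I expect to be the main obstacle, is the exact-identification converse. I would import it from the independent learning paper~\cite{HayashiQuadraticStabilizerPrediction} under the crosswalk $(n,m,k)_{\rm learn}=(r,d,N)_{\rm auth}$ of Proposition~\ref{prop:companion-crosswalk}. It states that if $m/n\to\beta\in(0,1]$ and $k/n\to\alpha<1$, then
\[
P_{\mathrm{id}}^{(N),*}(r,d;p)\le p^{-J(\alpha,\beta)r^2+o(r^2)},
\]
with $J$ as in Eq.~\eqref{eq:authentication-J-function}. The one thing to check when importing is that the $o(r)$ perturbations in $d$ and $N$ only change the exponent by $o(r^2)$. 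To see this heuristically: the PGM value is governed by comparing the number of labels, roughly $p^{rd-d^2/2}$ from the chart coordinates, with the rank of the $N$-copy symmetric-subspace support. Both enter the exponent through continuous functions of $(d/r,N/r)$ multiplied by $r^2$. Alternatively, monotonicity gives a direct argument: $P_{\mathrm{id}}$ is nondecreasing in $N$, so it suffices to replace $N$ by $\lceil(\alpha+\epsilon)r\rceil$ and then use the continuity of $J$ in $\alpha$.

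It also matters that $J(\alpha,\beta)>0$ on $0\le\alpha<1$, $0<\beta\le1$. For $\alpha\le1-\beta$ we have $1-\alpha-\beta/2\ge\beta/2>0$, and for $1-\beta\le\alpha<1$ the value $(1-\alpha)^2/2$ is positive.

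Multiplying the two estimates, the first term becomes $p^{o(r^2)}\cdot p^{-J r^2+o(r^2)}=p^{-J(\alpha,\beta)r^2+o(r^2)}$. This gives Eq.~\eqref{eq:authentication-subcritical-bound}. Because $J>0$, both terms tend to zero, which yields Eq.~\eqref{eq:authentication-subcritical-forgery}.
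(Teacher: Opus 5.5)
Your proposal is correct and matches the paper's proof essentially step by step. You take $D_r=\lfloor\sqrt r\rfloor\le d$ in Corollary~\ref{cor:general-finite-copy-forgery-bound}, absorb $\log_p L_{D_r}=O(r^{3/2})=o(r^2)$ via Eq.~\eqref{eq:compatible-list-coarse-auth}, import the exact-identification converse $P_{\mathrm{id}}^{(N),*}(r,d;p)\le p^{-J(\alpha,\beta)r^2+o(r^2)}$, and use $J(\alpha,\beta)>0$. Your extra discussion of $o(r)$ perturbations is unnecessary, because the imported converse is already stated for sequences with $d/r\to\beta$ and $N/r\to\alpha$; note also that your monotonicity remark would only absorb the perturbation in $N$, not the one in $d$.
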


\begin{proof}
Because $d/r\to\beta>0$, one has $D_r\le d$ for all sufficiently large $r$.
The exact-identification converse of the independent learning paper gives
\[
 P_{\mathrm{id}}^{(N),*}(r,d;p)
 \le p^{-J(\alpha,\beta)r^2+o(r^2)}.
\]
Moreover, Eq.~\eqref{eq:compatible-list-coarse-auth} gives
$\log_pL_{D_r}(r,d;p)=O(r^{3/2})=o(r^2)$.  Thus the list-size factor is absorbed into a new $o(r^2)$ term in the exponent.  In addition, $J(\alpha,\beta)>0$ for every $0\le\alpha<1$ and $0<\beta\le1$ by either branch of Eq.~\eqref{eq:authentication-J-function}.  Substitution into
Corollary~\ref{cor:general-finite-copy-forgery-bound} yields
Eq.~\eqref{eq:authentication-subcritical-bound}, and both terms tend to zero
~\cite{HayashiQuadraticStabilizerPrediction}.
\end{proof}

The first term in Eq.~\eqref{eq:authentication-subcritical-bound} comes from
the quadratic exact-identification converse.  The second is the tail of the
graded payoff.  Thus the displayed estimate does not claim a matching
quadratic exponent for the graded score or for authentication forgery; its
role is to prove the zero limit throughout the full region $\alpha<1$.

\subsection{Complete residual identification and direct forgery}
\label{subsec:complete-residual-direct-forgery}

The converse uses the one-directional comparison
$P_{\mathrm{forge}}\le S_{\mathrm{ver}}$.  Achievability is established by a
different, explicit protocol-level construction.  The residual ensemble is
the uniform symmetric-matrix stabilizer ensemble, so the applicable direct input is the
complete symmetric-matrix stabilizer identification theorem of the independent learning paper
~\cite{HayashiQuadraticStabilizerPrediction}.

\begin{proposition}[Complete residual identification yields an honest target signature]
\label{prop:complete-residual-yields-forgery}
Suppose that, after one exposed valid signature on $Y$, the adversary correctly
identifies the complete residual label $(S,e)$.  Then, using only the exposed
classical data and the specified target $Y'$, the adversary can compute the honest
signature $(Z_{Y'},h_{Y'})$.  Consequently,
\begin{equation}
 P_{\mathrm{forge}}^{(N),*}(Y\to Y')
 \ge P_{\mathrm{id}}^{(N),*}(r,r;p).
 \label{eq:complete-id-lower-bounds-forgery}
\end{equation}
\end{proposition}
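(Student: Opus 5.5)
The plan is to reconstruct the full transformed key from the exposed blocks and the identified residual label, and then invert the known coordinate change. By Lemma~\ref{lem:signature-exposes-blocks}, the exposed data $(Y,Z_Y,h_Y)$ determine $(P,Q,c)$ through $R_Y^TZ_Y=\binom{P}{Q^T}$ and $c=-h_Y$. Here $R_Y=[A_Y\ C_Y]$ is fixed by the deterministic convention and is therefore known. Once $(S,e)$ is identified, the adversary assembles
\[
\widetilde X=\begin{pmatrix}P&Q\\Q^T&S\end{pmatrix},\qquad \widetilde b=\binom{c}{e},
\]
and recovers the original key as $X=R_Y^{-T}\widetilde XR_Y^{-1}$ and $b=R_Y^{-T}\widetilde b$. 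It then computes $Z_{Y'}=XA_{Y'}$ and $h_{Y'}=-A_{Y'}^Tb$. A cleaner route avoids recovering $X$ explicitly: with $B=R_Y^{-1}A_{Y'}$ as in Eq.~\eqref{eq:target-B-definition}, one has $R_Y^TZ_{Y'}=\widetilde XB$ and $h_{Y'}=-B^T\widetilde b$. By Theorem~\ref{thm:perfect-correctness}, this honest signature is accepted with probability one on the verifier's independent copy.

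For the inequality, I would build an explicit authentication adversary from an optimal complete-identification learner. First apply the known copywise unitary $D_{P,Q,c}U_{R_Y}$ of Theorem~\ref{thm:conditional-residual-state} to each of the $N$ copies and discard the $\ket{+}^{\otimes\ell}$ factors. This leaves exactly $\rho_{S,e}^{\otimes N}$, where $(S,e)$ is uniform conditioned on the signature. Next run the optimal POVM for identifying the full label. This is the $d=r$ case with $J=I_r$: since $I_r$ has full column rank, the partial label $(SI_r,-I_r^Te)$ is equivalent to $(S,e)$, and Lemma~\ref{lem:specified-query-game-equivalence} lets us use this query. Finally, feed the resulting estimate $(\widehat S,\widehat e)$ into the reconstruction map above.

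Whenever $(\widehat S,\widehat e)=(S,e)$, the output equals $(Z_{Y'},h_{Y'})$ and is accepted with certainty. On incorrect estimates I lower-bound the acceptance probability by zero. For each fixed exposed signature, the conditional success is therefore at least the conditional identification probability. Because the conditional residual prior is the same uniform prior for every admissible signature value, averaging gives $P_{\mathrm{forge}}\ge P_{\mathrm{id}}^{(N),*}(r,r;p)$.

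There is no real obstacle, only bookkeeping. The two points to check are that the identification optimum is defined for the uniform prior matching Theorem~\ref{thm:conditional-residual-state}, and that $P_{\mathrm{id}}^{(N),*}(r,r;p)$ refers to exact recovery of the whole label, so that the reconstruction is legitimate. Well-formedness of the reconstructed signature is automatic from Eq.~\eqref{eq:honest-isotropy}, because $X$ is symmetric.
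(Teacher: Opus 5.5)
Your proposal is correct and follows the paper's proof: rebuild $(\widetilde X,\widetilde b)$ from the exposed blocks $(P,Q,c)$ and the identified $(S,e)$, invert $R_Y$, output the honest target signature, which is accepted with certainty, and lower-bound the forgery probability by running an optimal complete-label identifier on the transformed residual copies. One small correction: the appeal to Lemma~\ref{lem:specified-query-game-equivalence} is unnecessary and slightly misplaced. That lemma concerns the graded score rather than exact identification, and for $d=r$ the canonical query $J_r$ is already $I_r$.
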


\begin{proof}
The exposed signature determines $R_Y$ and the transformed blocks $P,Q,c$.
Together with the recovered residual pair $(S,e)$, the adversary forms
\begin{equation}
 \widetilde X=
 \begin{pmatrix}P&Q\\Q^T&S\end{pmatrix},
 \qquad
 \widetilde b=\binom{c}{e}.
 \label{eq:reconstructed-transformed-key}
\end{equation}
By the definitions of the signed-message coordinates,
\begin{equation}
 X=R_Y^{-T}\widetilde X R_Y^{-1},
 \qquad
 b=R_Y^{-T}\widetilde b.
 \label{eq:reconstructed-original-key}
\end{equation}
The adversary can therefore compute
\begin{equation}
 Z_{Y'}=XA_{Y'},
 \qquad
 h_{Y'}=-A_{Y'}^Tb,
 \label{eq:reconstructed-honest-target-signature}
\end{equation}
which is exactly the honest target signature.  Conditional on correct
residual identification, ideal correctness gives verifier acceptance with
probability one.  Running an optimal complete symmetric-matrix stabilizer identifier on the
$N$ residual copies proves Eq.~\eqref{eq:complete-id-lower-bounds-forgery}.
\end{proof}

\begin{corollary}[Supercritical direct forgery]
\label{cor:supercritical-direct-forgery}
For every sequence of specified targets, if
\begin{equation}
 N-r\longrightarrow+\infty,
 \label{eq:authentication-direct-strong-condition}
\end{equation}
then
\begin{equation}
 P_{\mathrm{forge}}^{(N),*}(Y\to Y')\longrightarrow1.
 \label{eq:authentication-supercritical-forgery}
\end{equation}
In particular, Eq.~\eqref{eq:authentication-supercritical-forgery} holds when
$N=\alpha r+o(r)$ with any fixed $\alpha>1$.
\end{corollary}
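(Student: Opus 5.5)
The argument combines Proposition~\ref{prop:complete-residual-yields-forgery} with a lower bound on $P_{\mathrm{id}}^{(N),*}(r,r;p)$. That lower bound can come from the complete symmetric-matrix stabilizer identification theorem of the independent learning paper. It can also be obtained self-containedly from an explicit learner, which is what I would do to keep the supercritical side transparent.

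By Proposition~\ref{prop:complete-residual-yields-forgery}, it suffices to show $P_{\mathrm{id}}^{(N),*}(r,r;p)\ge 1-O(p^{-(N-r)})$. Exact identification of a quadratic stabilizer state from copies is the standard Bell-difference or stabilizer-sampling problem. For each copy, measure in a way that yields a uniformly random element of the graph Lagrangian $L_S$ together with its eigenvalue character. The most concrete version is a Fourier-basis measurement on a pair of copies, or a single-copy measurement in a random Clifford-rotated basis.

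My concrete first choice is the following. Apply the generalized Fourier transform to a copy of $\ket{\psi_{S,e}}$ after a known random shift. Equivalently, measure the Weyl label of a random element of $L_S$ along with its eigenvalue. Each such sample gives a vector $(u,Su)$ with $u$ uniform in $\F_p^r$, plus the value $-e^Tu$. This uses $O(1)$ copies per sample, which only changes constants, so I would refine it so that each sample consumes one copy asymptotically.

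Once the collected $u$'s span $\F_p^r$, $S$ and $e$ are determined by solving linear systems. $N$ uniform vectors in $\F_p^r$ span the space with probability at least $1-\sum_{j>N-r}p^{-j}\ge 1-p^{-(N-r)}/(1-p^{-1})$, which tends to one when $N-r\to\infty$.

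The main obstacle is achieving one sample per copy, since a naive Bell-pair method spends two copies per sample and would only give rate $2$. If the single-copy random-$u$ measurement is not straightforward, I would instead cite the complete identification theorem of the independent learning paper, as the preceding subsection indicates. That theorem (the PGM value under $(n,m,k)=(r,r,N)$) gives $P_{\mathrm{id}}^{(N),*}(r,r;p)\to1$ whenever $N-r\to\infty$. Combining this with Eq.~\eqref{eq:complete-id-lower-bounds-forgery} and the trivial bound $P_{\mathrm{forge}}\le1$ yields the claim. The final sentence follows because $N=\alpha r+o(r)$ with $\alpha>1$ implies $N-r=(\alpha-1)r+o(r)\to\infty$.
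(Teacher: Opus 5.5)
Your operative argument, the fallback, is the paper's proof. Proposition~\ref{prop:complete-residual-yields-forgery} gives $P_{\mathrm{forge}}^{(N),*}(Y\to Y')\ge P_{\mathrm{id}}^{(N),*}(r,r;p)$. The complete identification theorem of the learning paper gives $P_{\mathrm{id}}^{(N),*}(r,r;p)\to1$ when $N-r\to\infty$, and the trivial bound one finishes the argument. Your reduction of the $\alpha>1$ case to $N-r=(\alpha-1)r+o(r)\to\infty$ is also fine. The self-contained route you list first, however, should be dropped rather than offered as the preferred option, because it cannot work as described.

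First, no single-copy POVM $\{E_{u,z,c}\}$ can output $(u,Su,-e^Tu)$ with $u$ uniform for every $(S,e)$. For fixed $S$, the states $\ket{\psi_{S,e}}$ form an orthonormal basis, so the prior-averaged state is $I/p^r$. For $u\neq0$, the pair $(Su,e^Tu)$ is uniform on $\F_p^{r+1}$ under the prior. Such a POVM would therefore satisfy $\Tr E_{u,z,c}=p^{-(r+1)}$. At the same time it would have to satisfy $\langle\psi_{S,e}|E_{u,z,c}|\psi_{S,e}\rangle=p^{-r}$ for every consistent label, which exceeds the trace of a positive operator. Concretely, a Fourier measurement after a known shift reveals only $e$ modulo $\im S$, which is nothing when $S$ is invertible.

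Second, your remark that spending $O(1)$ copies per sample ``only changes constants'' is wrong for this statement, because the constant is its entire content. A scheme spending $c>1$ copies per independent sample, such as two-copy Bell sampling, succeeds only when $N/c-r\to\infty$. That covers neither $N-r\to\infty$ nor any $\alpha\in(1,c]$. Reaching rate one requires a genuinely collective measurement across all $N$ copies, which is what the PGM-based identification theorem supplies. So the citation is not a convenience that a simple explicit learner could replace; it is the essential input. Your spanning-probability estimate is correct, but it is moot without a valid one-copy-per-sample sampler.
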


\begin{proof}
The complete symmetric-matrix stabilizer identification theorem of the independent learning paper gives
$P_{\mathrm{id}}^{(N),*}(r,r;p)\to1$ whenever $N-r\to+\infty$.
Apply Proposition~\ref{prop:complete-residual-yields-forgery} and use the
trivial upper bound one~\cite{HayashiQuadraticStabilizerPrediction}.
\end{proof}

Unlike the subcritical converse, this direct statement does not require a
positive limiting value of $d/r$.  Once the complete residual key is learned,
the honest signature can be synthesized for every specified target.

\subsection{Authentication copy rate for a specified target}
\label{subsec:authentication-copy-rate}

For $0<\beta\le1$ and $0<\epsilon<1$, define
\begin{equation}
\begin{aligned}
 &R_{\mathrm{auth,forge}}(p,\beta,\epsilon)\\
 &\quad:=\inf\Bigl\{\alpha\ge0:\
   \text{there exists a sequence of protocol instances}\\
 &\qquad \text{protocol instances such that}\\
 &\qquad r_j\to\infty,\qquad
   \frac{d_j}{r_j}\to\beta,\qquad
   \frac{N_j}{r_j}\to\alpha,\\
 &\qquad \liminf_{j\to\infty}
   P_{\mathrm{forge}}^{(N_j),*}(Y_j\to Y_j')
   \ge\epsilon\Bigr\}.
\end{aligned}
\label{eq:authentication-forgery-rate-definition}
\end{equation}
Here the $j$th instance has parameters $(n_j,\ell_j,N_j)$, with
$r_j=n_j-\ell_j$, messages
$Y_j,Y_j'\in\mathcal M_{n_j,\ell_j}$ satisfying $Y_j'\ne Y_j$, and
$d_j=\rank(Y_j-Y_j')$.  Each target is fixed independently of the hidden key
and exposed signature, and every instance is evaluated in the
one-exposed-signature, average-key experiment of
Section~\ref{sec:security-model}.
\begin{figure}[t]
\centering
\begin{tikzpicture}[x=1.25cm,y=1cm,font=\small]
\fill[blue!10] (0,0) rectangle (3.0,1.25);
\fill[orange!13] (3.0,0) rectangle (5.6,1.25);
\fill[gray!35] (2.94,0) rectangle (3.06,1.25);
\draw[->,thick] (0,0)--(5.85,0) node[right] {$\alpha=N/r$};
\draw (3,0.08)--(3,-0.08) node[below] {$1$};
\node[align=center] at (1.48,.72) {$\alpha<1$\\$P_{\mathrm{forge}}^{(N),*}\to0$};
\node[align=center] at (4.3,.72) {$\alpha>1$\\$P_{\mathrm{forge}}^{(N),*}\to1$};
\node[align=center,font=\scriptsize] at (2.8,-.48) {$d/r\to\beta\in(0,1]$ below the threshold; $d=o(r)$ is not covered};
\end{tikzpicture}
\caption{First-order copy-rate transition for forgery of a specified target. For every fixed positive target fraction $d/r\to\beta\in(0,1]$, the optimized average forgery probability vanishes when $N/r\to\alpha<1$ and tends to one when $\alpha>1$. The diagram records the copy-rate threshold; higher-order behavior around $N=r$ is not represented. The sublinear target-rank regime $d=o(r)$ requires a separate scaling analysis.}
\label{fig:copy-rate-transition}
\end{figure}
\begin{theorem}[Authentication copy rate for a specified target]
\label{thm:authentication-copy-rate-one}
For every odd prime $p$, every $0<\beta\le1$, and every
$0<\epsilon<1$,
\begin{equation}
 R_{\mathrm{auth,forge}}(p,\beta,\epsilon)=1.
 \label{eq:authentication-copy-rate-one}
\end{equation}
More strongly, for every parameter sequence satisfying
$d/r\to\beta\in(0,1]$ and $N/r\to\alpha$,
\begin{equation}
 \begin{cases}
 P_{\mathrm{forge}}^{(N),*}(Y\to Y')\to0,&\alpha<1,\\
 P_{\mathrm{forge}}^{(N),*}(Y\to Y')\to1,&\alpha>1.
 \end{cases}
 \label{eq:authentication-partial-asymptotic}
\end{equation}
The statement determines the first-order copy-rate threshold; it does not specify higher-order behavior within the window $N=r+o(r)$.
\end{theorem}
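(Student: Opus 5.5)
The theorem follows by combining the two one-sided asymptotic results already established, so I will first prove the pointwise dichotomy in Eq.~\eqref{eq:authentication-partial-asymptotic} and then read off the rate.

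\emph{Below the threshold.} Take a sequence with $d/r\to\beta\in(0,1]$ and $N/r\to\alpha<1$. Writing $d=\beta r+o(r)$ and $N=\alpha r+o(r)$ places us exactly in the hypotheses of Corollary~\ref{cor:partial-target-asymptotic}. That corollary gives $P_{\mathrm{forge}}^{(N),*}(Y\to Y')\to0$, with the explicit bound \eqref{eq:authentication-subcritical-bound}; the case $\alpha=0$ is included there.

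\emph{Above the threshold.} If $N/r\to\alpha>1$, then $N-r=(\alpha-1)r+o(r)\to+\infty$ because $r\to\infty$. Corollary~\ref{cor:supercritical-direct-forgery} then gives forgery probability tending to one. This direction needs no assumption on $\beta$.

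\emph{The rate.} Fix $\beta\in(0,1]$ and $\epsilon\in(0,1)$.

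For the upper bound $R\le1$, take any $\alpha>1$. I need an admissible sequence, so I choose $r_j=j$, $\ell_j=1$, $n_j=j+1$ and $N_j=\lceil\alpha j\rceil$. The message space $\mathcal M_{n_j,1}=\F_p^{j\times1}$ has rank distance at most $1$, which would force $d/r\to0$. To get $d/r\to\beta$ I need $\ell\ge d$, so I instead take $\ell_j=r_j=j$, $n_j=2j$, and pick $Y_j,Y_j'\in\F_p^{j\times j}$ with $\rank(Y_j-Y_j')=\lceil\beta j\rceil$. Such targets exist since $\lceil\beta j\rceil\le j$, and they are fixed independently of the key. By the supercritical step the liminf of the forgery probability is $1\ge\epsilon$, so $\alpha$ lies in the set and the infimum is at most $\alpha$. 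Letting $\alpha\downarrow1$ gives $R\le1$.

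For the lower bound $R\ge1$, suppose some $\alpha<1$ lies in the set. Its witnessing sequence satisfies $d_j/r_j\to\beta>0$ and $N_j/r_j\to\alpha$, so the subcritical step forces the forgery probability to $0$. This contradicts $\liminf\ge\epsilon>0$. Hence every admissible $\alpha$ is at least $1$.

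The only step requiring care is constructing admissible sequences that realize a prescribed $\beta$, in particular respecting the constraint $d\le\min(\ell,r)$. The rest is direct assembly of Corollaries~\ref{cor:partial-target-asymptotic} and~\ref{cor:supercritical-direct-forgery}. The value at $\alpha=1$ itself is not needed, since the infimum is approached from above.
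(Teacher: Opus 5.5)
Your proposal is correct and follows the paper's proof: both the sequence-wise dichotomy and the two rate inequalities come directly from Corollary~\ref{cor:partial-target-asymptotic} and Corollary~\ref{cor:supercritical-direct-forgery}. The differences are cosmetic. For $R_{\mathrm{auth,forge}}\le1$, the paper shows that $\alpha=1$ itself lies in the defining set by taking $N=r+\lceil\sqrt r\rceil$, whereas you place every $\alpha>1$ in the set and let $\alpha\downarrow1$. You also explicitly construct an admissible sequence with $\ell_j=r_j=j$ and $\rank(Y_j-Y_j')=\lceil\beta j\rceil$, which the paper leaves implicit.
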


\begin{proof}
Corollary~\ref{cor:partial-target-asymptotic} proves that every sequence with
limiting copy rate below one has forgery probability tending to zero.  Hence
$R_{\mathrm{auth,forge}}\ge1$ for every $\epsilon>0$.
For the reverse inequality, choose any admissible sequence with specified targets with
$d/r\to\beta$ and take, for example,
$N=r+\lceil\sqrt r\rceil$.  Then $N/r\to1$ and $N-r\to+\infty$, so
Corollary~\ref{cor:supercritical-direct-forgery} gives forgery probability
tending to one.  Thus $R_{\mathrm{auth,forge}}\le1$.  The stronger
sequence-wise statements are exactly
Corollaries~\ref{cor:partial-target-asymptotic} and
\ref{cor:supercritical-direct-forgery}.
\end{proof}

\begin{remark}[Scope of the threshold]
  \label{rem:partial-asymptotic-interpretation}
The theorem concerns a uniform secret-key prior, one exposed valid signature,
a separately specified target, arbitrary collective processing of the finite public-key
supply, and ideal verification on one independent copy.  It does not establish
adaptive-target, existential-forgery, worst-case-key, noisy, or composable
security.  Within the first-order critical class $N/r\to1$, the regime $N-r\to+\infty$ is already covered by the direct theorem.  The finer bounded-offset regime $N=r+O(1)$ and the remaining negative or intermediate offset regimes are undetermined, as is the
sublinear target-rank regime $d=o(r)$.
\end{remark}


\section{Multiple signatures and limits of key reuse}
\label{sec:key-reuse}

The finite-copy bounds above address quantum information extracted from a
bounded supply of public-key states after one valid signature has been
exposed.  Reusing the same secret key for several messages creates a distinct
classical leakage mechanism.  Each signature reveals the action of the secret
symmetric matrix and the secret character on an additional message subspace.
This section determines exactly what can be synthesized from several valid
signatures and when the full secret key is recovered.

\subsection{The observed input subspace}
\label{subsec:observed-input-subspace}

Suppose that the same secret key $(X,b)$ is used to sign messages
\begin{equation}
  Y_1,\ldots,Y_s\in\mathcal M_{n,\ell}.
  \label{eq:multiple-signed-messages}
\end{equation}
For each message, define
\begin{equation}
  A_i:=A_{Y_i}=\begin{pmatrix}I_\ell\\Y_i\end{pmatrix},
  \qquad
  Z_i:=XA_i,
  \qquad
  h_i:=-A_i^Tb.
  \label{eq:multiple-signature-blocks}
\end{equation}
Collect the input, chart-image, and character data as
\begin{align}
  A_{\mathrm{obs}}&:=[A_1\ \cdots\ A_s],
  \qquad
  Z_{\mathrm{obs}}:=[Z_1\ \cdots\ Z_s],
  \notag\\[-2pt]
  h_{\mathrm{obs}}&:=
  \begin{pmatrix}h_1&\cdots&h_s\end{pmatrix}^{T}.
  \label{eq:observed-matrices}
\end{align}
Then
\begin{equation}
  Z_{\mathrm{obs}}=XA_{\mathrm{obs}},
  \qquad
  h_{\mathrm{obs}}=-A_{\mathrm{obs}}^Tb.
  \label{eq:observed-linear-relations}
\end{equation}
Let
\begin{align}
  W_{\mathrm{obs}}&:=\operatorname{col}(A_{\mathrm{obs}})
  \subseteq\F_p^n,
  \notag\\[-2pt]
  \rho_s&:=\dim W_{\mathrm{obs}}
  =\rank A_{\mathrm{obs}}.
  \label{eq:observed-subspace-rank}
\end{align}
The signatures determine the restriction of the linear map $X$ to
$W_{\mathrm{obs}}$ and the restriction of the linear functional
$u\mapsto-b^Tu$ to the same subspace.  Table~\ref{tab:key-reuse-notation}
collects the notation used for accumulated classical leakage and recovery.

\begin{table*}[t]
\caption{Multiple-signature and key-reuse notation.}
\label{tab:key-reuse-notation}
\centering
\renewcommand{\arraystretch}{1.15}
\begin{tabular}{@{}p{0.16\textwidth}p{0.29\textwidth}p{0.48\textwidth}@{}}
\hline\hline
Symbol & Definition & Meaning \\
\hline
$s$ & positive integer & Number of exposed valid signatures under one secret key. \\
$Y_1,\ldots,Y_s$ & signed messages & Messages authenticated using the same secret key. \\
$A_i$ & $A_{Y_i}=(I_\ell;Y_i)$ & Embedded input subspace for message $Y_i$. \\
$A_{\mathrm{obs}}$ & $[A_1\ \cdots\ A_s]$ & Cumulative observed input matrix. \\
$Z_{\mathrm{obs}}$ & $[Z_1\ \cdots\ Z_s]$ & Cumulative disclosed chart images. \\
$h_{\mathrm{obs}}$ & $(h_1^T,\ldots,h_s^T)^T$ & Cumulative disclosed character values. \\
$W_{\mathrm{obs}}$ & $\operatorname{col}(A_{\mathrm{obs}})$ & Input subspace on which the chart and character are exposed. \\
$\rho_s$ & $\rank A_{\mathrm{obs}}$ & Dimension of the observed input span. \\
$\Delta_s$ & $[Y_2-Y_1\ \cdots\ Y_s-Y_1]$ & Concatenated message differences; $\rho_s=\ell+\rank\Delta_s$. \\
$A_\star$ & target message embedding & Classically forgeable when $\operatorname{col}(A_\star)\subseteq W_{\mathrm{obs}}$. \\
$T$ & $A_\star=A_{\mathrm{obs}}T$ & Coefficient matrix used to synthesize a target signature. \\
$R$ & $A_{\mathrm{obs}}R=I_n$ & Right inverse used when the observed span has full rank. \\
$X$ & $Z_{\mathrm{obs}}R$ & Recovered secret chart in the full-rank case. \\
$b$ & $-R^Th_{\mathrm{obs}}$ & Recovered secret character parameter in the full-rank case. \\
$r_s$ & $n-\rho_s=n-\ell-\rank\Delta_s$ & Residual hidden dimension after $s$ signatures. \\
\hline\hline
\end{tabular}
\end{table*}

\subsection{Exact synthesis on the observed span}
\label{subsec:span-synthesis}

\begin{proposition}[Classical synthesis criterion]
  \label{prop:classical-synthesis-criterion}
  Let $Y_\star\in\mathcal M_{n,\ell}$ be a target message and set
  $A_\star:=A_{Y_\star}$.  If
  \begin{equation}
    \operatorname{col}(A_\star)\subseteq W_{\mathrm{obs}},
    \label{eq:target-contained-observed-span}
  \end{equation}
  then an observer of the $s$ valid signatures can compute a valid signature
  for $Y_\star$ using only classical linear algebra, with no public-key copy.
\end{proposition}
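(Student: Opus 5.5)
The plan is to synthesize the target signature by pure linear algebra on the observed data. The key is the linear relations in Eq.~\eqref{eq:observed-linear-relations}: $Z_{\mathrm{obs}}=XA_{\mathrm{obs}}$ and $h_{\mathrm{obs}}=-A_{\mathrm{obs}}^Tb$.

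First, the containment hypothesis in Eq.~\eqref{eq:target-contained-observed-span} means every column of $A_\star$ is a linear combination of columns of $A_{\mathrm{obs}}$. Hence there exists a coefficient matrix $T\in\F_p^{s\ell\times\ell}$ with
\begin{equation}
  A_\star=A_{\mathrm{obs}}T .
\end{equation}
Such a $T$ can be found by Gaussian elimination from the public matrices $A_{\mathrm{obs}}$ and $A_\star$; both depend only on the messages. Any fixed deterministic rule for choosing $T$ will do, since $T$ need not be unique.

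Second, the observer outputs $(\widehat Z,\widehat h):=(Z_{\mathrm{obs}}T,\ T^Th_{\mathrm{obs}})$. Substituting the observed relations gives
\begin{equation}
  Z_{\mathrm{obs}}T=XA_{\mathrm{obs}}T=XA_\star=Z_{Y_\star}
\end{equation}
and
\begin{equation}
  T^Th_{\mathrm{obs}}=-T^TA_{\mathrm{obs}}^Tb=-(A_{\mathrm{obs}}T)^Tb=-A_\star^Tb=h_{Y_\star}.
\end{equation}
So the output equals the honest signature $\mathsf{Sign}_{\mathsf{sk}}(Y_\star)$ exactly. This holds for every choice of $T$, because the right-hand sides depend only on $A_\star$. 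Consequently no public-key copy is consumed.

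Finally, I will invoke Theorem~\ref{thm:perfect-correctness}. The honest signature passes the well-formedness checks by Eq.~\eqref{eq:honest-isotropy} and is accepted with probability one on an independent verifier copy, so the synthesized signature is valid. There is no real obstacle. The only point to check is the bookkeeping of how $h_{\mathrm{obs}}$ is stacked: it must be a column vector of length $s\ell$, so that $T^Th_{\mathrm{obs}}$ is well defined and matches $-A_{\mathrm{obs}}^Tb$ blockwise.
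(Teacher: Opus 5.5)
Your proposal is correct and matches the paper's proof essentially step for step: you choose $T$ with $A_\star=A_{\mathrm{obs}}T$, output $(Z_{\mathrm{obs}}T,\,T^Th_{\mathrm{obs}})$, identify it with the honest signature via Eq.~\eqref{eq:observed-linear-relations}, and conclude by perfect correctness. Your stacking remark is also right: $h_{\mathrm{obs}}$ has to be read as the length-$s\ell$ column $(h_1^T,\ldots,h_s^T)^T$, as in Table~\ref{tab:key-reuse-notation}, for $T^Th_{\mathrm{obs}}$ to be well defined.
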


\begin{proof}
  Condition~\eqref{eq:target-contained-observed-span} is equivalent to the
  existence of a matrix
  \begin{equation}
    T\in\F_p^{s\ell\times\ell}
    \quad\text{such that}\quad
    A_\star=A_{\mathrm{obs}}T.
    \label{eq:target-coefficient-matrix}
  \end{equation}
  Define
  \begin{equation}
    Z_\star:=Z_{\mathrm{obs}}T,
    \qquad
    h_\star:=T^Th_{\mathrm{obs}}.
    \label{eq:synthesized-signature}
  \end{equation}
  Using Eq.~\eqref{eq:observed-linear-relations},
  \begin{equation}
    Z_\star=XA_{\mathrm{obs}}T=XA_\star,
    \qquad
    h_\star=-T^TA_{\mathrm{obs}}^Tb=-A_\star^Tb.
    \label{eq:synthesized-signature-correctness}
  \end{equation}
  Thus $(Z_\star,h_\star)$ is exactly the honest signature generated with the
  secret key, and perfect correctness implies acceptance with probability one.
\end{proof}

\begin{remark}[The leakage criterion is a span condition]
  \label{rem:span-condition-not-message-count}
  The decisive quantity is $\rho_s=\rank A_{\mathrm{obs}}$, not the number
  $s$ of signatures by itself.  Repeated or linearly redundant messages need
  not enlarge $W_{\mathrm{obs}}$, whereas carefully selected messages can add
  several new directions.
\end{remark}

\subsection{Rank growth for graph messages}
\label{subsec:rank-growth-graph-messages}

Fix the first signed message and define
\begin{equation}
  \Delta_s
  :=[Y_2-Y_1\ \cdots\ Y_s-Y_1]
  \in\F_p^{(n-\ell)\times(s-1)\ell}.
  \label{eq:message-difference-concatenation}
\end{equation}

\begin{lemma}[Observed-rank identity]
  \label{lem:observed-rank-identity}
  The accumulated input matrix satisfies
  \begin{equation}
    \rho_s
    =\rank A_{\mathrm{obs}}
    =\ell+\rank\Delta_s.
    \label{eq:observed-rank-formula}
  \end{equation}
  Consequently,
  \begin{equation}
    \ell\le\rho_s\le
    \min\{n,s\ell\}.
    \label{eq:observed-rank-bounds}
  \end{equation}
\end{lemma}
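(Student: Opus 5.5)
To prove Lemma~\ref{lem:observed-rank-identity}, I will reduce $A_{\mathrm{obs}}$ by column operations, which preserve rank. Write $A_i=(I_\ell;Y_i)$. For $i=2,\dots,s$, subtract the block $A_1$ from the block $A_i$. This gives
\begin{equation*}
  [A_1\ \ A_2-A_1\ \cdots\ A_s-A_1]
  =\begin{pmatrix}I_\ell&0&\cdots&0\\ Y_1&Y_2-Y_1&\cdots&Y_s-Y_1\end{pmatrix}
  =\begin{pmatrix}I_\ell&0\\ Y_1&\Delta_s\end{pmatrix},
\end{equation*}
and this matrix has the same column space as $A_{\mathrm{obs}}$. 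Next, I will clear the lower-left block $Y_1$ by row operations: multiply the top block row by $Y_1$ and subtract it from the bottom block row. This is left multiplication by the invertible matrix $\begin{pmatrix}I_\ell&0\\-Y_1&I_{n-\ell}\end{pmatrix}$. The result is the block-diagonal matrix $\operatorname{diag}(I_\ell,\Delta_s)$. Its rank is $\ell+\rank\Delta_s$, which proves Eq.~\eqref{eq:observed-rank-formula}. When $s=1$, $\Delta_s$ is empty with rank zero, and the identity reduces to $\rank A_1=\ell$.

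For the bounds in Eq.~\eqref{eq:observed-rank-bounds}, the lower bound $\rho_s\ge\ell$ is immediate because $\rank\Delta_s\ge0$. For the upper bound, $A_{\mathrm{obs}}$ has $n$ rows and $s\ell$ columns, so $\rho_s\le\min\{n,s\ell\}$. As a consistency check, $\Delta_s$ is $(n-\ell)\times(s-1)\ell$, so $\ell+\rank\Delta_s\le\ell+\min\{n-\ell,(s-1)\ell\}=\min\{n,s\ell\}$, which agrees.

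Every step here is routine; the only point needing care is confirming that the row operation is invertible, which holds because the matrix is block unitriangular.
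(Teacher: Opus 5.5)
Your proof is correct and takes the same route as the paper: subtract $A_1$ from each block $A_i$, then read the bounds off the dimensions of $A_{\mathrm{obs}}$. Your extra left multiplication by the block-unitriangular matrix that clears $Y_1$ only makes explicit the paper's one-line claim that the columns with zero top block add exactly $\rank\Delta_s$ to the rank.
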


\begin{proof}
  For each $i\ge2$, subtract the first block column $A_1$ from $A_i$.
  This invertible block-column operation transforms $A_{\mathrm{obs}}$ into
  \begin{equation}
    \begin{bmatrix}
      I_\ell&0&\cdots&0\\
      Y_1&Y_2-Y_1&\cdots&Y_s-Y_1
    \end{bmatrix}.
    \label{eq:observed-rank-reduced-matrix}
  \end{equation}
  The first $\ell$ columns have independent top coordinates, while every
  remaining column has zero top block.  Their additional rank is exactly
  $\rank\Delta_s$, proving Eq.~\eqref{eq:observed-rank-formula}.  The bounds
  follow from matrix dimensions.
\end{proof}

For a target $Y_\star$, column reduction relative to $Y_1$ also gives
\begin{equation}
  \operatorname{col}(A_{Y_\star})\subseteq W_{\mathrm{obs}}
  \quad\Longleftrightarrow\quad
  \operatorname{col}(Y_\star-Y_1)
  \subseteq\operatorname{col}(\Delta_s).
  \label{eq:target-message-difference-criterion}
\end{equation}
Thus the already forgeable message family can be characterized entirely in
terms of the span of the observed message differences.

\subsection{Full secret-key recovery}
\label{subsec:full-key-recovery}

\begin{theorem}[Full key recovery from spanning signatures]
  \label{thm:full-key-recovery}
  If
  \begin{equation}
    \rank A_{\mathrm{obs}}=n,
    \label{eq:full-rank-observed-condition}
  \end{equation}
  then the valid classical signatures determine both $X$ and $b$ uniquely.
  In particular, any message can subsequently be signed with certainty without
  using a quantum public-key copy.
\end{theorem}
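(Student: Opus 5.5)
The plan is to invert the linear relations of Eq.~\eqref{eq:observed-linear-relations} with a right inverse of $A_{\mathrm{obs}}$, as anticipated in Table~\ref{tab:key-reuse-notation}. Since $A_{\mathrm{obs}}\in\F_p^{n\times s\ell}$ has rank $n$ by Eq.~\eqref{eq:full-rank-observed-condition}, its rows are linearly independent. Hence there exists $R\in\F_p^{s\ell\times n}$ with $A_{\mathrm{obs}}R=I_n$. Concretely, choose $n$ linearly independent columns of $A_{\mathrm{obs}}$, invert the resulting $n\times n$ submatrix, and pad the remaining rows of $R$ with zeros. This construction is computable from the public messages alone, by a fixed row-reduction convention.

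Next, right-multiply $Z_{\mathrm{obs}}=XA_{\mathrm{obs}}$ by $R$ to obtain $Z_{\mathrm{obs}}R=XA_{\mathrm{obs}}R=X$. Similarly, left-multiply $h_{\mathrm{obs}}=-A_{\mathrm{obs}}^Tb$ by $R^T$ to obtain $-R^Th_{\mathrm{obs}}=R^TA_{\mathrm{obs}}^Tb=(A_{\mathrm{obs}}R)^Tb=b$. Thus $X$ and $b$ are explicit functions of the observed classical data.

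Uniqueness follows from an injectivity argument. If two keys $(X,b)$ and $(X',b')$ produce the same signatures, then $(X-X')A_{\mathrm{obs}}=0$ and $A_{\mathrm{obs}}^T(b-b')=0$. Since the columns of $A_{\mathrm{obs}}$ span $\F_p^n$, we get $X=X'$ and $b=b'$. Alternatively, uniqueness follows directly from the explicit formulas, which do not depend on the choice of right inverse once applied to consistent data.

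For the final claim, let $Y_\star$ be any message. Then $\operatorname{col}(A_{Y_\star})\subseteq\F_p^n=W_{\mathrm{obs}}$, so Proposition~\ref{prop:classical-synthesis-criterion} applies. Equivalently, one can compute $Z_\star=XA_{Y_\star}$ and $h_\star=-A_{Y_\star}^Tb$ directly, and Theorem~\ref{thm:perfect-correctness} then gives acceptance with probability one, without using any public-key copy.

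There is no real obstacle in this argument. The only points needing care are that a right inverse exists over $\F_p$ precisely because of full row rank, and the transpose bookkeeping in the formula for $b$.
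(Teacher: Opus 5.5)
Your proposal is correct and follows the paper's proof: you take a right inverse $R$ with $A_{\mathrm{obs}}R=I_n$ (which exists by full row rank), recover $X=Z_{\mathrm{obs}}R$ and $b=-R^Th_{\mathrm{obs}}$, and then evaluate the honest signing equations for any target. Your explicit construction of $R$ and the separate injectivity argument for uniqueness are harmless elaborations of the same argument.
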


\begin{proof}
  Full row rank of $A_{\mathrm{obs}}\in\F_p^{n\times s\ell}$ implies the
  existence of a right inverse
  \begin{equation}
    R\in\F_p^{s\ell\times n}
    \quad\text{such that}\quad
    A_{\mathrm{obs}}R=I_n.
    \label{eq:Aobs-right-inverse}
  \end{equation}
  Therefore
  \begin{equation}
    X=Z_{\mathrm{obs}}R.
    \label{eq:X-full-recovery}
  \end{equation}
  Moreover, transposing Eq.~\eqref{eq:Aobs-right-inverse} gives
  $R^TA_{\mathrm{obs}}^T=I_n$, and hence
  \begin{equation}
    b=-R^Th_{\mathrm{obs}}.
    \label{eq:b-full-recovery}
  \end{equation}
  Both secret parameters are thus recovered.  The honest signing equations can
  then be evaluated for every target message.
\end{proof}

Combining Theorem~\ref{thm:full-key-recovery} with
Lemma~\ref{lem:observed-rank-identity}, full recovery occurs exactly when
\begin{equation}
  \rank\Delta_s=n-\ell=r.
  \label{eq:full-key-recovery-difference-condition}
\end{equation}
A necessary signature-count condition is
\begin{equation}
  s\ell\ge n,
  \qquad\text{equivalently}\qquad
  s\ge\left\lceil\frac n\ell\right\rceil.
  \label{eq:necessary-signature-count}
\end{equation}
This condition is also achievable under chosen-message access: the signer can
be queried on messages whose difference blocks jointly span
$\F_p^{n-\ell}$.  Hence
$\lceil n/\ell\rceil$ suitably selected signatures suffice for full key
recovery.  The number may be smaller than a naive parameter count would
suggest because each signature reveals the chart action and character on an
entire $\ell$-dimensional subspace.

\subsection{Residual uncertainty after several signatures}
\label{subsec:multi-signature-residual-uncertainty}

The same canonical-coordinate argument used for one signature extends to the
observed span $W_{\mathrm{obs}}$.  Choose a basis whose first $\rho_s$
coordinates span $W_{\mathrm{obs}}$.  The signatures reveal the corresponding
columns of the transformed symmetric chart and the character on those
coordinates.  By symmetry, the matching rows are also known.  The remaining
unknown secret is a uniformly distributed quadratic stabilizer parameter on
\begin{equation}
  r_s:=n-\rho_s
  =n-\ell-\rank\Delta_s
  \label{eq:multi-signature-residual-dimension}
\end{equation}
coordinates.

\begin{proposition}[Residual ensemble after multiple exposures]
  \label{prop:multi-signature-residual-ensemble}
  Under the uniform key prior, conditioned on $s$ valid signatures, the
  unrevealed chart block and character vector remain independent and uniform
  on
  \begin{equation}
    \Sym_{r_s}(\F_p)\times\F_p^{r_s}.
    \label{eq:multi-signature-residual-parameter-space}
  \end{equation}
  A known unitary maps every public-key copy to a known factor tensored with a
  standard $r_s$-qudit quadratic stabilizer state.
\end{proposition}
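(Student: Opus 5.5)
I would prove Proposition~\ref{prop:multi-signature-residual-ensemble} by rerunning the single-signature argument of Lemma~\ref{lem:signature-exposes-blocks} and Theorem~\ref{thm:conditional-residual-state}, with $\operatorname{col}(A_Y)$ replaced by the observed span $W_{\mathrm{obs}}$.

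\emph{Coordinates.} First choose a matrix $A_W\in\F_p^{n\times\rho_s}$ whose columns form a basis of $W_{\mathrm{obs}}$, for instance a fixed selection of linearly independent columns of $A_{\mathrm{obs}}$. Then write $A_W=A_{\mathrm{obs}}T_W$ for a fixed coefficient matrix $T_W$. Complete $A_W$ by a fixed rule to $R_s:=[A_W\ C_s]\in\operatorname{GL}(n,\F_p)$, where $C_s\in\F_p^{n\times r_s}$.

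\emph{Exposed data.} Put $\widetilde X:=R_s^TXR_s$ and $\widetilde b:=R_s^Tb$, decomposed into blocks $(P,Q,S)$ and $(c,e)$ along $\F_p^{\rho_s}\oplus\F_p^{r_s}$. By Eq.~\eqref{eq:observed-linear-relations}, the signatures determine
\[
XA_W=Z_{\mathrm{obs}}T_W
\quad\text{and}\quad
-A_W^Tb=T_W^Th_{\mathrm{obs}}.
\]
Since $R_s^{-1}A_W=E_{\rho_s}$, this gives $R_s^TXA_W=\widetilde XE_{\rho_s}=(P;Q^T)$ and $c=-T_W^Th_{\mathrm{obs}}$. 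These are exactly the exposed column blocks, and by symmetry the matching rows are known as well.

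\emph{The signatures reveal nothing more.} The key point is the converse: the entire collection $(Z_i,h_i)_{i\le s}$ is a function of $(P,Q,c)$ alone. Each $A_i$ lies in $W_{\mathrm{obs}}$, so $A_i=A_WM_i$ for some known $M_i$. Then
\[
Z_i=XA_WM_i=R_s^{-T}\binom{P}{Q^T}M_i
\quad\text{and}\quad
h_i=-M_i^Tc.
\]
Hence the event of observing particular signature values is exactly an event on $(P,Q,c)$.

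\emph{Uniformity.} Congruence by $R_s$ is a bijection of $\Sym_n(\F_p)$, and multiplication by $R_s^T$ is a bijection of $\F_p^n$. So the uniform prior on $(X,b)$ induces a uniform prior on $(P,Q,S,c,e)$ with independent coordinates. Conditioning on $(P,Q,c)$ therefore leaves $(S,e)$ uniform and independent on $\Sym_{r_s}(\F_p)\times\F_p^{r_s}$.

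\emph{Unitary factorization.} Copy Theorem~\ref{thm:conditional-residual-state} verbatim with $\ell$ replaced by $\rho_s$. Apply $U_{R_s}\ket{x}=\ket{R_s^{-1}x}$ and then the known diagonal phase $D_{P,Q,c}$. This gives
\[
D_{P,Q,c}U_{R_s}\ket{\psi_{X,b}}=\ket{+}^{\otimes\rho_s}\otimes\ket{\psi_{S,e}},
\]
up to a global phase.

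\emph{Main obstacle.} The only delicate step is the one flagged above: showing that the signatures carry no information beyond $(P,Q,c)$. It is not a routine consequence of the span argument, since several signatures give redundant, overlapping linear data. It is settled by the explicit expression of every $(Z_i,h_i)$ in terms of $(P,Q,c)$ derived there. The boundary case $\rho_s=n$ gives $r_s=0$, which reproduces Theorem~\ref{thm:full-key-recovery}.
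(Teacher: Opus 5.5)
Your proposal is correct and follows essentially the same route as the paper's proof: complete a basis of $W_{\mathrm{obs}}$ to $R_s\in\operatorname{GL}(n,\F_p)$, read off the exposed blocks $(P,Q,c)$, apply the bijective block-coordinate argument of Lemma~\ref{lem:signature-exposes-blocks} to get uniformity and independence of $(S,e)$, and remove the known phases as in Theorem~\ref{thm:conditional-residual-state}. Your explicit check that every $(Z_i,h_i)$ is a function of $(P,Q,c)$ alone makes rigorous the paper's brief remark that the lower-right block and residual character are ``untouched'' by the signatures.
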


\begin{proof}
  Complete a basis of $W_{\mathrm{obs}}$ to a basis of $\F_p^n$ and apply the
  corresponding symplectic coordinate change.  In block form, the observed
  signatures fix the first $\rho_s$ chart columns and the first $\rho_s$
  character coordinates.  Symmetry fixes the corresponding chart rows.  The
  remaining lower-right symmetric block and residual character are untouched.
  Uniformity and independence follow from the same bijective block-coordinate
  argument as in Lemma~\ref{lem:signature-exposes-blocks}.  Removing the known
  phase terms gives the stated tensor factorization, exactly as in
  Theorem~\ref{thm:conditional-residual-state}.
\end{proof}

This proposition clarifies the separation between two effects.  Classical
signature exposure reduces the effective dimension from $n$ to $r_s$, while
quantum-public-key copies may help predict the remaining residual parameter.
A bounded-use extension of the finite-copy theorem would therefore replace the
single-signature residual dimension $r=n-\ell$ by $r_s$ and analyze only target
directions outside $W_{\mathrm{obs}}$.  Such an extension requires an explicit
multi-signature security game and is not claimed by the present main theorem.

\subsection{Security scope and deployment implication}
\label{subsec:key-reuse-security-scope}

Theorem~\ref{thm:full-key-recovery} rules out interpreting the present scheme
as an unrestricted reusable signature scheme.  In particular, the finite-copy
quantum security upper bound does not prevent classical key recovery after sufficiently
many linearly independent signatures.  The security results of
Secs.~\ref{sec:security-model}--\ref{sec:finite-copy-security} are therefore
stated for one exposed valid signature.

A deployment based on the present construction must use one of the following
restrictions:
\begin{enumerate}[label=(\roman*),leftmargin=*]
  \item use each secret key for only one authenticated message;
  \item impose and enforce a rank budget on the cumulative observed subspace;
  \item refresh the secret key and quantum public-key supply before the
        observed span becomes too large; or
  \item redesign the classical signature so that it does not reveal the
        unrestricted linear images $XA_Y$ and $A_Y^Tb$.
\end{enumerate}
The present paper proves the first option's specified target,
single-signature-exposure security statement and identifies the exact algebraic
obstruction to unrestricted reuse.  The adjective ``one-time'' is used only as
a deployment restriction on secret-key use; it is not a claim of existential
one-time unforgeability over adaptively selected messages.  The paper does not
claim EUF-CMA security.

\section{Discussion}
\label{sec:discussion}

The result of this paper is best understood by separating four neighboring
research directions that are often grouped under the label quantum digital
signatures: standardized post-quantum signatures, communication-oriented
multipartite QDS, computational signatures with quantum cryptographic objects,
and information-theoretic schemes based on a finite supply of quantum public
keys.  These directions differ not only in implementation but also in who
verifies, what resource is quantum, how many signing queries are permitted,
and whether the adversary is computationally bounded.  The present protocol
belongs to the last direction.  Its contribution is a sharp off-boundary
copy-rate transition for forgery for the specified target against collective attacks on a
structured stabilizer public key, together with an explicit account of the
classical information leaked by valid signatures.  It is not a replacement
for post-quantum signature standards and does not inherit the transferability
or multi-use guarantees pursued in other QDS models.

\subsection{Relation to post-quantum and computational signatures}
\label{subsec:discussion-computational-signatures}

Post-quantum signatures such as ML-DSA and SLH-DSA use freely copyable
classical public keys and base security on computational hardness
\cite{NISTFIPS204,NISTFIPS205}.  Their operational objective is the standard
one: efficient public verification under many uses and polynomial-time attack.
By contrast, a quantum public key is a consumable physical resource whose
copy count is part of the security model.  The present scheme exchanges the
standard computational assumption for a strict supply assumption and a much
narrower signing interface.  Accordingly, its information-theoretic security upper bound
should not be interpreted as a stronger version of a standard post-quantum
signature.  It answers a different question: how well can even an unbounded
adversary forge after receiving a specified number of copies and one valid
signature?

The same distinction applies to computational quantum cryptography.  One-way
state generators and related quantum assumptions support commitments,
bounded-time-secure quantum-public-key signatures, and other primitives
\cite{MorimaeYamakawa2022,MorimaeYamakawa2024,KhuranaTomer2024}.  Those works
restrict the adversary to quantum polynomial time but can formulate security
through asymptotic computational indistinguishability and can seek multi-use
notions.  Black-box separation results also show that a one-time construction
from a quantum-state assumption does not automatically yield a standard
multi-time signature~\cite{ColadangeloMutreja2024}.  This point is directly
relevant here.  Our single-signature theorem is information theoretic with
respect to quantum computation, but the key-reuse theorem prevents it from
being promoted to EUF-CMA merely by changing the adversary's running-time
bound.

A useful conceptual distinction is therefore between two kinds of one-wayness.
Computational quantum one-wayness asks whether an efficient algorithm can
invert or predict a quantum-state generator despite having polynomially many
resources.  Finite-copy statistical one-wayness asks how much any measurement,
regardless of complexity, can infer from a bounded physical sample.  The
present rank-shell analysis is of the second type.  The two notions can be
combined in future constructions, but neither subsumes the other in the
security experiments presently available.

\subsection{Relation to communication-oriented quantum digital signatures}
\label{subsec:discussion-communication-qds}

Multipartite QDS has made substantial progress toward practical quantum
networks.  Measurement-device-independent, continuous-variable, long-distance,
and integrated implementations address detector vulnerabilities, signing
rate, transmission loss, and deployment on shared optical infrastructure
\cite{Puthoor2016,Roberts2017,Richter2021,Du2025}.  Network demonstrations also
combine signatures with encryption or other communication services
\cite{Yin2023}.  These achievements concern requirements that the present
paper leaves open, including transferability between recipients,
non-repudiation, robustness under channel noise, and experimentally defined
acceptance thresholds.

The present model instead gives each verifier a physical copy of a quantum
public key and asks whether that verifier can independently test a classical
message--signature pair.  This architecture avoids assigning a participant a
role in the act of verification, but it shifts the burden to authenticated
state distribution, coherent storage, and explicit copy accounting.  Thus a
fair comparison cannot be made using signing rate or distance alone.  It must
also include the recipient structure, the function of symmetrization or
secret sharing, the number and lifetime of stored quantum states, the
available signing queries, and the security definition.

In particular, the information-theoretic terminology used in experimental QDS
and in the present finite-copy theorem refers to different complete systems.
Communication-oriented protocols typically derive their security from a
combination of quantum-state distribution, sampling, and classical
post-processing in a multipartite setting.  Our security upper bound derives from the
limited distinguishability of a structured state ensemble after classical
signature leakage has been conditioned upon.  Neither analysis can be
transferred to the other architecture without a new reduction.

\subsection{Comparison with the single-qubit finite-copy protocol}
\label{subsec:comparison-single-qubit-qds}
The single-qubit protocol of Wang and Hayashi is the closest direct comparator
because it also uses a classical secret description, quantum public-key
states, independent verification, and a computationally unbounded adversary
with a bounded number of copies~\cite{WangHayashi2026QDS}.  For each message
bit and repetition index, its public key contains a state from
$\{\ket{0},\ket{1},\ket{+},\ket{-}\}$.  The signature reveals the basis and
eigenvalue corresponding to the selected bit value, and verification uses
independent single-qubit measurements.  For message length $L$ and repetition
parameter $\lambda$, one public-key copy contains $2\lambda L$ qubits.

Our protocol instead stores one quadratic stabilizer state on $n$ qudits.  A
message $Y\in\F_p^{(n-\ell)\times\ell}$ selects a subspace of commuting Weyl
operators, and the signature gives the restriction of the graph chart and
character to that subspace.  Verification is a joint spectral test.  The
public-key state may be entangled, and preparation and verification can
require multi-qudit Clifford control.  The present construction therefore
gives up the especially simple local implementation of the single-qubit
scheme in order to access a high-dimensional algebraic prediction problem.

The two security mechanisms are structurally different.  In the single-qubit
protocol, changing a bit forces the adversary to predict $\lambda$
independently sampled elementary labels.  With no public-key copy the
corresponding factor is $2^{-\lambda}$, while with $K$ copies the exact
four-state estimation probability gives a bound of the form $F(K)^\lambda$
~\cite{WangHayashi2026QDS}.  Independent repetition is therefore the direct
amplification resource.

Here, one exposed signature leaves a residual stabilizer ensemble of dimension
\begin{equation}
  r=n-\ell,
  \label{eq:discussion-residual-dimension}
\end{equation}
and a target message introduces
\begin{equation}
  d=\rank(Y-Y')
  \label{eq:discussion-target-rank}
\end{equation}
new residual directions.  For every fixed positive limiting target-rank
fraction $d/r\to\beta\in(0,1]$, the forgery probability for the specified target tends to
zero when $N/r\to\alpha<1$ and tends to one when $N/r\to\alpha>1$.  Thus the
relevant asymptotic resource is the number of adversarial copies per residual
dimension, and the transition occurs at copy rate one away from the critical
line.  This conclusion comes from the residual graded-verification converse
below the threshold and complete residual-state identification above it, not
from an independent repetition argument.

The two results should not be converted into a uniform dominance claim.  The
single-qubit bound is an exact elementary-state estimation expression raised
to a repetition parameter, whereas the present theorem concerns a
high-dimensional residual ensemble, a specified target whose rank grows linearly
with the residual dimension, and a first-order copy-rate threshold whose finer window is not evaluated here.
Moreover, targets with $d=o(r)$, including rank-one changes, remain outside the
subcritical copy-rate theorem.  These differences in target structure and
asymptotic scaling prevent a direct comparison of the numerical decay laws.

The implementation tradeoff is equally important.  The single-qubit scheme
requires only product-state preparation and $X/Z$ measurements, but its quantum
key grows through explicit repetition.  Our public-key copy uses $n$ qudits,
but the classical signature contains $(n+1)\ell$ field elements before any
compression, and generic preparation or measurement can be substantially more
complex.  Consequently, the present paper does not claim a practical resource
advantage.  A meaningful comparison would need state-preparation depth,
quantum-memory time, local dimension, public-key distribution rate, classical
signature size, target structure, and permitted key reuse in addition to the
forgery bound.
\subsection{Adjacent QPUF authentication}
\label{subsec:discussion-qpuf}

Quantum physical unclonable functions provide another route to authentication
through an unknown physical transformation or measurement device
\cite{ArapinisEtAl2021,GhoshEtAl2024}.  The conceptual overlap with this paper
is the attempt to convert physical nonclonability and restricted observations
into a forgery bound.  The operational object is nevertheless different.  The
present signer produces a classical signature for a freely chosen classical
message, and verification uses a distributed state associated with a
classical secret description.  Many QPUF protocols instead implement
challenge--response identification of a physical object and do not produce a
classical message signature with the same verification interface.

This neighboring literature is useful in two ways.  First, it reinforces that
no-cloning alone is not a security proof: the learnability of the challenge
family, the measurement model, and the number of observations must all be
specified.  Second, it suggests hardware-oriented alternatives to storing a
large distributed public-key state.  Establishing a formal reduction between
the stabilizer prediction game and a realizable QPUF would, however, require a
new model and is outside the present claims.

\subsection{Security scope and key reuse}
\label{subsec:comparison-security-scope}
The main theorem concerns a distinct target specified independently of the hidden key and exposed signature, averages over the uniform
key prior, permits an arbitrary collective attack on $N$ copies, and assumes
that one valid signature has already been exposed.  It does not imply
adaptive-target, existential-forgery, worst-case-key, non-repudiation,
transferability, or composable security.  For target sequences with
$d/r\to\beta\in(0,1]$, the theorem establishes the zero--one transition away
from copy rate one.  It makes no pointwise claim when $N/r\to1$.  Targets with
$d=o(r)$, including rank-one changes, also lie outside the subcritical
converse and require a separate scaling analysis.

The classical leakage analysis further shows why bounded public-key copies and
bounded signing queries must be treated as independent resources.  Each valid
signature reveals $XA_Y$ and $A_Y^Tb$.  Several signatures therefore reveal
the secret linear map and character on the cumulative span of the message
subspaces.  A target contained in that span can be signed by classical linear
combination, and full observed rank recovers the complete secret key.  Under
chosen-message access, $\lceil n/\ell\rceil$ suitably selected signatures
suffice.  This obstruction does not depend on the adversary's ability to
measure the quantum public key and cannot be repaired by strengthening the
finite-copy prediction theorem.

The single-qubit protocol has a different, coordinatewise exposure pattern:
a signature reveals the secret branch corresponding to each signed bit, while
the branch for the opposite bit remains unrevealed.  That difference prevents
the present span-recovery theorem from being transferred directly to the
single-qubit scheme, but it also means that repeated signing requires a
separate analysis there.  More generally, a bounded-copy theorem should never
be presented as chosen-message security unless classical signature exposure
has been incorporated into the experiment.
\subsection{Physical assumptions and implementation}
\label{subsec:resource-tradeoffs}

Both the present scheme and the single-qubit finite-copy scheme assume
authenticated distribution of intact public-key states.  If an adversary can
replace the verifier's designated test state, finite-copy distinguishability
is no longer the relevant security mechanism.  Both analyses also idealize
state preparation, memory, and measurement.  The present joint stabilizer test
would need a noise-tolerant threshold, while the security proof would have to
bound an adversary who exploits the same threshold.  Such a result requires a
joint completeness--soundness analysis rather than simply adding an honest
error probability.

The public-key copy budget must also be global.  Copies assigned to adversarial
access, honest verification, calibration, failed retrieval, and repeated
verification all draw from the same physical supply.  Ideal projective
verification may preserve an honest eigenstate conditionally on acceptance,
but that observation alone does not establish composable reuse of the same
copy.  The present model therefore reserves an independent test copy for each
verification instance.

\subsection{Open mathematical problems}
\label{subsec:discussion-limitations-open-problems}
The first problem is the critical threshold window.  The present results already cover sequences with $N-r\to+\infty$, even though such sequences may satisfy $N/r\to1$.  What remains unresolved is the finer behavior when the overhead does not diverge, especially $N=r+O(1)$, as well as the corresponding negative and intermediate offset regimes.
The second problem is the sublinear-query regime.  The subcritical converse
assumes $d/r\to\beta$ with fixed $\beta>0$ and therefore does not cover fixed
$d$, general $d=o(r)$, or intermediate scalings in which $d\to\infty$ but
$d/r\to0$.  These regimes include rank-one and other low-rank target
modifications that are central to the distinction between a separately specified target and
existential security.

The third problem is the finite graded optimum.  The independent learning paper
solves the finite exact partial-label identification problem through the
mixed-state PGM and an exact formula organized by annihilator cosets, namely cosets fixed by the relevant orthogonality constraints, but this does not by
itself give a closed solution of the finite graded-verification SDP.  Solving
that SDP directly could reveal finite-size behavior hidden by the
rank-truncated comparison and could clarify higher-order threshold behavior.

The fourth problem is refined graded decay below copy rate one.  The present
proof combines the quadratic exact-identification converse with the truncation
$D_r=\lfloor\sqrt r\rfloor$ and obtains
\begin{equation}
  p^{-J(\alpha,\beta)r^2+o(r^2)}
  +p^{-\sqrt r+O(1)}.
  \label{eq:open-rank-truncated-decay}
\end{equation}
This estimate proves convergence to zero throughout $\alpha<1$, but the tail
term need not describe the true graded asymptotics.  Determining the optimal
decay scale and its dependence on $(\alpha,\beta,p)$ remains open.

The fifth problem is bounded-use security.  After several signatures the
unrevealed state is again a quadratic stabilizer ensemble, now on
$r_s=n-\rank A_{\mathrm{obs}}$ residual coordinates.  This suggests a theorem
that tracks both the cumulative classical rank budget and the quantum copy
rate.  Proving it requires an adaptive multi-signature experiment, precise
accounting of target directions outside the evolving observed span, and a
partial-prediction result uniform over that evolution.

The sixth problem is a qubit authentication protocol.  The independent learning
theorem covers all prime local dimensions, including $p=2$, but the present
authentication protocol uses an odd-prime Weyl convention and the inverse of
two in its quadratic phases.  Extending the authentication theorem therefore
requires a qubit signature convention based on an enhanced quadratic phase or
an equivalent $\mathbb Z_4$ refinement, together with qubit versions of the
conditional residual reduction, verification projector, and effect-compression
argument.  This is an authentication-side extension rather than a missing
qubit learning theorem.

A further direction is to combine information-theoretic finite-copy analysis
with computational assumptions.  Computational one-wayness could protect a
residual parameter after the information-theoretic guarantee becomes weak,
while the physical copy limit could reduce the adversary's effective input
before a computational reduction is applied.  Such a hybrid statement would
need a single security experiment that accounts simultaneously for copy
supply, running time, and signing queries.
\subsection{Conclusion}
\label{subsec:discussion-conclusion}
This paper shows that stabilizer-public-key authentication can be analyzed as
a conditional structured-prediction problem.  After one valid signature, a
known unitary separates every public-key copy into an exposed factor and an
exactly uniform quadratic stabilizer ensemble on $r=n-\ell$ residual qudits.
A specified target at rank distance $d=\rank(Y-Y')$ asks only for the residual
chart action and character on its $d$ new directions.  Even when a forged
commuting family couples the exposed and residual factors, compression against
the known factor and randomized isotropic completion reduce its retained
verification effect to the same graded-verification game for a specified query.

This reduction yields a sharp off-boundary finite-copy transition.  For every
fixed positive limiting target-rank fraction $d/r\to\beta\in(0,1]$, the optimal
average forgery probability for the specified target tends to zero whenever
$N/r\to\alpha<1$.  Conversely, complete identification of the residual
stabilizer label succeeds with probability tending to one when
$N-r\to+\infty$.  From the identified residual label and the exposed classical
blocks, the adversary reconstructs the secret chart and character and outputs
the honest target signature itself.  Thus the forgery probability tends to
one for every $\alpha>1$, and the authentication forgery copy rate for the
specified target equals one.  No pointwise conclusion is made on the critical line
$N/r\to1$, and the subcritical theorem does not cover $d=o(r)$.

The quantum copy threshold and classical key reuse are independent
constraints.  Repeated valid signatures reveal the secret chart and character
on their cumulative input span.  Every target whose message subspace lies in
that span can be signed by classical linear combination, and full observed
rank recovers the complete secret key.  The scheme must therefore be treated
as one-time or explicitly rank-budgeted authentication rather than as a
conventional reusable signature scheme.

The resulting guarantee has a precise scope.  It is information theoretic
against arbitrary collective attacks, but it concerns a separately specified target, uses an average over keys,
based on one exposed valid signature, and formulated for ideal operations with
an independently allocated verifier copy and authenticated public-key
distribution.  It does not establish adaptive-target or existential
unforgeability, worst-case-key security, EUF-CMA security, safe verification-
copy reuse, noise tolerance, or composable security.  Within this restricted
interface, the result isolates a concrete role for high-dimensional
stabilizer geometry: below one adversarial copy per residual dimension,
fixed positive-rank target prediction becomes asymptotically impossible,
whereas above that rate complete residual learning enables exact forgery.
Refining the threshold window and analyzing sublinear target-rank regimes, proving a
rank-budgeted multi-use theorem, extending the authentication construction to
qubits, and developing noise-tolerant verification are the principal next
steps.

\subsection*{Acknowledgement}
M.H. and B.Y. was supported in part by
the Guangdong Provincial Quantum Science Strategic Initiative (Grant No. GDZX2505003).
M.H. was supported in part by
the General R\&D Projects of 1+1+1 CUHK-CUHK(SZ)-GDST Joint Collaboration Fund (Grant No. GRDP2025-022).

\appendix

\section{Coordinate reduction and marginal deletion}
\label{app:canonical-reduction}

This appendix supplies details used in
Theorem~\ref{thm:forgery-partial-reduction}.

\subsection{Symplectic coordinate change}

For $R\in\operatorname{GL}(n,\F_p)$, the matrix
\begin{equation}
  \mathsf S_R:=\begin{pmatrix}R^{-1}&0\\0&R^T\end{pmatrix}
  \label{eq:appendix-SR}
\end{equation}
is symplectic because it preserves the form in
Eq.~\eqref{eq:symplectic-form}.  Moreover,
\begin{equation}
  \mathsf S_R\binom{u}{Xu}
  =\binom{R^{-1}u}{R^TXR(R^{-1}u)}.
  \label{eq:appendix-graph-transform}
\end{equation}
Thus graph charts transform by congruence, and characters transform by
$b\mapsto R^Tb$.

\subsection{Normal form of the target quotient map}

Let $B_2\in\F_p^{r\times\ell}$ have rank $d$.  There exist
$U\in\operatorname{GL}(r,\F_p)$ and
$V\in\operatorname{GL}(\ell,\F_p)$ such that
\begin{equation}
  UB_2V=
  \begin{pmatrix}
    I_d&0\\
    0&0
  \end{pmatrix}.
  \label{eq:B2-rank-normal-form}
\end{equation}
Changing the target generator basis by $V$ does not change the joint
verification projector; it merely applies the corresponding invertible linear
map to the signed outcome vector.  The residual coordinate change $U$ is
implemented by a known Clifford unitary and preserves the uniform residual
ensemble.  Consequently, the partial game may use the canonical query matrix
\begin{equation}
  J_d:=\begin{pmatrix}I_d\\0\end{pmatrix}
  \in\F_p^{r\times d}.
  \label{eq:canonical-query-Jd}
\end{equation}

\subsection{Explicit retained blocks and kernel compression}
In the signed-message coordinates, write $B=(B_1;B_2)$ and let a purported
signature have transformed bottom block
\begin{equation}
  \widehat T:=R_Y^T\widehat Z.
  \label{eq:appendix-That}
\end{equation}
Choose $V\in\operatorname{GL}(\ell,\F_p)$ so that
\begin{equation}
  B_2V=\bigl[J\ \ 0\bigr],\qquad\rank J=d.
  \label{eq:appendix-new-old-split}
\end{equation}
After the known conjugation, retain the first $d$ generators and split their
transformed labels according to the known and residual tensor factors as
\begin{equation}
  \left(
    \binom{L}{J},
    \binom{F}{C}
  \right).
  \label{eq:appendix-retained-blocks}
\end{equation}
The matrices $L,F,C$ are obtained by applying the same known linear label
transformation to the first $d$ columns of $(BV,\widehat T V)$.  With
$K_0=\left(\begin{smallmatrix}P&Q\\Q^T&0\end{smallmatrix}\right)$ and
$c_0=(c^T,0)^T$, the diagonal conjugation obeys
\begin{equation}
  D_{P,Q,c}W(u_j,z_j)D_{P,Q,c}^{\dagger}
  =\omega^{-c^TL_j}W(u_j,z_j-K_0u_j).
  \label{eq:appendix-retained-conjugation-phase}
\end{equation}
Thus the known phase vector has entries $\kappa_j=-c^TL_j$ and is included
in the transformed character as $\eta=\widehat h_d-\kappa$.  The
well-formedness condition for the retained total family is precisely
\begin{equation}
  F^TL-L^TF+C^TJ-J^TC=0.
  \label{eq:appendix-total-isotropy}
\end{equation}
This identity does not imply factorwise isotropy.

If $\Pi_1,\ldots,\Pi_\ell$ are the transformed commuting spectral projectors,
then marginal deletion gives
\begin{equation}
  \Pi_{\mathrm{full}}
  =\prod_{j=1}^{\ell}\Pi_j
  \leq\prod_{j=1}^{d}\Pi_j
  =:\Pi_{\mathrm{ret},\eta}.
  \label{eq:appendix-projector-deletion}
\end{equation}
The retained total projector has the Fourier form in
Eq.~\eqref{eq:retained-Fourier-projector}.  Compressing it against
$\ket{+}^{\otimes\ell}$ removes every term outside $\ker F$.  On the surviving
kernel, Eq.~\eqref{eq:appendix-total-isotropy} reduces to residual isotropy,
as shown in Lemma~\ref{lem:kernel-residual-isotropy}.  Thus no separate joint
spectral measurement of the full known-factor or residual-factor projections
is assumed.

\subsection{Canonical construction of the isotropic completion}
Let $T$ be a basis matrix for $\ker F$, complete it to
$V=[T\ R]\in\operatorname{GL}(d,\F_p)$, and make an invertible residual
coordinate change that sends $JV$ to $J_d=(I_d;0)$.  In these coordinates an
admissible chart is $C=(H;R_0)$ with $H=H^T$.  Write the prescribed first $k$
columns of the upper block as
\begin{equation}
  \binom{H_{11}}{H_{21}},
  \qquad H_{11}\in\F_p^{k\times k}.
  \label{eq:appendix-prescribed-upper-block}
\end{equation}
Kernel isotropy gives $H_{11}=H_{11}^T$.  The canonical choice
\begin{equation}
  H=\begin{pmatrix}H_{11}&H_{21}^T\\H_{21}&0\end{pmatrix}
  \label{eq:appendix-symmetric-completion}
\end{equation}
therefore extends the prescribed columns to a symmetric $d\times d$ block.
Choose every unspecified column of $R_0$ to be zero and transport the result
back to the original coordinates.  This proves
Lemma~\ref{lem:isotropic-query-completion} constructively.

\subsection{Fourier proof of the random-character marginal}
In the completed generator basis $V=[T\ R]$, write a Fourier coefficient as
$(y,z)\in\F_p^k\oplus\F_p^{d-k}$.  Averaging the full character projectors over
$\zeta\in\F_p^{d-k}$ introduces the factor
\begin{equation}
  \frac1{p^{d-k}}\sum_{\zeta\in\F_p^{d-k}}
  \omega^{-\zeta^Tz}=\delta_{z,0}.
  \label{eq:appendix-character-orthogonality}
\end{equation}
Only terms with $z=0$ survive, leaving the Fourier projector of the first $k$
generators together with the normalization $p^{-(d-k)}$.  This is exactly
Eq.~\eqref{eq:random-character-projector-identity}.

\subsection{Parameter count}
For $J_d=(I_d;0)$, an admissible $C$ has a symmetric upper $d\times d$ block
and an arbitrary lower $(r-d)\times d$ block.  Thus the restricted chart has
\begin{equation}
  \frac{d(d+1)}2+(r-d)d
  =rd-\frac{d(d-1)}2
  \label{eq:partial-chart-parameter-count}
\end{equation}
field parameters.  Including the $d$ character values, the partial label space
has cardinality
\begin{equation}
  p^{(r+1)d-d(d-1)/2}.
  \label{eq:partial-label-cardinality}
\end{equation}
This count agrees with the quotient-parameter exponent obtained by direct
parameter counting, but the present derivation does not rely on that separate
counting argument.

\section{Verification measurement}
\label{app:verification-measurement}

This appendix records the spectral facts used in the protocol and correctness
proof.

\subsection{Spectral projectors of a Weyl operator}

For nonzero $g\in\F_p^{2n}$, the phase-adjusted Weyl operator satisfies
$W(g)^p=I$.  Its eigenvalues are therefore among
$\{\omega^c:c\in\F_p\}$.  For $c\in\F_p$, define
\begin{equation}
  \Pi_{g,c}:=\frac1p\sum_{s\in\F_p}\omega^{-sc}W(g)^s.
  \label{eq:appendix-Weyl-projector}
\end{equation}
Then
\begin{align}
  W(g)\Pi_{g,c}
  &=\frac1p\sum_{s\in\F_p}\omega^{-sc}W(g)^{s+1}
    \notag\\
  &=\omega^c\Pi_{g,c}.
  \label{eq:appendix-projector-eigenvalue}
\end{align}
Character orthogonality also gives
\begin{equation}
  \Pi_{g,c}\Pi_{g,c'}=\delta_{c,c'}\Pi_{g,c},
  \qquad
  \sum_{c\in\F_p}\Pi_{g,c}=I.
  \label{eq:appendix-projector-resolution}
\end{equation}
Thus $\{\Pi_{g,c}:c\in\F_p\}$ is the projective measurement of $W(g)$.

\subsection{Joint character sectors}

Let $g_1,\ldots,g_\ell$ be linearly independent and pairwise symplectically
orthogonal.  The Weyl operators $W(g_j)$ commute, and so do all their spectral
projectors.  For $h=(h_1,\ldots,h_\ell)^T\in\F_p^\ell$, define
\begin{equation}
  \Pi_{G,h}:=\prod_{j=1}^{\ell}\Pi_{g_j,h_j}.
  \label{eq:appendix-joint-projector}
\end{equation}
This is an orthogonal projector.  Moreover,
\begin{equation}
  W(g_j)\Pi_{G,h}=\omega^{h_j}\Pi_{G,h}
  \qquad (1\leq j\leq\ell).
  \label{eq:appendix-joint-character}
\end{equation}
The projectors $\{\Pi_{G,h}:h\in\F_p^\ell\}$ resolve the identity and define
the joint spectral measurement of the commuting family.

For the protocol, $G=G_{Y,Z}$ has top block $A_Y$.  Since $A_Y$ has full column
rank, the columns of $G$ are linearly independent.  The classical isotropy
check
\begin{equation}
  Z^TA_Y-A_Y^TZ=0
  \label{eq:appendix-isotropy-check}
\end{equation}
ensures pairwise symplectic orthogonality, so
$\Pi_{G,h}=\Pi_{Y,Z,h}$ is a well-defined verification projector.

\subsection{Sequential implementation}

Because the projectors commute, the joint test can equivalently be implemented
by measuring the commuting Weyl observables sequentially and accepting only if
all outcomes equal the entries of $h$.  The protocol definition is stated in
terms of the single joint projector because that form is independent of a
particular measurement circuit and is convenient for the later security
analysis.  No claim about fault tolerance or experimental noise is made here.



\begin{thebibliography}{99}

\bibitem{Rompel1990}
J.~Rompel, ``One-Way Functions Are Necessary and Sufficient for Secure
Signatures,'' in \emph{Proceedings of the Twenty-Second Annual ACM Symposium
on Theory of Computing} (ACM, 1990), pp.~387--394.

\bibitem{Shor1994}
P.~W.~Shor, ``Algorithms for Quantum Computation: Discrete Logarithms and
Factoring,'' in \emph{Proceedings of the 35th Annual Symposium on Foundations
of Computer Science} (IEEE, 1994), pp.~124--134.

\bibitem{BernsteinLange2017}
D.~J.~Bernstein and T.~Lange, ``Post-Quantum Cryptography,''
\emph{Nature} \textbf{549}, 188--194 (2017).

\bibitem{NISTFIPS204}
National Institute of Standards and Technology, \emph{Module-Lattice-Based
Digital Signature Standard}, FIPS 204 (2024).

\bibitem{NISTFIPS205}
National Institute of Standards and Technology, \emph{Stateless Hash-Based
Digital Signature Standard}, FIPS 205 (2024).

\bibitem{GottesmanChuang2001}
D.~Gottesman and I.~L.~Chuang, ``Quantum Digital Signatures,''
arXiv:quant-ph/0105032 (2001).

\bibitem{Yin2017}
H.-L.~Yin, Y.~Fu, H.~Liu, Q.-J.~Tang, J.~Wang, L.-X.~You, W.-J.~Zhang,
S.-J.~Chen, Z.~Wang, Q.~Zhang, X.-Y.~Zhou, G.-C.~Guo, and Z.-F.~Han,
``Experimental Quantum Digital Signature over 102 km,''
\emph{Phys. Rev. A} \textbf{95}, 032334 (2017).

\bibitem{Ding2020}
H.-J.~Ding, J.-J.~Chen, L.~Ji, X.-Y.~Zhou, C.-H.~Zhang, C.-M.~Zhang, and
Q.~Wang, ``280-km Experimental Demonstration of a Quantum Digital Signature
with One Decoy State,'' \emph{Opt. Lett.} \textbf{45}, 1711--1714 (2020).

\bibitem{An2019}
X.-B.~An, H.~Zhang, C.-M.~Zhang, W.~Wang, Y.~Wang, and C.-H.~Zhang,
``Practical Quantum Digital Signature with a Gigahertz BB84 Quantum Key
Distribution System,'' \emph{Opt. Lett.} \textbf{44}, 139--142 (2019).

\bibitem{Puthoor2016}
I.~V.~Puthoor, R.~Amiri, P.~Wallden, M.~Curty, and E.~Andersson,
``Measurement-Device-Independent Quantum Digital Signatures,''
\emph{Phys. Rev. A} \textbf{94}, 022328 (2016).

\bibitem{Roberts2017}
G.~L.~Roberts, M.~Lucamarini, Z.~L.~Yuan, J.~F.~Dynes, L.~C.~Comandar,
A.~W.~Sharpe, A.~J.~Shields, M.~Curty, I.~V.~Puthoor, and E.~Andersson,
``Experimental Measurement-Device-Independent Quantum Digital Signatures,''
\emph{Nat. Commun.} \textbf{8}, 1098 (2017).

\bibitem{Richter2021}
S.~Richter, M.~Thornton, I.~Khan, H.~Scott, K.~Jaksch, U.~Vogl, B.~Stiller,
G.~Leuchs, C.~Marquardt, and N.~Korolkova, ``Agile and Versatile Quantum
Communication: Signatures and Secrets,'' \emph{Phys. Rev. X} \textbf{11},
011038 (2021).

\bibitem{Lu2021}
Y.-S.~Lu, X.-Y.~Cao, C.-X.~Weng, J.~Gu, Y.-F.~Huang, H.-L.~Yin,
Z.-B.~Chen, and G.-C.~Guo, ``Efficient Quantum Digital Signatures without
Symmetrization Step,'' \emph{Opt. Express} \textbf{29}, 10162--10171 (2021).

\bibitem{Qin2024}
J.-Q.~Qin, Z.-W.~Yu, and X.-B.~Wang, ``Efficient Quantum Digital Signatures
over Long Distances with Likely Bit Strings,'' \emph{Phys. Rev. Applied}
\textbf{21}, 024012 (2024).

\bibitem{Yin2023}
H.-L.~Yin, Y.~Fu, C.-L.~Li, C.-X.~Weng, B.-H.~Li, J.~Gu, Y.-S.~Lu,
S.~Huang, and Z.-B.~Chen, ``Experimental Quantum Secure Network with Digital
Signatures and Encryption,'' \emph{Natl. Sci. Rev.} \textbf{10}, nwac228
(2023).

\bibitem{Du2025}
Y.~Du, B.-H.~Li, X.~Hua, X.-Y.~Cao, Z.~Zhao, F.~Xie, Z.~Zhang, H.-L.~Yin,
X.~Xiao, and K.~Wei, ``Chip-Integrated Quantum Signature Network over 200 km,''
\emph{Light Sci. Appl.} \textbf{14}, 108 (2025).

\bibitem{MorimaeYamakawa2022}
T.~Morimae and T.~Yamakawa, ``Quantum Commitments and Signatures without
One-Way Functions,'' in \emph{Advances in Cryptology---CRYPTO 2022}
(Springer, 2022), pp.~269--295.

\bibitem{MorimaeYamakawa2024}
T.~Morimae and T.~Yamakawa, ``One-Wayness in Quantum Cryptography,'' in
\emph{19th Conference on the Theory of Quantum Computation, Communication and
Cryptography (TQC 2024)}, LIPIcs \textbf{310}, Article 4 (2024).

\bibitem{KhuranaTomer2024}
D.~Khurana and K.~Tomer, ``Commitments from Quantum One-Wayness,''
arXiv:2310.11526 (2024).

\bibitem{ColadangeloMutreja2024}
A.~Coladangelo and S.~Mutreja, ``On Black-Box Separations of Quantum Digital
Signatures from Pseudorandom States,'' arXiv:2402.08194 (2024).

\bibitem{MorimaePorembaYamakawa2024}
T.~Morimae, A.~Poremba, and T.~Yamakawa, ``Revocable Quantum Digital
Signatures,'' in \emph{19th Conference on the Theory of Quantum Computation,
Communication and Cryptography (TQC 2024)}, LIPIcs \textbf{310}, Article 5 (2024).

\bibitem{WangHayashi2026QDS}
W.~Wang and M.~Hayashi, ``Quantum Digital Signature Based on Single-Qubit
without a Trusted Third-Party,'' \emph{Adv. Quantum Technol.} \textbf{9},
e00828 (2026).

\bibitem{KawachiEtAl2012}
A.~Kawachi, T.~Koshiba, H.~Nishimura, and T.~Yamakami, ``Computational
Indistinguishability between Quantum States and Its Cryptographic
Application,'' \emph{J. Cryptol.} \textbf{25}, 528--555 (2012).

\bibitem{HayashiKawachiKobayashi2008}
M.~Hayashi, A.~Kawachi, and H.~Kobayashi, ``Quantum Measurements for Hidden
Subgroup Problems with Optimal Sample Complexity,'' \emph{Quantum Inf.
Comput.} \textbf{8}, 345--358 (2008).

\bibitem{ArapinisEtAl2021}
M.~Arapinis, M.~Delavar, M.~Doosti, and E.~Kashefi, ``Quantum Physical
Unclonable Functions: Possibilities and Impossibilities,'' \emph{Quantum}
\textbf{5}, 475 (2021).

\bibitem{GhoshEtAl2024}
S.~Ghosh, V.~Galetsky, P.~Juli\`a Farr\'e, C.~Deppe, R.~Ferrara, and H.~Boche,
``Existential Unforgeability in Quantum Authentication from Quantum Physical
Unclonable Functions Based on Random von Neumann Measurement,''
\emph{Phys. Rev. Research} \textbf{6}, 043306 (2024).

\bibitem{HayashiQuadraticStabilizerPrediction}
M.~Hayashi and Y. Lu, ``Partial Stabilizer Learning under Fixed Commuting Constraints and the Absence of a Copy-Rate Discount,''
arXiv:2609.13923 (2026).

\end{thebibliography}
\end{document}